\definecolor{darkpurple}{rgb}{0.7,0,0.5}
\definecolor{darkgreen}{rgb}{0,0.4,0}
\definecolor{darkblue}{rgb}{0,0,0.5}
\definecolor{navyblue}{rgb}{0,0.4,0.8}
\newcommand{\starline}{
  \bigskip \bigskip \bigskip
  \centerline{$\star$ \quad \quad \quad $\star$ \quad \quad \quad $\star$}
  \bigskip \bigskip \bigskip}
\definecolor{purple}{rgb}{0.7, 0, 1}
\definecolor{lightgray}{gray}{0.7}
\definecolor{Y}{RGB}{0,0,0}
\definecolor{B}{RGB}{128,128,128}
\definecolor{R}{RGB}{64,64,64}
\definecolor{P}{RGB}{192,192,192}
\definecolor{G}{RGB}{255,255,255}
\newcolumntype{G}{>{\columncolor{lightgray}}}
\newcommand{\qed}{\ensuremath{\blacksquare}}
\newcommand{\fst}[1]{ \ensuremath{#1} }
\newcommand{\snd}[1]{ \ensuremath{\overline{#1}} }
\newcommand{\msnd}[1]{ \ensuremath{{-\overline{#1}}} }
\newcommand{\mfst}[1]{ \ensuremath{{- #1}} }
\newcommand{\breakpoint}{ \textsubscript{$\textcolor{black}{\blacktriangle}$} }
\newcommand\aff[2]{\ensuremath{(\fst{#1}~\fst{#2})}}
\newcommand\asf[2]{\ensuremath{(\snd{#1}~\fst{#2})}}
\newcommand\afs[2]{\ensuremath{(\fst{#1}~\snd{#2})}}
\newcommand\ass[2]{\ensuremath{(\snd{#1}~\snd{#2})}}
\newcommand\mss[2]{\ensuremath{(\msnd{#1}~\msnd{#2})}}
\newcommand\oiff[2]{\ensuremath{]\fst{#1}~;~\fst{#2}[}}
\newcommand\oifs[2]{\ensuremath{]\fst{#1}~;~\snd{#2}[}}
\newcommand\oiss[2]{\ensuremath{]\snd{#1}~;~\snd{#2}[}}
\newcommand\cifs[2]{\ensuremath{[\fst{#1}~;~\snd{#2}]}}
\renewcommand{\NG}{\ensuremath{\mbox{\texttt{NG}}}}
\def\BI{\ensuremath{\mbox{BI}}}
\def\DCJ{\ensuremath{\mbox{\sl DCJ}}}
\newcommand{\DA}{\ensuremath{\mbox{\texttt{DA}}}}
\newtheorem{proposition}{Proposition}
\newtheorem{property}{Property}
\newtheorem{proof}{Proof}
\newtheorem{theorem}{Theorem}
\newtheorem{lemma}{Lemma}
\newtheorem{definition}{Definition}
\newtheorem{corollary}{Corollary}
\newtheorem{remark}{Remark}
\newtheorem{conjecture}{Conjecture}
\author{xxx}
\title{xxx}
\begin{document}
\ThesisTitle{\huge Problèmes de réarrangement avec marqueurs génomiques dupliqués
\\[3ex]
\huge \it
Rearrangement Problems with \\ duplicated genomic content}
\ThesisAuthor{\href{mailto:antoine.thomas@gmx.com}{Antoine \textsc{Thomas}}}
\ThesisDate{18 juillet 2014}
\NewJuryCategory{Supervisor}{\textit{Supervisor :}}{\textit{Supervisor :}}
\Rapporteurs = {
  \href{http://albuquerque.bioinformatics.uottawa.ca/}{David \textsc{Sankoff}}, Canada Research Chair & {\small University of Ottawa, ON, Canada}\\
    \href{http://www.sopheetsa.org}{Sophia \textsc{Yancopoulos}}, Science Writer, \it{co-rapporteur} & {\small Feinstein Inst. for Med. Research, USA}\\
  \href{http://lbbe.univ-lyon1.fr/-Tannier-Eric-.html}{Eric \textsc{Tannier}}, CR Inria
  & {\small INRIA Grenoble, LBBE, Univ. Lyon 1}\\  
}\Examinateurs = {
  \href{http://researchers.lille.inria.fr/niehren/}{Joachim \textsc{Niehren}}, DR Inria, {\it président du jury} & {\small INRIA Lille, LIFL, Univ. Lille 1}\\
  \href{http://www.lifl.fr/~boulier/}{François \textsc{Boulier}}, Professeur, {\it directeur de thèse}
  & {\small LIFL, Univ. Lille 1}\\
}\mainmatter

\MakeThesisTitlePage

~ 

\newpage

\pdfbookmark[0]{Table of contents}{Table of contents}

\dominitoc
\setcounter{tocdepth}{2}
\tableofcontents

\chapter*{Avant-propos et remerciements}
\label{chap:acknowledgment}

Cette thèse de doctorat présente le travail que j'ai fourni de 2010 à fin 2012 au sein de l'équipe bonsai du LIFL.

Les lecteurs les plus attentifs remarqueront que la soutenance n'a pourtant pas eu lieu avant juillet 2014.

J'ai en effet été contraint d'interrompre mes activités de recherche au profit d'une activité bien moins amusante, une lutte contre un harcèlement perpétré impunément par les membres les plus haut placés de cette équipe de recherche. Dès lors que j'ai eu le malheur de dire non à leur appropriation honteuse de mon travail et à leur manque d'éthique en général, j'ai subi de nombreuses attaques dénigrant non seulement mon travail mais aussi ma personne au sein de toute l'équipe.

Le moins que je puisse dire c'est qu'il est malheureusement très difficile, en tant que simple doctorant, de défendre la valeur de son travail et de s'opposer à ces abus, dans un système où les médiateurs et supérieurs hiérarchiques ne sont autre que les collègues voire amis de nos bourreaux, et où nos pairs sont bien trop lâches pour faire preuve de solidarité.

Il n'est donc pas étonnant dans ce contexte, le sentiment d'impunité aidant, de voir la banalisation des abus de pouvoir au détriment de toute déontologie. A croire que pour certains il est très difficile de comprendre que \textit{docteur} ne signifie pas \textit{médecin}, et qu'ils n'ont donc aucune légitimité quand il s'agit d'établir des diagnostics d'ordre psychiatrique à l'égard de qui que ce soit...

Par ailleurs, discuter des problèmes médicaux de mes collègues, d'anciens membres de l'équipe ou de leurs entourages ne fait pas non plus partie de leurs fonctions et ils n'ont donc pas le droit d'en faire étalage public sans le consentement des intéressés, mais j'imagine que le concept de secret professionnel, ce doit être encore plus difficile à comprendre...

\starline

Tout d'abord merci à Mr Olivier Colot, directeur de l'école doctorale ED SPI 072, d'avoir été l'acteur principal de la résolution du problème en me permettant d'entamer une démarche de changement de directeur de thèse et d'équipe de recherche.

Merci à Mr Michel Petitot pour son temps et ses discussions variées toujours intéressantes, qui prennent le plus souvent une dimension très philosophique. Merci à lui d'incarner ce que devrait être selon moi un chercheur, une personne en éternelle quête de connaissances.

Merci aux examinateurs de cette thèse, Eric Tannier et Sophia Yancopoulos pour leur présence lors de la soutenance et leurs commentaires pertinents concernant ce travail. Cela peut sembler être peu de choses, mais croyez-moi, quand on sort de plus d'un an et demi de critiques infondées et d'injures, de combat contre des gens qui dénigrent ce travail pour mieux se l'approprier ensuite (par exemple en prétendant qu'ils ont été là pour le corriger et que tout le mérite leur revient donc), et qu'on tombe ensuite sur de vrais chercheurs, le contraste est pour le moins saisissant.

En somme, merci à tous ceux qui ont contribué au bon déroulement de cette fin de thèse, portant ainsi sur leurs épaules, bien malgré eux, le poids de l'incompétence de mes anciens encadrants.

Quant à ces derniers, parce que j'ai l'honnêteté qu'ils n'ont pas, je les remercie tout de même pour leur pathétique contribution, qui se résume très précisément à trois points : me donner quelques problèmes sur lesquels travailler par moi-même pendant qu'ils demandent cupidement de nouveaux résultats, tracer quelques figures pour mes articles, et bien-sûr y reformuler des paragraphes (il semblerait que cela soit une de leurs spécialités).

Mes remerciements les plus importants vont naturellement à ma famille, pour m'avoir inculqué des valeurs apparemment rares telles que l'intégrité, le courage et la persévérance, ainsi qu'un certain sens de la justice.

Merci également à mes amis d'avoir été là, c'est important de pouvoir rire en toutes circonstances, et aussi d'avoir du soutien.

~~

Un non-merci à tous ceux qui m'ont proposé, avec le sourire, d'accepter ces injustices, et de cautionner ma propre exploitation, en prétendant que c'était pour mon bien. Ils se reconnaîtront.

Un non-merci à tous ceux qui contribuent au vol de propriété intellectuelle organisé, à cette imposture qu'est la recherche dans certaines équipes où mélange des genres, harcèlement et impunité font un ménage à trois des plus malsains.

\chapter*{Introduction}

It is commonly known that life on earth is made of DNA, which carries genetic information, and that it evolves time after time as generations succeed each other.

DNA is a nucleic acid constituted by 4 nucleobases (its building blocks), which are guanine, adenine, thymine, and cytosine, respectively written G, A, T and C.

These nucleotides are folded in a double-helix structure where each base type is combined with its complementary base. G goes with C, A goes with T.

\begin{figure}[h!]
\centering
\includegraphics[scale=0.9]{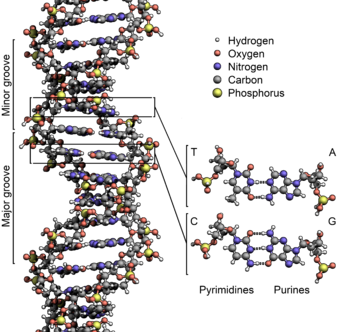}
\caption{The DNA molecule structure. Image courtesy of \url{http://en.wikipedia.org/wiki/DNA}}
\end{figure}

Biologists observed that evolution also occurred at large scale with genes or group of genes recombinations such as reversal of segments \cite{S21} \cite{PH88}. 
 Understanding rearrangement dynamics is a major issue in \emph{phylogenetics}, the study of evolutionary relationships between species or populations.

One aspect of phylogenetics is trying to reconstruct evolutionary trees.
In order to reach this goal, naturally it would be very helpful to be able to determine a relative \emph{evolutionary distance} between species (ie. knowing which species are closer with respect to which others), or to be able, given a group of genomes, to reconstruct the genome of a closest common ancestor to the group.

This is where genome rearrangements come into play. By defining a genome representation and a set of allowed operations (manipulations), one should be able to define a \emph{distance} (in the mathematical sense, see the following box if needed) between genomes.

\begin{framed}

A distance function is a function defined on a set X, which for any couple of elements from the set associates a real number, and satisfies 4 simple properties.

The distance d is a function defined as:

$d : X \times X \rightarrow R$

and $\forall x, y, z \in X$, we have:
\begin{enumerate}
\item $d(x,y) \geq 0$ (a distance cannot be negative)

\item $d(x,y) = 0   \iff  x = y$ (distance 0 means the compared elements are equal)

\item $d(x,y) = d(y,x)$     (symmetry)

\item $d(x,z) \leq d(x,y) + d(y,z)$  (triangle inequality).
\end{enumerate}

\end{framed}

~~

A rearrangement problem is always defined by a \emph{starting genome}, a \emph{goal genome}, and a set of allowed operations. Of course, since DNA sequences are a huge amount of data, algorithmic complexity of rearrangement problems is a major point of interest.

While DNA can ultimately be represented by the sequence of its nucleotides, comparative genomics usually deals with genomes at the scale of \emph{genomic markers} (ie. genes or group of genes). Each marker is labeled by an integer, with the sign representing its orientation on the genome.

Here is an example of rearrangement scenario between genomes $A = (\circ~~1~~2~-5~-7~-6~~3~~4~~8~~9~~10~~\circ)$ and $B = (\circ~~1~~2~~3~~4~~5~~6~~7~~8~~9~~10~~\circ)$, with \emph{reversals} as the only allowed operations.

\begin{center}
$(\circ~1~2~-5~\underline{-7~-6~3~4}~8~9~10~\circ)$

$\Downarrow$

$(\circ~1~2~\underline{\mfst{5}~\mfst{4}~\mfst{3}}~6~7~8~9~10~\circ)$

$\Downarrow$

$(\circ~1~2~3~4~5~6~7~8~9~10~\circ)$

\end{center}

In this example, two reversals suffice to go from the starting genome to the goal, and it cannot be done in one operation. Thus we say that the \emph{reversal distance} between genomes $A$ and $B$ is equal to 2, while the example itself is an \emph{optimal scenario}.

Computing the distance and computing an optimal scenario are two linked but different problems, with varying complexity.

\clearpage

Rearrangement problems were first introduced, for the reversal model, in \cite{S21} \cite{S41} and later rediscovered in \cite{PH88}. For more explanation, István Miklós wrote a detailed history of genome rearrangements, whose reading is much recommended\footnote{\url{http://www.renyi.hu/~miklosi/AlgorithmsOfBioinformatics.pdf}}. Another much recommended reading would be the book ``\textit{Combinatorics of Genome Rearrangement}" published by MIT Press, as it presents a mathematically oriented review of the field.

Rearrangement problems were first defined on \emph{non-duplicated} genomes, meaning each gene appears only once in both genomes, which made rearrangement problems in fact permutation sorting problems.

Among the pioneer results of the field there is a polynomial solution for sorting permutations by reversals \cite{HP95}, then generalized into the \emph{genomic distance}, a mix of reversals and translocations \cite{HP95bis}. The DCJ (double-cut and join), a further generalization of operation models was introduced in \cite{Yancopoulos05} which also allowed a better framework to study previous operation models.

Generally, it appears that rearrangement problems on non-duplicated genomes usually have polynomial complexity while duplicated markers induce NP-hardness. However, there are exceptions as the definition of a particular operation model or that of a goal genome might imply additional restrictions or release constraints, which can alter complexity.

For example, the \emph{genome halving} problem \cite{Mabrouk03}\cite{Mixtacki08} is a polynomial problem on duplicated genomes, which released the constraint of a particular goal genome, asking for a more general kind of configuration instead.

It is to note that while the distance and scenario are the usual questions when it comes to rearrangement problems, a substantious amount of work has been made to solve the question of solution spaces (the set of all optimal solutions) of rearrangement problems \cite{BS07} \cite{B09} \cite{BS10}.

~~

During the preparation of this Ph.D. thesis, I have been working on genome rearrangement problems with duplicated markers. Following the genome halving example, I studied \emph{other} hypotheses that could account for the presence of duplicated markers in genomes. I designed several rearrangement problems to account for these hypotheses and settle their algorithmic complexity.

I proved that under the hypothesis a single tandem duplication event is responsible for all observed duplicated markers, a parcimonious non-duplicated ancestor genome could be inferred in polynomial time. I provided $O(n^2)$ algorithms for the scenario, and a $O(n)$ computation for the distance, for two operation models, namely DCJ and Block Interchange.

I developed several problems based on multiple tandem reconstruction, and proved their NP-hardness.

I also studied breakpoint duplication, an intermediate model where any operation could lead to the duplication of markers at its endpoints. I proved NP-hardness of this problem and designed a fixed-parameter tractable (FPT) algorithm in the number of cycles present in the graph.

This work led to several publications \cite{Thomas11} \cite{Thomas13pre}

\noindent \cite{Thomas12} \cite{Thomas13}.

\clearpage

\starline

In the first chapter I will informally introduce what a rearrangement problem is, by revisiting a simple problem, \emph{Sorting by Block Interchange}, first solved in \cite{Christie96}.

I will then present a literature review, to provide a backdrop for my work, but also to give the reader a first intuition on what is hard and what is not when it comes to rearrangement problems.

A presentation of my own work will follow, sorted by duplication model and operation model, then a general conclusion will close the document.

\chapter[Preliminary game: sorting by block interchange]{Preliminary game: sorting by block interchange}
\label{chap:rearrangement}

Loosely based on my experience explaining what is my research to my laymen friends and family members, this rather informal chapter is meant as a playful introduction, whose purpose is to give the reader a rough understanding of the basic workings of rearrangement problems and my way of tackling them, even if they are not familiar at all with the field.

I will revisit ``Sorting permutations by Block Interchange" \cite{Christie96}, then quickly explain how it relates to genome rearrangement and my own work.

\section{Rules}

Let's assume we have a sequence of numbers we want to sort into numerical order.
\begin{center}
$(\circ~4~5~3~2~10~7~8~1~9~6~\circ)$
\end{center}

The first thing we have to do is define how we are allowed to alter the sequence in order to sort it.

For example, selecting two segments and swapping them is one way to alter a sequence. This operation is self-explanatorily called a \emph{block interchange} (BI). Is it possible to sort the sequence using only this operation?

\begin{center}
$(\circ~\fbox{4~5~3~2~10}~7~8~\fbox{1~9~6}~\circ)$

$\Downarrow$

$(\circ~1~\fbox{9}~6~7~8~4~5~3~\fbox{2}~10~\circ)$

$\Downarrow$

$(\circ~1~2~\fbox{6~7~8}~4~5~\fbox{3}~9~10~\circ)$

$\Downarrow$

$(\circ~1~2~3~4~5~6~7~8~9~10~\circ)$
\end{center}

Yes, it is possible to sort the sequence that way. Was it optimal? Yes, it took only three steps and it's not possible to sort it in two or less.

How do we know that? This is where it gets interesting.

\section{Breakpoints}

For better understanding how sorting works, let's review the very last BI operation, the one that restored the complete numerical order.

\begin{center}
$(\circ~1~2~\fbox{6~7~8}~4~5~\fbox{3}~9~10~\circ)$

$\Downarrow$

$(\circ~1~2~3~4~5~6~7~8~9~10~\circ)$
\end{center}

If given the task to sort such sequence using only one BI, anybody would correctly find the solution.

Anybody will instinctively notice that when selecting blocks, all starting/ending points are not equal (for example it would not make sense to cut between 1 and 2 given they are already in their correct relative order).

In fact it only makes sense to cut between numbers that \emph{break} the desired order. Those positions will be called \emph{breakpoints}. Note that when the sequence is fully sorted, no breakpoint remains.

Here is the sequence again, with breakpoints indicated by black triangles.

\begin{center}
$(\circ~1~2~\breakpoint~6~7~8~\breakpoint~4~5~\breakpoint~3~\breakpoint~9~10~\circ)$

$\Downarrow$

$(\circ~1~2~3~4~5~6~7~8~9~10~\circ)$
\end{center}

And of course those black triangles coincide with the extremities of the blocks for the sorting operation.

~

A breakpoint will never vanish by itself, so if a BI operation can act on at most 4 breakpoints at once, then in the best case, it will manage to solve the four of them and leave the rest of the sequence untouched.

\section{Example review}

With that new information in mind, if we have another look at the starting sequence, we can see that it contains 9 breakpoints.

\begin{center}
$(\circ~\breakpoint~4~5~\breakpoint~3~\breakpoint~2~\breakpoint~10~\breakpoint~7~8~\breakpoint~1~\breakpoint~9~\breakpoint~6~\breakpoint~\circ)$
\end{center}

Since a BI can only solve at most 4 of them at once, it is now obvious that it is not possible to sort the sequence in less than 3 steps.

We just solved the problem for \emph{this particular sequence}, but we want to be able to give an optimal solution for any sequence, therefore we cannot consider the problem is solved yet.

\section{General case}

When reviewing the full transformation sequence, you might notice that the first BI solved 3 breakpoints ($\breakpoint1$,  $6\breakpoint$ and $10\breakpoint$), the second solved 2 ($\breakpoint2$ and $9\breakpoint$), and only the last one solved the remaining 4 ($\breakpoint6$, $8\breakpoint$, $\breakpoint3$ and $3\breakpoint$).

It is easy to solve any 2 given breakpoints with a BI, but solving a 3rd and a 4th will happen only under certain conditions, and by a quick glance you can already tell that the special conditions are not obvious...

The real difficulty we are faced is to design a way to be able to understand those conditions and predict the best course of action, ie. being able of computing the minimum number of operations for \textbf{any} given sequence.

While the sequence embeds information about the number of operations needed to sort it, it embeds it in an implicit way. By changing the way we represent it, we might make the operations more explicit.

\subsection{Drawing a graph I}

We established earlier that breakpoints are crucial elements when it comes to sorting, and noted that what characterizes a sorted sequence is \emph{the absence of breakpoints}.

Our goal is to make things more explicit, therefore we will design a data structure around the sequence (which really is another way of representing the same information contained in a sequence), focused on that absence of breakpoints.

To achieve this I will draw a dot between numbers that are in consecutive order (see figure \ref{fig:sortedgraph}.

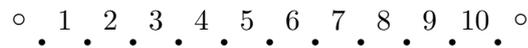
\begin{figure}[htbp]
    \centering
    
\begin{tikzpicture}[xscale=0.3,yscale=0.3]
    %\draw[style=help lines] (0,0) grid (38,-10);
    
    \foreach \i/\v in {0/\circ,2/1,4/2,6/3,8/4,10/5,12/6,14/7,16/8,18/9,20/10,22/\circ} {
      \draw (\i,0) node {$\v$};
    };

    \fontsize{4pt}{4pt}\selectfont
    %\tikzstyle{every node}=[circle,draw,fill]

    \draw (3,-1) node {$\bullet$};
        \draw (5,-1) node {$\bullet$};
    \draw (7,-1) node {$\bullet$};
        \draw (9,-1) node {$\bullet$};
    \draw (11,-1) node {$\bullet$};
        \draw (13,-1) node {$\bullet$};
    \draw (15,-1) node {$\bullet$};
        \draw (17,-1) node {$\bullet$};
    \draw (19,-1) node {$\bullet$};
        
    \draw 
    (1,-1) node {$\bullet$};
    \draw 
    (21,-1) node {$\bullet$};  
\end{tikzpicture}

\caption{Graph for a sorted sequence}
\label{fig:sortedgraph}
\end{figure}

Note that I also put dots before the number 1, and after the number 10. Since they are the ends of our sequence, they must be consecutive to the corresponding $\circ$ symbols.

\subsection{Drawing a graph II}

We designed a data structure around a sorted sequence, and will now generalize it to our shuffled sequence.

First the good cases:
``after 4 there is 5"
``after 7 there is 8"

Therefore we can add a dot between 4 and 5, as well as another one between 7 and 8 (see figure \ref{fig:3cycle}). These dots show the parts of the sequence that are in correct order.

So far it is not conceptually different from counting the breakpoints (we are counting \emph{adjacencies} instead), but in order to get more information we need something to do with the rest of the slots, where it is not possible to put dots.

Since it's about understanding the degree of shuffling, we will think in terms of what we \emph{should} have instead of what we actually have in the sequence.

~

\emph{``After 5 there should be 6"}.

So let's put a dot after 5 and join it with another dot before 6, to indicate they should be consecutive.

That reasoning alone would allow us to fill in the blanks (follow along on figure \ref{fig:3cycle}, as horizontal segments A, B and C are drawn from top to bottom):

\begin{figure}[h!]
    \centering
    
\begin{tikzpicture}[xscale=0.3,yscale=0.3]
    %\draw[style=help lines] (0,0) grid (38,-10);
    
    \foreach \i/\v in {0/\circ,2/4,4/5,6/3,8/2,10/10,12/7,14/8,16/1,18/9,20/6,22/\circ} {
      \draw (\i,0) node {$\v$};
    };

    \fontsize{4pt}{4pt}\selectfont
    %\tikzstyle{every node}=[circle,draw,fill]

    \draw (3,-1) node {$\bullet$};
    \draw (13,-1) node {$\bullet$};
\draw (11, -1.7) node {A};
    \draw 
    (5,-2) node {$\bullet$} --
    (19,-2) node {$\bullet$};

    \draw[densely dotted]
    (19,-2) node {$\bullet$} --
    (19,-3) node {$\bullet$};

\draw (14, -2.7) node {B};
    \draw
    (19,-3) node {$\bullet$} --
    (9,-3) node {$\bullet$};
    \draw[densely dotted] 
    (9,-3) node {$\bullet$} -- 
    (9,-4) node {$\bullet$};
  \draw (7, -3.7) node {C};
    \draw
    (9,-4) node {$\bullet$} --
    (5,-4) node {$\bullet$};
    \draw[densely dotted] 
    (5,-2) node {$\bullet$} -- 
    (5,-4) node {$\bullet$};
  
\end{tikzpicture}

\caption{Two 1-cycles and one 3-cycle so far.}
\label{fig:3cycle}
\end{figure}

After 5 $\rightarrow$ Before 6

We draw a segment joining a dot after 5 with another one before 6 (segment A).

In the sequence, we have $9 ~ 6$ so by arriving before 6 we are also in the position after 9. We have to put another dot there, accounting for the number 9.

After 9 $\rightarrow$ Before 10

The dot after 9 is joined with another one put before 10 (segment B), and thus we end up after 2.

After 2 $\rightarrow$ Before 3

We draw segment C and we are back to our starting position in 3 steps. We completed a \emph{cycle} of length 3 as seen in figure \ref{fig:3cycle}. Although unnecessary, in order to make the cycle stand out, dotted vertical lines are drawn between dots sharing a breakpoint in the sequence.
(Note: from now on a \emph{cycle of length n} will be called a \emph{n-cycle}).

In this context we might realize that single dots can be seen as cycles of length 1 (after 4 $\rightarrow$ before 5, we just join a dot with itself).

This is very good news, we just made the absence of breakpoint the elementary variant of a bigger concept, it means we extracted information of the same kind that is also more subtle.

The completed graph for the sequence is shown in figure \ref{fig:compgraph} (I recall that the dot before 1 is joined with one after the first $\circ$ symbol, and the dot after 10 is joined with one before the second $\circ$ symbol).

\begin{figure}[h!]
    \centering
    
\begin{tikzpicture}[xscale=0.3,yscale=0.3]
    %\draw[style=help lines] (0,0) grid (38,-10);
    
    \foreach \i/\v in {0/\circ,2/4,4/5,6/3,8/2,10/10,12/7,14/8,16/1,18/9,20/6,22/\circ} {
      \draw (\i,0) node {$\v$};
    };

    \fontsize{4pt}{4pt}\selectfont
    %\tikzstyle{every node}=[circle,draw,fill]

    \draw (3,-1) node {$\bullet$};
    \draw (13,-1) node {$\bullet$};

    \draw 
    (5,-2) node {$\bullet$} --
    (19,-2) node {$\bullet$};

    \draw[densely dotted]
    (19,-2) node {$\bullet$} --
    (19,-3) node {$\bullet$};
    \draw
    (19,-3) node {$\bullet$} --
    (9,-3) node {$\bullet$};
    \draw[densely dotted] 
    (9,-3) node {$\bullet$} -- 
    (9,-4) node {$\bullet$};
    \draw
    (9,-4) node {$\bullet$} --
    (5,-4) node {$\bullet$};
    \draw[densely dotted] 
    (5,-2) node {$\bullet$} -- 
    (5,-4) node {$\bullet$};
  
  \draw
    (1,-5) node {$\bullet$} --
        (7,-5) node {$\bullet$};
        \draw[densely dotted] 
        (7,-5) node {$\bullet$} -- 
        (7,-6) node {$\bullet$};
    \draw
      (7,-6) node {$\bullet$} --
          (17,-6) node {$\bullet$};
          \draw[densely dotted] 
          (17,-6) node {$\bullet$} -- 
          (17,-7) node {$\bullet$};
      \draw
        (17,-7) node {$\bullet$} --
            (15,-7) node {$\bullet$};
            \draw[densely dotted] 
            (15,-7) node {$\bullet$} -- 
            (15,-8) node {$\bullet$};
        \draw
          (15,-8) node {$\bullet$} --
              (1,-8) node {$\bullet$};
              \draw[densely dotted] 
              (1,-8) node {$\bullet$} -- 
              (1,-5) node {$\bullet$};
  
          \draw
            (11,-9) node {$\bullet$} --
                (21,-9) node {$\bullet$};
              \draw[densely dotted] 
                            (21,-9) node {$\bullet$} -- 
                            (21,-10) node {$\bullet$};
          \draw
            (11,-10) node {$\bullet$} --
                (21,-10) node {$\bullet$};
              \draw[densely dotted] 
                            (11,-9) node {$\bullet$} -- 
                            (11,-10) node {$\bullet$};
                   
\end{tikzpicture}

\caption{The completed graph contains two 1-cycles, one 3-cycle, one 4-cycle and one 2-cycle}
\label{fig:compgraph}
\end{figure}

\subsection{Using the graph}

Let's summarize once again:

\begin{itemize}
\item In the sorted sequence we should have eleven 1-cycles ($11 \times 1 = 11$).

\item In the shuffled sequence we have two 1-cycles, one 3-cycle, one 5-cycle and one 2-cycle ($2 \times 1 + 3 + 4 + 2 = 11$).
\end{itemize}

The total amount of edges seems invariant and equal to the number of elements in the sequence + 1.

We know the starting graph, we know the goal graph. The problem just shifted from \emph{sorting a sequence} to \emph{transforming a graph}. In order to solve this, naturally we need to study how a BI will modify the graph at each step.

It suffices to draw the graphs for each successive state of the sequence. By doing so, one will notice that each BI extracted two 1-cycles from other cycles:

\begin{itemize}

\item the first BI extracted two 1-cycles, one from the 4-cycle and another one from the 2-cycle (thus leaving as remainder a 3-cycle and \textbf{a 1-cycle}).

\item the second BI extracted two 1-cycles from the two 3-cycles (leaving two 2-cycles as remainder).

\item the last BI extracted two 1-cycles from the two 2-cycles (\textbf{leaving two 1-cycles as remainder})

\end{itemize}

In conclusion, sorting the sequence is the act of breaking down all cycles, extracting two 1-cycles with each BI.

Each bigger cycle will eventually give us an extra 1-cycle as remainder, which is like a half-bonus (a BI extracts two 1-cycles, so a 1-cycle remainder is like half a BI for free).

1-cycles already present in the graph also count as half-bonuses since they are operations we don't need to perform.

Therefore, there are as many half-bonuses as there are cycles, and we can derive an exact formula for the minimum number of BI required to sort any sequence $G$:

\begin{center}
$d_{BI}(G) = \frac{n + 1 - C}{2}$
\end{center}

n is the number of elements in the sequence $G$ (I recall we are working with $n+1$ edges in total), $C$ is the number of cycles in the graph (and the formula is divided by 2 because a BI extracts two 1-cycles).

\section{Genome rearrangements}

If we label each gene or group of genes (we will use the term \emph{marker}) with integers, then genomes can be seen as sequences of integers, and in this context, changes they undergo during evolution can be seen as operations on integer sequences.

Sorting a sequence into numerical order also allows us to study the distance between any two given genomes, as it's a matter of relabeling the elements in both genomes so that the second one is in numerical order, as illustrated in figure \ref{fig:rewriting}.

\begin{figure}[h!]
\begin{center}
\begin{tabular}{c | c}
$(\circ~2~8~5~3~9~7~6~4~10~1~\circ)$ & $(\circ~\textcolor{blue}{4~5~3~2~10~7~8~1~9~6}~\circ)$ \\

$\Downarrow$ ? & $\Downarrow$ ? \\

$(\circ~4~3~5~2~8~1~7~6~10~9~\circ)$ & $(\circ~\textcolor{blue}{1~2~3~4~5~6~7~8~9~10}~\circ)$
\end{tabular}

~

~

4$\rightarrow$\textcolor{blue}{1};
3$\rightarrow$\textcolor{blue}{2};
5$\rightarrow$\textcolor{blue}{3};
2$\rightarrow$\textcolor{blue}{4};
8$\rightarrow$\textcolor{blue}{5};
1$\rightarrow$\textcolor{blue}{6};
7$\rightarrow$\textcolor{blue}{7};
6$\rightarrow$\textcolor{blue}{8};
10$\rightarrow$\textcolor{blue}{9};
9$\rightarrow$\textcolor{blue}{10}
\end{center}

\caption{Rewriting the labels}
\label{fig:rewriting}
\end{figure}

This is, of course, considering that each marker appears \textbf{exactly once} in both genomes. When markers appear multiple times (when there are \emph{replicated markers}), rearrangement problems usually become harder. My work was to specifically design and study such harder rearrangement problems.

\section{Closing words and a bit of philosophy}

This intro is a good example of the general line of thought I used in my contributions.

Theoretical computer science (and more generally math) is an art of shapeshifting: it teaches us that everything is a model, nothing but representation of information, that could be rewritten in countless other forms. Solving an equation, or proving a theorem, can be seen as a matter of rewriting information into another form that will make the answer more explicit.

Because optimally sorting using a set of operations is always a process full of subtleties that need to be explicited, in genome rearrangement, we usually use data structures as a mean of rewriting information, as a tool of elegancy.

While it would have been possible to find a formula and proving it without the need for a graph structure, it wouldn't have been so simple, it wouldn't have been as useful in terms of comprehension (I think \emph{sorting is the act of breaking down cycles} is a result that suddenly allows a complete understanding, getting rid of any complex implicit behavior), and of course the formula wouldn't have been so simple either.

The ``hard" part I did not include here in order to let this section remain a \emph{playful} introduction is mathematically proving that there always exist a BI that will successfully extract two 1-cycles from the graph (it could be proved with another data structure meant to simplify the proof, in similar fashion to the proof I give for ``single tandem-halving by BI" later in the present document).

I'd like to give credit to Anne Bergeron as I am using her signature style of graph, drawing dots under the sequence rather than the traditional vertices and edges (I've always found it was the best way to keep the cycles apparent enough while displaying the sequence at the same time, so thanks for her idea).

\chapter[State of the art]{State of the art}

The goal isn't to be exhaustive, but rather to provide a little bit of context to better understand my contribution, what was already done, what was being done and what wasn't done yet when I worked on genome rearrangements.

\section{Notations (I) - Genome, markers, adjacencies, extremities, breakpoints}
\label{sec:notations}

Rearrangement problems deal with successive transformation of genomes, at the scale of \emph{genomic markers} (ie. genes or groups of genes).

It means in our models a genome consists of linear or circular chromosomes that are composed of
genomic markers. Markers are represented by signed integers such that the sign
indicates the orientations of markers in chromosomes. As there are only 2 possible orientations, naturally we have ${-{-x}}=
x$. A linear chromosome is represented by an ordered sequence of signed integers
surrounded by the unsigned marker $\circ$  at each end indicating the telomeres (chromosome extremities).
A circular chromosome is represented by a circularly ordered sequence of signed
integers. 
For example, $(1~~2~~{-3}) ~ (\circ~~4~~{-5}~~\circ)$ is a genome composed of
one circular and one linear chromosome.

An \emph{adjacency} in a genome is a pair of consecutive markers. Since a genome
can be read in two directions, the adjacencies $(x~~y)$ and $({-y}~~{-x})$ are
equivalent.

For example, the genome $(1~~2~-5) ~
(\circ~~{-3}~~4~~6~~\circ)$ has seven adjacencies, $(1~~2)$,
$(2~-5)$, $(-5~~1)$, $(\circ~~{-3})$, $({-3}~~4)$,
$(4~~6)$, and $(6~~\circ)$.

When an adjacency contains a $\circ$ marker, \textit{i.e.} a telomere, it is
called a \emph{telomeric adjacency}.

When needed, we will refer to marker extremities directly, indicating them using a dot. Thus, adjacency $(x~~y)$ concerns extremities $x\cdot$ and $\cdot y$.

\section{Distance and scenario}

Usually there are two problems to solve in rearrangements. Finding the \emph{minimal distance}, and finding a \emph{minimal scenario}.

The minimal distance is the minimum number of operations required to go from the input genome to a desired solution. In all models studied in this thesis this is a distance in the mathematical sense.

A minimal scenario is a sequence of operations transforming the input genome into a solution, whose length (number of operations) is the minimal distance.

While computing a scenario generally takes more time than computing the distance, in most operation models both problems belong to the same complexity class\footnote{see section \ref{sec:distscenar} for a quick proof}.

\section{Simple markers}

Even though I worked exclusively on genomes with duplicated content, I will start describing rearrangement problems where genomes have only one copy for each marker, as not only it provides context, but also helps understanding duplicated genomes problems better.

The problem is usually to find a way to transform a genome into the identity permutation (this allows to transform any genome into any other, as explained in chapter \ref{chap:rearrangement})

\subsection{Breakpoint distance}

\subsubsection*{Intro}

The breakpoint distance is a measure based on the number of breakpoints between two genomes, or between a genome and the identity permutation, as it was done in the beginning of chapter \ref{chap:rearrangement}. There are no operations associated with it and thus no scenario.

\subsubsection*{Example}

\begin{center}

\begin{tabular}{c @{\hskip 1cm}|@{\hskip 0.5cm} c}
\textbf{Breakpoint distance} & \textbf{Generalized breakpoint distance} \\
& \\
$(\circ~1~~2~\breakpoint-4~\breakpoint-5~\breakpoint~3~\breakpoint~\circ)$ & $(\circ~1~~2~\breakpoint-4~\breakpoint-5~\breakpoint-7~\textcolor{blue}{{}_{\triangle}}~\circ)$ \\
 & $(~\breakpoint6~\breakpoint~3~\breakpoint-8~\breakpoint~9~\breakpoint-10~)$\\
 & \\
$d = 4$ & $d = 8 + \textcolor{blue}{0.5} = 8.5$
\end{tabular}
\end{center}

Note: In generalized breakpoint distance, a telomeric breakpoint (indicated in blue) is worth 0.5. Also, in our example the second chromosome is circular (there are no $\circ$ markers), therefore the breakpoint before 6 is not telomeric, it is a breakpoint between markers -10 and 6.

\subsubsection*{History and references}

The breakpoint distance was first introduced as a lowerbound, a 2-approximation for the reversal distance \cite{KS93} (meaning the breakpoint distance is never more than twice the reversal distance). It was computed in $O(n)$, leading to a 2-approximate reversal scenario in $O(n^2)$.

\subsubsection*{Generalized breakpoint distance}

In its first form, the breakpoint distance was defined on permutations, ie. on unilinear genomes.

Much later, multichromosomal variants were proposed, notably one in \cite{TZS09} for which a lot of previously NP-hard problems became polynomial just by allowing circular chromosomes in genomes (and considering telomeric adjacencies/breakpoints are worth \textbf{half} their regular counterpart).

While several open questions remained in this paper, another researcher did an impressive extensive work answering them in \cite{K11}.

\subsubsection*{Outro}

While not attached to a particular operation, the breakpoint distance is useful as a lowerbound for more complex models.

The generalized variant of the distance shows some interesting results: while some usually NP-hard problems becoming polynomial is not a surprise as the model is much less accurate, a well-known polynomial problem became NP-hard in some cases\footnote{see genome halving in \cite{K11}}.

The breakpoint distance was also used in the context of comparing genomes with differing content albeit proven NP-hard \cite{BFC04}.

Along with the generalized breakpoint model, another breakpoint distance variant, the Single-cut-or-join (SCJ) was introduced  \cite{FM09} \cite{FM11} \cite{BFM13}, and further studied in other sources \cite{SCJ1} \cite{SCJ2}. This variant is similar to the other generalization in that it simplifies some NP-hard problems, but also has the benefit of having an associated operation model, and thus rearrangement scenarios.

\subsection{Sorting by reversals}

\subsubsection*{Intro}

The reversal (or \emph{inversion}) mechanism has been observed by biologists studying drosophila genomes \cite{SD36} \cite{SD38}.

A reversal acts on a segment of the genome and inverts both the order and the sign of markers within the segment.

\subsubsection*{Example}

\begin{center}
$(\circ~1~2~-5~\underline{-7~-6~3~4}~8~9~10~\circ)$

$\Downarrow$

$(\circ~1~2~\underline{\mfst{5}~\mfst{4}~\mfst{3}}~6~7~8~9~10~\circ)$

$\Downarrow$

$(\circ~1~2~3~4~5~6~7~8~9~10~\circ)$

\end{center}

\subsubsection*{History and references}

As aforementioned, study of the reversal mechanism predates bioinformatics, and it took half a century before it were reformulated in terms of computer science \cite{W82} \cite{S89}.

At first though, only approximation algorithms were given and it wasn't even known whether the problem was polynomial.
The answer came later with a $O(n^4)$ algorithm presented in \cite{HP95}, revisited and simplified \cite{KST97} \cite{B01} \cite{TS04} down to a subquadratic $O(n^{\frac{3}{2}}\sqrt{\log n})$ algorithm \cite{TBS07} making good use of an earlier data structure first introduced in \cite{KV05}.

About the distance itself, without the need for a rearrangement scenario, an efficient linear-time algorithm was published \cite{BMY01} as well as another very elegant one, still in $O(n)$ \cite{BMS-04}.

\subsubsection*{The unsigned case}

It was previously said that a reversal alters the order as well as the sign of markers. In the early days of sequencing, though, the sign couldn't be determined and thus the first algorithms considered unsigned permutations, meaning the sign was disregarded (another way to see it is that everything has a positive sign, and a reversal changes the order of markers but leaves the sign unchanged).

The problem of sorting unsigned permutations by reversals is a classical NP-hardness result in bioinformatics \cite{C97}.

It might sound like a counter-intuitive result if we consider the sign is an added constraint on the final configuration, but it is in fact more accurate to see it that way: it is a strong enough constraint to leave us no choice.

The sign gives us a valuable piece of information about the final position of a marker. Since our problems are about finding a \emph{minimal} distance, disregarding the sign is equivalent to having to find the sign affectation that will minimize the distance.

\subsubsection*{Outro}

Several generalizations and contraints have been proposed on this model such as perfect scenarios in polynomial time \cite{ST05} \cite{BBCP07} \cite{BBCP08} (scenarios that respect conservation criteria making them more likely from a biological point of view), or other studies dedicated to the solution space \cite{BS07} \cite{B09} just to name a few.

Another interesting extension was the \emph{HP distance}, using reversals to simulate different operations on a multichromosomal genome, namely reversals and translocations. It was introduced in \cite{HP-95} with a polynomial algorithm and a rather complex distance formula, further simplified \cite{OS03} \cite{JN07}, and finally elegantly tackled in \cite{BMS08} by making good use of a bigger abstraction framework, the DCJ operation, introduced in  \cite{Yancopoulos05}.

\subsection{Other operation models}

\subsubsection*{Reciprocal translocation}

We've briefly talked about the HP distance, using reversals on multichromosomal genomes to simulate other biological operations \cite{HP-95}. One of the HP distance operations is the \emph{reciprocal translocation} and it was also later studied as a standalone operation.

It consists in swapping telomeric extremities between chromosomes while reversing them.

\begin{center}
$(\circ~1~~2~~3~\underline{\textcolor{blue}{-8-7-6~\circ})~~(\textcolor{purple}{\circ~-5-4}}~~9~~10~\circ)$

$\Downarrow$

$(\circ~1~~2~~3~~\textcolor{purple}{4~~5~\circ})~(\textcolor{blue}{\circ~6~~7~~8}~~9~~10~\circ)$
\end{center}

The model was introduced in \cite{KR95} and a polynomial-time algorithm followed \cite{H95}.
An error in that algorithm was corrected and a $O(n^3)$ algorithm was given \cite{BMS05}.

Later, in \cite{OS06}, the efficient data structures developed for sorting by reversals were reused, leading to the same complexity for sorting by reciprocal translocations. The authors raised a mathematically interesting question: is it possible to linearly reduce one problem into the other?

To my knowledge this question remains open.

\subsubsection*{Block interchange}

The block interchange operation has been used as an illustration in chapter \ref{chap:rearrangement}. It is a swap of two blocks in the genome.

\begin{center}
$(\circ~\fbox{\textcolor{blue}{6~7~8~9~10}}~4~5~\fbox{\textcolor{purple}{1~2~3}}~\circ)$

$\Downarrow$

$(\circ~\textcolor{purple}{1~2~3}~4~5~\textcolor{blue}{6~7~8~9~10}~\circ)$
\end{center}

It was introduced and first solved in \cite{Christie96}. Enhancements were later made in \cite{Lin05} and \cite{FZ07} bringing the original $O(n^2)$ complexity down to a $O(n \log n)$ algorithm by using a permutation tree. The distance is linear.

\subsubsection*{Transposition}

A transposition is an operation where a single block from the genome is moved somewhere else.

\begin{center}
$(\circ~1~2~\breakpoint~5~6~7~8~9~\fbox{\textcolor{blue}{3~4}}~10~\circ)$

$\Downarrow$

$(\circ~1~2~\textcolor{blue}{3~4}~5~6~7~8~9~10~\circ)$
\end{center}

It can also be seen as a restriction on block interchange: the two blocks have to be adjacent.

\begin{center}
$(\circ~1~2~\fbox{5~6~7~8~9}~\fbox{\textcolor{blue}{3~4}}~10~\circ)$

$\Downarrow$

$(\circ~1~2~\textcolor{blue}{3~4}~5~6~7~8~9~10~\circ)$
\end{center}

First introduced in \cite{BP95}, sorting by transpositions remained an open problem for 15 years, during which publications emerged to give us approximation algorithms
\cite{BP98} \cite{HS04}.

Very recently an answer has been given in \cite{B10}, in the form of a NP-hardness proof.

It is interesting to see that such a small constraint on block interchange makes all the difference (and that there are indeed NP-hard rearrangement problems with simple markers).

\subsubsection*{Prefix reversal}

This problem is also known as \emph{pancake flipping} in its original formulation \cite{D75}.

Like with reversals, I will briefly talk both of the signed and unsigned cases.

A prefix reversal, as implied by its name, is a restriction on the reversal operation where the segment has to start at the beginning of the sequence.

In the unsigned case, once again, the sign is ignored.

\begin{center}

$\underline{(\circ~3~4~5}~2~1~6~7~8~9~10~\circ)$

$\Downarrow$

$\underline{(\circ~5~4~3~2~1}~6~7~8~9~10~\circ)$

$\Downarrow$

$(\circ~1~2~3~4~5~6~7~8~9~10~\circ)$

\end{center}

In the \emph{unsigned} case, the problem is NP-hard \cite{B12} (let us remind that it was already the case with regular reversals).

Trivia surrounding this problem aludes to the fact the famous founder of Microsoft, Bill Gates, co-signed an academic paper devoted to this problem \cite{GP79}. This is his only paper. In this article the authors also introduce the \emph{burnt pancakes} variant which is equivalent to the signed case.

To this date the \emph{signed} case remains an open problem. Polynomial subclasses of the problem have been shown as well as some bounds for general instances in \cite{L11}.

I have also briefly worked with Laurent Bulteau on a \emph{prefix DCJ} model that surprisingly allowed us to reestablish the best known bounds on the prefix reversal model in a much simpler way. Unfortunately that quick overview did not allow us to go further than this point.

\subsection{DCJ}

\subsubsection*{Intro}

A \emph{DCJ} (double-cut and join) operation on a genome $G$ cuts two different adjacencies in $G$ and glues pairs of the four exposed extremities in any possible way, forming two 
new adjacencies.
For example, the following DCJ cuts adjacencies $(1~~2)$ and $({-5}~~6)$ to produce $(1~~6)$ 
and $({-5}~~2)$.

\subsubsection*{Example}

\begin{center}

Extremities are bi-colored to indicate how they are glued afterwards (joining the same color).

$ (\circ~1~~2~~8~~9~\includegraphics[scale=0.08]{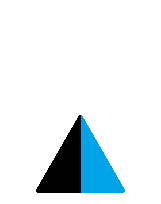}-6~-5~-4~-3~-7~\includegraphics[scale=0.08]{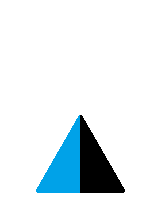}~10~\circ)$

$\Downarrow$

$ (-6~-5~-4~-3~\includegraphics[scale=0.08]{blua.png}~-7) ~ (\circ~~1~~2~\includegraphics[scale=0.08]{blua.png}~8~~9~~10~~\circ)$

$\Downarrow$

$(\circ~~1~2~3~4~5~6~7~8~9~10~~\circ)$

\end{center}

\subsubsection*{History and references}

The DCJ is more abstract than the other operations considered so far, and it allows more diversity in genome manipulation. Indeed, it is a genius generalization of all previous operations, first introduced in \cite{Yancopoulos05} and elegantly tackled in \cite{BMS06} giving us $O(n)$ algorithms for both distance and scenario in the signed case.

The unsigned case has been proven NP-hard \cite{C10}.

\subsubsection*{Outro}

As seen in the illustration, some DCJ operations can extract content into a new chromosome (\emph{excisions}), or merge chromosomes together (\emph{reintegrations}).

Because manipulating the chromosomal topology too much isn't very reasonable from a biological point of view, a restricted model was introduced at the same time \cite{Yancopoulos05}, where a chromosome excision would have to be followed by its immediate reintegration. The model has later been further studied and better algorithms were proposed \cite{Kovac10} \cite{Kovac}.

Another example of extension of the DCJ model is sorting between genomes with differing contents \cite{YF08} \cite{BWS10}, even in the restricted model \cite{Br13}.

Just like it was the case for the reversal model, the solution space of DCJ sorting has also been studied by the same author \cite{BS10}\footnote{...and ``independently" by one of my former advisors, although in a very rushed and inferior way...}.

The operation model itself was also further generalized in \cite{A08}, where the authors expand on the DCJ model, seeing it as the particular case for what they call a \emph{k-break} operation (which cuts the genome at $k$ positions then glues all exposed extremities in any possible way) where $k=2$.

\subsection{Phylogeny}

Originally, rearrangement problems were introduced as the elementary step of a more ambitious project: reconstructing phylogeny.

The principle is to use evolutionary distances as a way to measure relative distance between genomes, and rebuilding the phylogenetic tree (all common encestors and their relative positioning) from nothing but the set of present genomes.

The \emph{large phylogeny} problem aims at reconstructing the whole phylogenetic tree topology as well as all common ancestors from a given set of genomes.

\subsubsection{Small phylogeny and median}

\subsubsection*{Intro}

Small phylogeny is an ``easier" variant of the problem, where the tree topology is also given as an input \cite{SB98}.

Given a tree topology as well as the genomes present at its leaves, ancestor genomes should be inferred using a parsimony criterion: the sum of distances on all branches of the tree has to be minimal.

The \emph{median} problem is a further simplified variant of small phylogeny, the special case where there are only 3 genomes and one common ancestor to infer to them all.

NP-hardness of the median problem would naturally imply small phylogeny and large phylogeny NP-hardness as well.

\subsubsection*{History and references}

The median problem has been proven NP-hard under most rearrangement models (breakpoint \cite{PS98,B98,TZS09}, reversals \cite{C03}, and DCJ \cite{TZS09}), although in practice we have good branch and bound algorithms for computing optimal medians.

Under the generalized breakpoint model, when circular chromosomes are allowed, the problem becomes polynomial as demonstrated by a $O(n^3)$ algorithm \cite{TZS09}, then a better $O(n \log n)$ algorithm \cite{K11}.

Even though this was a very promising result, small phylogeny remains NP-hard under this model, even for as little as 4 species.

\subsubsection*{Outro}

The median problem was also studied under SCJ, the other generalized breakpoint distance.

 Under this model the median is computable in $O(n)$, even in the multilinear case, and small phylogeny is polynomial too \cite{FM09,FM11}. This result has been experimentally used with promising results \cite{BFM13}. Large phylogeny remains NP-hard, though.

Another very constrained reversal model (constrained in terms of allowed chromosomal topology as well as constrained specific reversals) allowed a polynomial answer, namely a $O(n)$ reversal median problem \cite{OAHS05}.

\clearpage

\section{Notations (II) - Duplicated genomes, double-adjacencies}
\label{sec:notations2}

A duplicated genome contains at most two occurences of each marker.  Two copies of a same marker in a genome are called paralogs. If a marker $x$ is present twice, one of the paralogs is represented by $\snd{x}$. By convention, $\snd{\snd{x}}= x$. 

\begin{definition}
A \emph{duplicated genome} is a genome in which a subset of the markers are duplicated. 
\end{definition}

For example,  
$(1~~2~~{-3}~~\msnd{2}) ~ (\circ~~4~~{-5}~~\snd{1}~~\snd{5}~~\circ)$ is a duplicated genome  where markers $1$, $2$, and  $5$ are duplicated. A \emph{non-duplicated genome} is a genome in which no marker is duplicated. 
A \emph{totally duplicated genome} is a duplicated genome in which all markers are duplicated. For example,  
$(1~~2~~\msnd{2}) ~ (\circ~~{-3}~~\snd{1}~~\snd{3}~~\circ)$ is a totally duplicated genome.

A \emph{double-adjacency} in a genome $G$ is an adjacency $\aff{a}{b}$
such that $\ass{a}{b}$ or $\mss{b}{a}$ is an adjacency of $G$ as
well. Note that a genome always has an even number of
double-adjacencies.  For example, the four double-adjacencies in the
following genome are indicated by $\diamond$:
$$G =
(\circ~~\fst{1}~~\snd{1}~~\fst{3}~~\fst{2}~\diamond~\fst{4}~\diamond~\fst{5}~~\fst{6}
~~\snd{6}~~\fst{7}~~\snd{3}~~\fst{8}~~\snd{2}~\diamond~\snd{4}~\diamond~\snd{5}~~\fst{
9}~~\snd{8}~~\snd{7}~~\snd{9}~~\circ )$$

\begin{definition}
A \emph{perfectly duplicated genome} is a totally duplicated genome such that
all adjacencies are double-adjacencies, none of them in the form $(x~\msnd{x})$.
\end{definition}

For example, the genome $(1~~{2}~~{3}~~4~~\snd{1}~~\snd{2}~~\snd{3}~~\snd{4})$
is a perfectly duplicated genome, while $(\circ~~1~~2~~\msnd{2}~~-1~~\circ)$ is not.

(Note: this definition is equivalent to the one from \cite{Mixtacki08}).

\section{Duplicated content}

As aforementioned, all previously cited work concern a sub-class of genomes: genomes for which each marker appears only once.

This is in fact far from biological reality where genes can appear in multiple copies, and naturally several problems taking this into account have been designed and solved, while simple markers problem could be seen as a first step. As we are about to see, the next step is far from being trivial, from a computational point of view, even under the strong assumption that each marker can only appear twice in a genome.

\subsection{Exemplar distance and matching models}

\subsubsection*{Intro}

As computer scientists, a natural way to tackle duplications is to find a reduction to a simple marker problem. If one were able to distinguish copies of a marker, then they could be labeled as 2 distinct ones and the problem remains the same as its simple markers variant. 

Seeking to keep the maximum number of genes in both genomes while giving them a one-to-one correspondance is known as the \emph{maximum matching model} \cite{BFC04}.

Another approach is to keep just one copy of each marker and see which choices allow the minimal distance: this is the \emph{exemplar model} \cite{S99}.

\subsubsection*{Example}
\begin{center}
\begin{tikzpicture}[xscale=0.3,yscale=0.3]
    %\draw[style=help lines] (0,0) grid (38,-10);
    
    \draw (7,3) node {\textbf{Exemplar}};

    \draw (27,3) node {\textbf{Maximum matching}};
    
    \draw 
        (17,4) node {} --
        (17,-6) node {};  
        
    \foreach \i/\v in {0/\circ,2/1,4/\textcolor{lightgray}{\snd{2}},6/\textcolor{lightgray}{-3},8/2,10/\snd{3},12/\textcolor{lightgray}{\snd{1}},14/\circ,20/\circ,22/1,24/\snd{2},26/-3,28/2,30/\textcolor{lightgray}{\snd{3}},32/\snd{1},34/\circ} {
      \draw (\i,0) node {$\v$};
    };

    \foreach \i/\v in {1/\circ,3/1,5/2,7/3,9/\textcolor{lightgray}{\snd{2}},11/\textcolor{lightgray}{\snd{1}},13/\circ,21/\circ,23/1,25/2,27/3,29/\snd{2},31/\snd{1},33/\circ} {
      \draw (\i,-5) node {$\v$};
    };
    \fontsize{4pt}{4pt}\selectfont
    %\tikzstyle{every node}=[circle,draw,fill]

    \draw 
    (2,-1) node {$\bullet$} --
    (3,-4) node {$\bullet$};  

    \draw 
    (8,-1) node {$\bullet$} --
    (5,-4) node {$\bullet$}; 

    \draw 
    (10,-1) node {$\bullet$} --
    (7,-4) node {$\bullet$}; 

    \draw 
    (22,-1) node {$\bullet$} --
    (23,-4) node {$\bullet$};  

    \draw 
    (24,-1) node {$\bullet$} --
    (25,-4) node {$\bullet$}; 

    \draw 
    (26,-1) node {$\bullet$} --
    (27,-4) node {$\bullet$}; 

    \draw 
    (28,-1) node {$\bullet$} --
    (29,-4) node {$\bullet$}; 

    \draw 
    (32,-1) node {$\bullet$} --
    (31,-4) node {$\bullet$}; 
        
\end{tikzpicture}
\end{center}

\subsubsection*{History and references}

The \emph{exemplar model} was first introduced under the breakpoint and the signed reversal distances \cite{S99}, along with branch-and-bound algorithms, while the exact complexity was not settled.

A NP-completeness proof for both distances was given in \cite{B00}.

The \emph{maximum matching} model was introduced in \cite{BFC04} as a mean to study rearrangements where reversals, insertions and deletions are allowed, but it was proven NP-hard.
Computing breakpoint distance or reversal distance under maximum matching is also NP-hard \cite{C05}.

In \cite{BCFRV07}, other comparison measures have also been proven to be NP-hard for both models.

A third matching model, the \emph{intermediate matching} was introduced as a bridge between both models \cite{A07}, where \emph{at least one} copy of each marker is kept.

Naturally, since both the exemplar and the maximum models are particular cases of the intermediate model, NP-hardness remains.

\subsubsection*{Outro}

Solved problems for simple markers become NP-hard as soon as duplications are introduced, since otherwise it would imply a polynomial-time matching.

Further restrictions of the exemplar distance have been studied, such as the \emph{zero exemplar distance problem}: Is the distance equal to zero? i.e. can two genomes be reduced to the same one via an exemplar matching?

Even this strong restriction is NP-hard for two monochromosomal genomes with as little as at most two occurences of each marker in both of them \cite{BFSV09}, or even disregarding gene sequences and only considering chromosomes as unordered sets of markers \cite{J11}.
It becomes polynomial only for a stronger restriction: each gene has to appear exactly once in one of the two genomes, and at least once in the other \cite{J11}.

\subsection{Genome Halving}

\subsubsection*{Intro}

In an attempt to avoid NP-hardness, the Genome Halving problem does not constrain the final configuration.

Given a duplicated genome $G$, we assume that all duplications are the result of a single \emph{whole genome duplication} (WGD) event.
The goal is to find the non-duplicated ancestor genome, ie. the state in which the genome was just before the WGD occured.

Naturally, it is done under the hypothesis of parsimony, meaning we want to minimize the distance between the genome and its ancestor.

\subsubsection*{Example}
\label{soahalving}
\begin{center}

$G = (\circ~1~~2~~\msnd{3}~\circ)~(\circ~\snd{1}~~3~\includegraphics[scale=0.08]{blua.png}~4~~\snd{5}~\includegraphics[scale=0.08]{blau.png}~\msnd{2}~\circ)~(~\snd{4}~5~)$

$\Downarrow$

$ (\circ~1~\includegraphics[scale=0.08]{blau.png}~2~~\msnd{3}~\includegraphics[scale=0.08]{blau.png}~\circ)~(~4~~\snd{5}~)~(\circ~\snd{1}~~3~~\msnd{2}~\circ)~(~\snd{4}~5~)$

$\Downarrow$

$G' = (\circ~1~~\snd{3}~~-2~\circ)~(~4~~\snd{5}~)~(\circ~\snd{1}~~3~~\msnd{2}~\circ)~(~\snd{4}~5~)$

\end{center}

Note that the biological scenario really is \emph{backwards}: chronologically, there was a non-duplicated genome $G'' = (~1~~\snd{3}~~-2~)~(~4~~\snd{5}~)$ which underwent a $WGD$ and thus became the perfectly duplicated genome $G'$, then other rearrangements happened so that we finally observe $G$.

\subsubsection*{History and references}

Genome Halving has first been studied under reversals
\cite{Mabrouk98} and HP distance \cite{Mabrouk03,AP07}, all with polynomial algorithms as answer. 

The DCJ variant was studied in \cite{Warren08}, and brilliantly in \cite{Mixtacki08} yielding a linear-time algorithm.

Surprising results arise with the generalized breakpoint distance model: while it remains polynomial for multichromosomal genomes with circular chromosomes allowed \cite{Tannier08}, it is NP-hard for monochromosomal and multilinear genomes \cite{K11}.

\subsubsection*{Outro}

The genome halving problem was a breakthrough, being the first polynomial duplicated rearrangement problem.

However, its polynomiality can be accounted for by the fact there is an exponential amount of optimal solutions, coupled with an exponential amount of possible scenarios to each of them.

Loosely speaking, the genome halving problem allows to sort half of the adjacencies so that they mimick the adjacencies formed by their paralogs. It is conceptually very close to a simple marker rearrangement problem.

Needless to say, so many different genomes being optimal also raises realism issues from a biological point of view.

\subsection{Other classical problems}

\subsubsection*{Intro}

The genome halving problem led to two other classic problems:

The \emph{genome aliquoting} is a generalization of genome halving, with more than two copies per marker.

The \emph{guided halving} is a problem designed to answer the realism issues of genome halving. By providing a \emph{reference} genome in addition to a duplicated genome, we now look for a perfectly duplicated genome that minimizes the sum of its distances towards the duplicated genome and the reference genome.

\subsubsection*{Genome aliquoting}

The genome aliquoting problem was first introduced in \cite{WS09} along with a heuristic algorithm as a first attempt to provide a solution. Further studies were made by the same authors using the generalized breakpoint distance as a 2-approximation for the DCJ distance \cite{WS11}.

The problem remains open as of today.

\subsubsection*{Guided halving}

Using an external reference genome to constrain solutions to the genome halving was first introduced in \cite{ZZS06} along with a heuristic algorithm, then proven NP-hard, even under breakpoint distance \cite{ZZS08}.

Under generalized breakpoint with circular chromosomes allowed, however, it becomes polynomial, first solved in $O(n^3)$ \cite{TZS09}, further enhanced to $O(n \log n)$ \cite{K11}.

On the whole this problem seems to display the same complexity as the median problem, which is not so surprising given conceptual similarities between the two problems\footnote{NP-hardness was proven by reduction from the median problem}.

\subsubsection*{Outro}

While the genome halving was a promising start, it was drifting away from biological reality, and both attempts at solving those issues are NP-hard.

Was the genome halving a breakthrough, or was it a deadend? Was its polynomiality the fleeting ray of light reinforcing the notion of darkness?

\section{Closing words}

In conclusion, rearrangement problems are defined on 3 main parameters:
\begin{itemize}
\item type of input genome(s)
\item allowed operations
\item desired configuration
\end{itemize}

A recent analysis of rearrangement with duplications has been published by experts on the field \cite{ES12}, which is a much recommended reading for anyone interested in the field.

\chapter[Rearrangements with duplicated markers]{Rearrangements with duplicated markers}

In this section I study rearrangement problems in the vein of genome halving.

The difference between genome halving and the problems I studied is that, rather than basing the problem on ``whole genome duplication" events, I studied segmental duplications instead, under various assumptions. I also studied another model where duplications occur on-the-fly as rearrangement operations happen.

For segmental duplications, I proved that a single tandem could be reconstructed in polynomial time. I provided $O(n^2)$ algorithms for the scenario, and an $O(n)$ computation for the distance, for both the DCJ and Block Interchange operation models.

I also proved NP-hardness of several problems based on multiple tandem reconstruction.

The breakpoint duplication model is also proved NP-hard, and I designed a FPT algorithm in the number of cycles present in a graph made for it.

These analyses led to several publications, in \cite{Thomas11} \cite{Thomas13pre} \cite{Thomas12} and \cite{Thomas13}.
 \section[Preliminary game II: genome halving]{Preliminary game II: genome halving}
\label{halvingrevisit}
It would benefit the reader and generally help the self-containment of this thesis to briefly revisit the classical genome halving by DCJ problem as it was handled in \cite{Mixtacki08}, because I use this problem as a starting point for the single tandem halving problem developed in section \ref{sec:wholetandem}, and because it provides good illustrations to what I am about to develop in section \ref{sec:dl}.

Similarly to what was done earlier in chapter \ref{chap:rearrangement}, I will focus on general workings rather than technical details. Since the notations have already been introduced previously, we shall dive even faster into the subject, considering the reader to already be familiar with genome rearrangements at this point.

Although what follows is similar to the analysis framework of chapter \ref{chap:rearrangement}, it is also slightly more complex as the genome halving problem presents some additional subtleties.

\subsection{Rules and example}

I recall the genome halving by DCJ problem, using terms defined in section \ref{sec:notations2}.

The input genome is a \emph{totally duplicated genome} (each marker appears exactly twice), and the goal is to find a closest (with respect to the DCJ distance) perfectly duplicated genome (each adjacency must be a double-adjacency, none of them them in the form $(x~\msnd{x})$).

Here is an example halving scenario.

\begin{center}

$(\circ~1~\includegraphics[scale=0.08]{blua.png}~\snd{4}~~\snd{1}~-2~\snd{2}~\includegraphics[scale=0.08]{blua.png}~3~~-4~~\snd{3}~\circ)$

$\Downarrow$

$ (\circ~1~\msnd{2}~2~\msnd{1}~\includegraphics[scale=0.08]{blau.png}~\msnd{4}~3~-4~\snd{3}~\includegraphics[scale=0.08]{blua.png}~\circ)$

$\Downarrow$

$(\circ~\includegraphics[scale=0.08]{blau.png}~\circ)~(\circ~1~~\msnd{2}~\includegraphics[scale=0.08]{blau.png}~2~\msnd{1}~\circ)~(~\msnd{4}~~3~-4~~\snd{3})$

$\Downarrow$

$(\circ~1~~\msnd{2}~\circ)~(\circ~2~\msnd{1}~\circ)~(~\msnd{4}~~3~-4~~\snd{3})$

\end{center}

Note that markers don't necessarily have to end in consecutive order, as it might not always be the shortest way to go.

Also, about the last performed operation in this example, it is a \emph{fission}. With our usual DCJ definition, we have to consider that the second breakpoint is in fact in an empty linear chromosome (which is a theoretical tool rather than an actual part of the genome). You can ignore it for now as it is much easier to explain in a graph, as we are about to see.

\subsection{General case}

\subsubsection{Drawing the \emph{natural graph}}

Because we are aiming at a closest genome satisfying a \emph{pattern} (namely, \emph{any} perfectly duplicated genome), the exact goal genome is not known, and thus we cannot build a graph relying on ``what we \emph{should} obtain" as it was done in chapter \ref{chap:rearrangement}.

However, we will still go in a generally similar line of reasoning as we are about to design a graph such that any perfectly duplicated genome is always corresponding to a particular configuration in terms of connected components.

By definition, a perfectly duplicated genome is a genome such that every adjacency is a double-adjacency. It follows that if we use edges to join paralogous extremities, a double adjacency can easily be identified, as shown in figure \ref{fig:halvpdg}. The graph defined this way is called the \emph{natural graph}.

\begin{figure}[h!]
\center
\begin{tikzpicture}[xscale=0.3,yscale=0.3]
    %\draw[style=help lines] (0,0) grid (38,-10);
    
    \foreach \i/\v in {0/(,2/\circ,4/1,6/2,8/3,10/\circ,12/),14/(,16/\circ,18/\snd{1},20/\snd{2},22/\snd{3},24/\circ,26/),28/(,30/4,32/5,34/\snd{4},36/\snd{5},38/)} {
      \draw (\i,0) node {$\v$};
    };

    \fontsize{4pt}{4pt}\selectfont
    %\tikzstyle{every node}=[circle,draw,fill]

    \draw 
    (3,-1) node {$\bullet$} --
    (17,-1) node {$\bullet$};

    \draw 
    (9,-6) node {$\bullet$} --
    (23,-6) node {$\bullet$};

    \draw 
    (5,-2) node {$\bullet$} --
    (19,-2) node {$\bullet$};

    \draw[densely dotted]
    (19,-2) node {$\bullet$} --
    (19,-3) node {$\bullet$};

    \draw
    (19,-3) node {$\bullet$} --
    (5,-3) node {$\bullet$};
    \draw[densely dotted] 
    (5,-3) node {$\bullet$} -- 
    (5,-2) node {$\bullet$};

    \draw 
    (7,-4) node {$\bullet$} --
    (21,-4) node {$\bullet$};

    \draw[densely dotted]
    (21,-4) node {$\bullet$} --
    (21,-5) node {$\bullet$};
    \draw
    (21,-5) node {$\bullet$} --
    (7,-5) node {$\bullet$};
    \draw[densely dotted] 
    (7,-5) node {$\bullet$} -- 
    (7,-4) node {$\bullet$};

    \draw 
    (29,-1) node {$\bullet$} --
    (33,-1) node {$\bullet$};

    \draw[densely dotted]
    (33,-1) node {$\bullet$} --
    (33,-2) node {$\bullet$};
    \draw
    (33,-2) node {$\bullet$} --
    (29,-2) node {$\bullet$};
    \draw[densely dotted] 
    (29,-2) node {$\bullet$} -- 
    (29,-1) node {$\bullet$};

    \draw 
    (31,-3) node {$\bullet$} --
    (35,-3) node {$\bullet$};

    \draw[densely dotted]
    (35,-3) node {$\bullet$} --
    (35,-4) node {$\bullet$};
    \draw
    (35,-4) node {$\bullet$} --
    (31,-4) node {$\bullet$};
    \draw[densely dotted] 
    (31,-4) node {$\bullet$} -- 
    (31,-3) node {$\bullet$};

\end{tikzpicture}

\caption{Natural graph of a perfectly duplicated genome. It consists of 1-paths and 2-cycles only.}
\label{fig:halvpdg}
\end{figure}
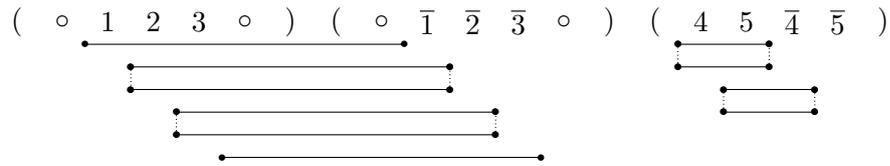

In the natural graph, a telomeric double-adjacency is necessarily a 1-path, while double-adjacencies concerning two distinct markers are 2-cycles.

Thus, any natural graph comprised of only 1-paths and 2-cycles is the natural graph of a perfectly duplicated genome.

\subsubsection{Using the graph I: DCJ on a graph}

Here are the successive natural graphs for the above scenario example. It will prove useful for studying how a DCJ can alter the graph.

\begin{center}
\center
\begin{tikzpicture}[xscale=0.3,yscale=0.3]
    %\draw[style=help lines] (0,0) grid (38,-10);
    
    \foreach \i/\v in {0/(,2/\circ,4/1,6/\snd{4},8/\snd{1},10/-2,12/\snd{2},14/3,16/-4,18/\snd{3},20/\circ,22/)} {
      \draw (\i,0) node {$\v$};
    };

    \fontsize{4pt}{4pt}\selectfont
    %\tikzstyle{every node}=[circle,draw,fill]

    \draw 
    (11,-1) node {$\bullet$};

\draw
(3,-5) node {$\bullet$} --
(7,-5) node {$\bullet$};
\draw[densely dotted]
(7,-5) node {$\bullet$} --
(7,-6) node {$\bullet$};
\draw
(7,-6) node {$\bullet$} --
(15,-6) node {$\bullet$};
\draw[densely dotted]
(15,-6) node {$\bullet$} --
(15,-7) node {$\bullet$};
\draw
(15,-7) node {$\bullet$} --
(19,-7) node {$\bullet$};

\draw
(5,-1) node {$\bullet$} --
(9,-1) node {$\bullet$};
\draw[densely dotted]
(9,-1) node {$\bullet$} --
(9,-2) node {$\bullet$};
\draw
(9,-2) node {$\bullet$} --
(13,-2) node {$\bullet$};
\draw[densely dotted]
(13,-2) node {$\bullet$} --
(13,-3) node {$\bullet$};
\draw
(13,-3) node {$\bullet$} --
(17,-3) node {$\bullet$};
\draw[densely dotted]
(17,-3) node {$\bullet$} --
(17,-4) node {$\bullet$};
\draw
(17,-4) node {$\bullet$} --
(5,-4) node {$\bullet$};
\draw[densely dotted]
(5,-4) node {$\bullet$} --
(5,-1) node {$\bullet$};

\end{tikzpicture}

Contents: one 3-path, one 1-cycle and one 4-cycle.

$\Downarrow$

\begin{tikzpicture}[xscale=0.3,yscale=0.3]
    %\draw[style=help lines] (0,0) grid (38,-10);
    
    \foreach \i/\v in {0/(,2/\circ,4/1,6/\msnd{2},8/2,10/\msnd{1},12/\msnd{4},14/3,16/-4,18/\snd{3},20/\circ,22/)} {
      \draw (\i,0) node {$\v$};
    };

    \fontsize{4pt}{4pt}\selectfont
    %\tikzstyle{every node}=[circle,draw,fill]

    \draw 
    (7,-1) node {$\bullet$};

\draw
(3,-5) node {$\bullet$} --
(11,-5) node {$\bullet$};
\draw[densely dotted]
(11,-5) node {$\bullet$} --
(11,-6) node {$\bullet$};
\draw
(11,-6) node {$\bullet$} --
(15,-6) node {$\bullet$};
\draw[densely dotted]
(15,-6) node {$\bullet$} --
(15,-7) node {$\bullet$};
\draw
(15,-7) node {$\bullet$} --
(19,-7) node {$\bullet$};

\draw
(5,-2) node {$\bullet$} --
(9,-2) node {$\bullet$};
\draw[densely dotted]
(9,-2) node {$\bullet$} --
(9,-3) node {$\bullet$};
\draw
(9,-3) node {$\bullet$} --
(5,-3) node {$\bullet$};
\draw[densely dotted]
(5,-3) node {$\bullet$} --
(5,-2) node {$\bullet$};

\draw
(13,-3) node {$\bullet$} --
(17,-3) node {$\bullet$};
\draw[densely dotted]
(17,-3) node {$\bullet$} --
(17,-4) node {$\bullet$};
\draw
(17,-4) node {$\bullet$} --
(13,-4) node {$\bullet$};
\draw[densely dotted]
(13,-4) node {$\bullet$} --
(13,-3) node {$\bullet$};

\end{tikzpicture}

A DCJ extracted a 2-cycle from the 4-cycle (remainder: \textbf{2-cycle})

Contents: one 3-path, one 1-cycle and two 2-cycles.

$\Downarrow$

\begin{tikzpicture}[xscale=0.3,yscale=0.3]
    %\draw[style=help lines] (0,0) grid (38,-10);
    
    \foreach \i/\v in {0/(,2/\circ,4/1,6/\msnd{2},8/2,10/\msnd{1},12/\circ,14/),16/(,18/\msnd{4},20/3,22/-4,24/\snd{3},26/)} {
      \draw (\i,0) node {$\v$};
    };

    \fontsize{4pt}{4pt}\selectfont
    %\tikzstyle{every node}=[circle,draw,fill]

    \draw 
    (7,-1) node {$\bullet$};

\draw
(3,-4) node {$\bullet$} --
(11,-4) node {$\bullet$};

\draw
(5,-2) node {$\bullet$} --
(9,-2) node {$\bullet$};
\draw[densely dotted]
(9,-2) node {$\bullet$} --
(9,-3) node {$\bullet$};
\draw
(9,-3) node {$\bullet$} --
(5,-3) node {$\bullet$};
\draw[densely dotted]
(5,-3) node {$\bullet$} --
(5,-2) node {$\bullet$};

\draw
(17,-2) node {$\bullet$} --
(21,-2) node {$\bullet$};
\draw[densely dotted]
(21,-2) node {$\bullet$} --
(21,-3) node {$\bullet$};
\draw
(21,-3) node {$\bullet$} --
(17,-3) node {$\bullet$};
\draw[densely dotted]
(17,-3) node {$\bullet$} --
(17,-2) node {$\bullet$};

\draw
(19,-4) node {$\bullet$} --
(23,-4) node {$\bullet$};
\draw[densely dotted]
(23,-4) node {$\bullet$} --
(23,-5) node {$\bullet$};
\draw
(23,-5) node {$\bullet$} --
(19,-5) node {$\bullet$};
\draw[densely dotted]
(19,-5) node {$\bullet$} --
(19,-4) node {$\bullet$};

\end{tikzpicture}

A DCJ extracted a 2-cycle from the 3-path (remainder: \textbf{1-path})

Contents: one 1-path, one 1-cycle, three 2-cycles.

$\Downarrow$

\begin{tikzpicture}[xscale=0.3,yscale=0.3]
    %\draw[style=help lines] (0,0) grid (38,-10);
    
     \foreach \i/\v in {0/(,2/\circ,4/1,6/\msnd{2},8/\circ,10/),12/(,14/\circ,16/2,18/\msnd{1},20/\circ,22/),24/(,26/\msnd{4},28/3,30/-4,32/\snd{3},34/)} {
          \draw (\i,0) node {$\v$};
        };
    
    \fontsize{4pt}{4pt}\selectfont
    %\tikzstyle{every node}=[circle,draw,fill]

    \draw 
    (3,-4) node {$\bullet$} --
    (19,-4) node {$\bullet$};

\draw
(7,-1) node {$\bullet$} --
(15,-1) node {$\bullet$};

\draw
(5,-2) node {$\bullet$} --
(17,-2) node {$\bullet$};
\draw[densely dotted]
(17,-2) node {$\bullet$} --
(17,-3) node {$\bullet$};
\draw
(17,-3) node {$\bullet$} --
(5,-3) node {$\bullet$};
\draw[densely dotted]
(5,-3) node {$\bullet$} --
(5,-2) node {$\bullet$};

\draw
(25,-1) node {$\bullet$} --
(29,-1) node {$\bullet$};
\draw[densely dotted]
(29,-1) node {$\bullet$} --
(29,-2) node {$\bullet$};
\draw
(29,-2) node {$\bullet$} --
(25,-2) node {$\bullet$};
\draw[densely dotted]
(25,-2) node {$\bullet$} --
(25,-1) node {$\bullet$};

\draw
(27,-3) node {$\bullet$} --
(31,-3) node {$\bullet$};
\draw[densely dotted]
(31,-3) node {$\bullet$} --
(31,-4) node {$\bullet$};
\draw
(31,-4) node {$\bullet$} --
(27,-4) node {$\bullet$};
\draw[densely dotted]
(27,-4) node {$\bullet$} --
(27,-3) node {$\bullet$};

\end{tikzpicture}

A DCJ transformed the 1-cycle into a 1-path (remainder: none)

Contents: two 1-paths, three 2-cycles.

\end{center}

\clearpage

As the nodes from the graph are corresponding to adjacencies of the genome, a DCJ operation, from the graph point-of-view, will cut connected components at 2 positions and glue the exposed extremities in any possible way.

If you have trouble visualizing it, imagine the edges are pieces of string maintained together by blu-tack, the nodes. A DCJ essentially rips two blu-tack pieces in half before joining the four resulting ripped pieces in another way, changing the way the pieces of string are connected together.

Back to our graph, it means that a DCJ can extract a cycle from any bigger component, or merge two components together, depending on whether the two breakpoints are shared by the same component or not.

Note that you also have the right to consider the two positions are degenerated (two breakpoints on the same node), as illustrated in the last step of our example. This allows a DCJ to transform a path into a cycle or vice versa, or to split a path into two smaller paths.

\subsubsection{Using the graph II: halving distance}

The analysis will be done in two steps, in the same vein as chapter \ref{chap:rearrangement}. First we will find what is invariant as it will serve as a bound for the formula, then we will inspect more closely how this bound relates to the exact distance by reasoning on the kind of connected components we are trying to reconstruct.

\subsubsection*{Invariant}

The number of edges in the graph is invariant and equal to the total number of markers. We will write it as $2n$, with $n$ the number of \emph{distinct} markers. 

We aim at reconstructing double-adjacencies, and a double-adjacency is a 2-cycle in the graph. Therefore, when bigger components are present in the graph, extracting 2-cycles from them are good operations. Since extracting a 2-cycle takes care of two edges from the graph, we can expect at most $\frac{2n}{2}=n$ operations, which gives us a first upperbound on the distance, $d \leq n$.

~~

Just like it was already the case in chapter \ref{chap:rearrangement}, computing the exact distance will be done by reasoning on the \textbf{remainders} left by our operations.

\subsubsection*{2-cycles}

A DCJ can always extract a 2-cycle from any bigger component, and bigger \emph{even} cycles will eventually give an additional 2-cycle as remainder (see step 1 of our example). That is like one operation for free, or in other terms, a bonus of one operation. 2-cycles already present in the graph count as bonuses as well, since they are also operations we don't need to perform.

$\Rightarrow$ If the natural graph contains $EC$ even cycles, then there are $EC$ 2-cycles which either are already present, or will eventually be formed as remainders of other 2-cycle extractions. Thus a better upperbound on the distance is $d \leq n - EC$

\subsubsection*{1-paths}

I said that double-adjacencies were 2-cycles in the graph. While this is true for adjacencies concerning two distinct markers, in the case of telomeric adjacencies, they are 1-paths. This means that 1-paths already present and 1-paths obtained as remainders also contribute to the distance formula.

A single DCJ could create two 1-paths at once (by splitting a 2-path). It follows that 1-paths already present only count as half a bonus each\footnote{note that this is also consistent with the generalized breakpoint distance where telomeric adjacencies count for 0.5}. Bigger odd-paths would also each eventually give a free 1-path as remainder.

$\Rightarrow$ If there are $OP$ odd paths in the graph, then there are $OP$ 1-paths that are either already present or will be formed as remainders of 2-cycle extractions. Since a 1-path is a bonus of half an operation, that makes $\frac{OP}{2}$ operations for free. It follows an even better upperbound would be $d \leq n - EC - \frac{OP}{2}$.

\subsubsection*{Distance}

There is one last subtlety we need to address in order to express the distance.

In the case where there is an odd number of odd paths, only an even number of them will effectively give bonuses, because the remaining one eventually induces a remaining 1-cycle somewhere else (indeed, the total number of edges is even), which forces us to perform an operation to transform it into a 1-path (see step 3 of our example). It is to note that such operation affects only one edge from the graph, instead of the usual two. This is basically equivalent to a penalty of half an operation, which cancels the bonus we got through the 1-path remainder. Therefore the exact contribution of odd paths is rather $\lfloor \frac{OP}{2} \rfloor$.

Note that odd cycles and even paths cannot give us bonuses since perfectly duplicated genomes graphs cannot contain any. Thus, there exists no other way to save operations, and we indeed have the exact distance.

$\Rightarrow$ In conclusion, the distance formula is equal to $d = n - EC - \lfloor \frac{OP}{2} \rfloor$.

\section{Meta-problems: general results}

\subsection{Dual layered vision of rearrangement problems}
\label{sec:dl}

In this section I describe a vision of rearrangement problems which helped me obtaining results and that I have not seen described elsewhere. This is essentially a generalization of the analysis framework I used in my revisits of \cite{Christie96} and \cite{Mixtacki08}.

I will first recall how the breakpoint distance can be used to establish bounds, as it is a prerequisite.

\subsubsection{Bounds}

Following the same logic used in chapter \ref{chap:rearrangement}, if an operation affects two breakpoints, then a single operation can reconstruct a \emph{maximum} of two adjacencies. This observation alone provides a lowerbound on the distance.

Conversely, by looking at the \emph{minimum} number of adjacencies that can always be reconstructed by an operation, an upperbound can be established. This upperbound is also useful to define approximation ratios.

For example, because a DCJ cuts the genome at two positions and is able to glue the exposed extremities in any desired way, it can always reconstruct one adjacency, and sometimes two, which implies that the breakpoint distance (number of adjacencies to be reconstructed) serves not only as an upperbound to the distance, but also as a 2-approximation\footnote{I recall the breakpoint distance was first introduced as a 2-approximation for the reversal distance in \cite{KS93}} since in the best case two adjacencies can be reconstructed at each step yielding a distance that is half the breakpoint distance.

The same reasoning also holds for more general operation models as long as they are defined on a fixed number of breakpoints, like the multi-break model from \cite{A08} (I recall that a k-break operation cuts the genome at $k$ positions and glue all exposed extremities in any possible way, which means a DCJ can be seen as a 2-break operation).

Indeed, the breakpoint distance serves as a k-approximation for k-break operations (in the best case, each k-break operation succesfully reconstructs k adjacencies, leading to a distance $k$ times smaller than the breakpoint distance).

\subsubsection{From bounds to distance}

We have seen that operations can be defined on a number of breakpoints, however it happens that different operations are defined on a same number of breakpoints yet act differently on them. To take this into account, one might say an operation is described on two levels:

\begin{itemize}
\item the lower level, related to the number of breakpoint it affects.
\item the higher level, related to the way these breakpoints are affected.
\end{itemize}

For example, a reversal is defined on \emph{two} breakpoints on a same linear or circular chromosome. The way these breakpoints are affected is by reversing the segment contained within, leaving the chromosomal structure intact.
A DCJ, however, while also defined on two breakpoints, doesn't constrain the breakpoint selection (they don't need to be on a same chromosome), and furthermore any possible recombination of exposed extremities is allowed, which can lead to a change in chromosomal structure (a linear chromosome can be transformed into a linear and a circular chromosome, for example).

Based on this description, bounds are related to this  lower level, while data structures are used to better study the higher level, allowing us to uncover what separates the distance from its bounds.

In this context, a distance formula can be seen as 

\begin{center}
$d =$ (number of expected operations) $-$ (sum of bonuses)
\end{center}

The \emph{number of expected operations} is a bound. It is the lower level contribution to the distance. It answers the question \emph{``how many operations are needed in the worst case, given how many adjacencies there are to be reconstructed and how many of them can always be reconstructed by an operation?"}.

Note that this number is an invariant. It only depends on the problem, not on the considered genome, and thus it remains constant during a rearrangement scenario.

For example, in chapter \ref{chap:rearrangement}, BI were the only allowed operations. For a genome of length $n$, there were $n+1$ adjacencies to be reconstructed in the worst case (telomeric adjacencies were not given a different weight in this model since the problem was defined on unilinear genomes only). A single BI can always reconstruct at least $2$ adjacencies. It follows that the number of expected operations in this case is $\frac{n+1}{2}$.

With my other example, on duplicated markers, in section \ref{halvingrevisit}, for a genome consisting of $n$ markers each appearing twice, there are at most $n$ adjacencies to be reconstructed (even though the genome length amounts to a total of $2n$ adjacencies, half of them can be left untouched, using the paralogs to copy them). A DCJ, which is the only allowed operation here, can always reconstruct one adjacency. It follows the number of expected operations in this case is $n$.

The sum of bonuses is what separate the bound from the actual distance. It depends on the genome, and it is the number of times we will be able to reconstruct more adjacencies in a fortuitous way during a scenario. It is usually more apparent on the graph (through \emph{remainders} of operations) as it depends on the exact transformation applied, and this is why it is the higher level contribution to the distance.

I will keep the same two examples to illustrate this part of the formula:

In chapter \ref{chap:rearrangement}, each sorting operation extracts two 1-cycles from bigger cycles. It follows that 1-cycles already present in the graph could be seen as the result of previously performed operations, and since an operation would create two of them, observing one is like observing the result of half an operation. In other words, we might say each 1-cycle already present is a bonus of half an operation. Bigger cycles will also each eventually give a 1-cycle as remainder, which is half a bonus too. 

It follows that each cycle is half a bonus, so the \emph{sum of bonuses} is $\frac{C}{2}$, with $C$ the number of cycles in the graph.

In conclusion, as the number of expected operations was $\frac{n+1}{2}$, the distance formula becomes $\frac{n+1-C}{2}$.

In section \ref{halvingrevisit}, the goal graph is consisting of 2-cycles and 1-paths. A DCJ would extract a 2-cycle, therefore 2-cycles already present are a bonus of one operation each. Bigger \emph{even} cycles eventually give a 2-cycle as remainder. It follows the number of even cycles, $EC$, contributes to the sum of bonuses.
 With 1-paths it's a bit more subtle as previously explained. Since a single DCJ could create two 1-paths at once (by splitting a 2-path), they can only count as half a bonus each. Through extraction of 2-cycles, bigger \emph{odd} paths will also each give a 1-path as remainder, eventually. It is also important to note that in the case where there is an odd number of odd paths, one of the paths would not contribute (refer to section \ref{halvingrevisit} for details). It follows that, with $OP$ being the number of odd paths, $\lfloor \frac{OP}{2} \rfloor$ also contributes to the sum of bonuses for genome halving.

In conclusion, as the number of expected operations was $n$, the distance formula becomes $n - (EC + \lfloor \frac{OP}{2} \rfloor )$.

In addition to helping to establish distance formulas, such vision also helps grasping a better understanding of scenarios construction, which might ultimately prove useful when studying solution spaces, for example. Indeed, any sorting operation must necessarily increase the sum of bonuses.

\subsubsection{On complexity}

The lower level contribution is directly related to the breakpoint distance, which is usually polynomial even for problems that become NP-hard for more elaborate operation models.

It follows that the NP-hard part of such problems lies in computing the sum of bonuses.

Separating the NP-hard part from the polynomial part of a rearrangement problem allows for an easier reduction, and might even lead to FPT algorithms if the sum of bonuses is independent from the genome length.

For example, later in the present document, namely section \ref{sec:dedoubling}, I study and prove NP-hardness of a rearrangement problem which would be a very good illustration of this principle. Obviously I cannot expand too much on it for now since the problem hasn't been defined yet, but I'll just say that the distance formula has the usual $n$ as number of expected operations, while the number of bonuses, $C_i$ is the direct result of a well-known NP-hard problem, allowing for a straightforward reduction. I will get back to it when the problem is defined: in section \ref{sec:dedoublebonus} I will show how to use the ideas from the present section to quickly establish the distance formula.

In conclusion, thinking in terms of ``what is the number of expected operations" and ``what are the bonuses" might prove useful not only to establish rearrangement distances, but also NP-hardness proofs. For this reason it has become one of my first approach when tackling a new rearrangement problem.

I will add that this dual layered view might also be seen as a mere first step. One might put additional layers in the description of the bonuses themselves as an attempt to better understand how a problem works, or to better express a NP-hard aspect. For example, staying on the genome halving example, I could state that 2-cycles and 1-paths already present in the graph are \emph{bonuses of order zero}, while bigger even cycles and odd paths are \emph{bonuses of higher order}, since they will create additional zero order bonuses at a later step in the scenario.

Interestingly enough, when using such further layering with rearrangement problems, we notice that \emph{(number of expected operations) - (sum of zero order bonuses)} is generally exactly the breakpoint distance.

\subsection{Scenario, distance and complexity class}
\label{sec:distscenar}

Note: As it is rather easy to prove, I don't think this result is new, and I admit I did not actively look for such result in bibliography (I would not know where to look as it is very general). However, since I have seen that question being brought up several times without an answer, I am including my quick proof just in case. I would also like to thank Eric Tannier for pointing out this is a result of self-reducibility.

Given an optimal scenario, computing the minimal distance is trivial. It follows that a polynomial scenario implies a polynomial distance. I prove the converse is true as well, and thus that most rearrangement problems are \emph{self-reducible} under reasonable assumptions on the operation model.

To make the proof easier to follow I will assume we are in the DCJ operation model, while generalization to other models will be discussed afterwards.

\begin{lemma}
\label{lem:possibleop}
On a given genome, there is a polynomial number of possible DCJ (with respect to the genome length).
\end{lemma}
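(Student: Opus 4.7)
The plan is to count DCJ operations by describing each one as a combinatorial object determined by a bounded amount of data, all of whose choices are polynomially many in the genome length.

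First I would recall the anatomy of a DCJ: it is specified by selecting two cut sites in the genome and then choosing how to reconnect the four exposed extremities. So I would bound the possibilities in two stages. Stage one: the number of possible cut sites. Since the total length of the genome is $n$ (with $n$ counting the markers, or equivalently $2n$ extremities), the number of adjacencies, plus telomeres, is $O(n)$; therefore the number of unordered pairs of cut sites is at most $\binom{O(n)}{2} = O(n^2)$. Stage two: once the two cuts are made, there are four exposed extremities and the number of ways to pair them into two new adjacencies is a fixed constant (at most three pairings, one of which is the identity), so this contributes only an $O(1)$ factor.

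Next I would take care of the degenerate cases hinted at in the halving discussion above, namely the operations that act ``on a single adjacency'' (splitting a linear chromosome into two, circularizing a linear chromosome, linearizing a circular one, or fusing chromosomes via a putative empty linear chromosome). These are subsumed by the same counting: choosing a single adjacency together with an empty-chromosome telomere pair still costs $O(n)$ choices and a constant number of reassemblies, and so contributes only an additional $O(n)$ term. Adding everything, the total number of distinct DCJ operations applicable to a given genome is bounded by $O(n^2)$.

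The main obstacle is not the counting itself, which is essentially one line, but convincing the reader that the enumeration is genuinely exhaustive: one has to be careful that the model's treatment of telomeres and of empty chromosomes does not secretly introduce an unbounded family of ``degenerate'' DCJs. I would therefore spend most of the written proof on making explicit that every DCJ, including fissions, fusions, circularisations and linearisations, is captured by the pair-of-cuts-plus-reassembly scheme above, possibly augmented by a single auxiliary empty linear chromosome (which adds only $O(1)$ extra cut sites). Once this case analysis is in place, the $O(n^2)$ bound and hence the polynomial statement of the lemma follow immediately.
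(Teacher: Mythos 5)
Your proof is correct and follows essentially the same route as the paper's: both count the $O(n^2)$ ways of placing two cut sites and observe that the number of reassemblies per placement is a constant. Your additional care about degenerate operations (empty chromosomes, fissions, circularisations) is a welcome refinement that the paper's one-line argument glosses over, but it does not change the approach or the bound.
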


\begin{proof}
A DCJ is defined on a \emph{fixed} number of breakpoints, $b=2$, independent of the genome length. It follows there are $O(n^b) = O(n^2)$ possible ways of placing them on the genome. Naturally, the fact there are two possible DCJ that can be performed for each breakpoint positionning doesn't matter as we still have $O(n^2)$ possible DCJ on a genome of length $n$\qed
\end{proof}

\begin{lemma}
\label{lem:polydist}
Optimal DCJ scenarios have a polynomial length (with respect to the genome length).
\end{lemma}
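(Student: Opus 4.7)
The plan is to establish a polynomial upper bound on the length of an optimal DCJ scenario by comparing it to the breakpoint distance. First I would observe that the breakpoint distance between any two genomes of length $n$ is at most $O(n)$, simply because a genome of length $n$ contains at most $O(n)$ adjacencies, and the breakpoint distance just counts adjacencies of the starting genome that are not adjacencies of the goal genome.

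Next, I would invoke the upper-bound argument already used in the \emph{Bounds} subsection: because a DCJ is a $2$-break operation and its two cuts together expose four extremities that may be glued in any of the three possible pairings, at every non-final step of any scenario one can always choose a DCJ that reconstructs at least one desired adjacency (select a breakpoint corresponding to an adjacency $(x~y)$ of the goal that is missing in the current genome, cut at $x\cdot$ and at $\cdot y$, and glue them together). Such a DCJ strictly decreases the breakpoint distance to the goal. Iterating yields a (possibly non-optimal) scenario of length at most the breakpoint distance, hence at most $O(n)$.

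From this, the optimal scenario, being no longer than this naive one, also has length $O(n)$, which is polynomial in the genome length. This is exactly what the lemma claims.

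The step I expect to be the only point requiring care is the second one: one must exhibit, at every non-terminal configuration, at least one DCJ that strictly decreases the number of missing adjacencies, so that the naive procedure terminates in at most $O(n)$ steps. For the standard DCJ model on signed genomes this is essentially immediate from the freedom in choosing the gluing, but the same reasoning applied to more restricted models (reversals-only, reciprocal translocations, prefix DCJ, etc.) would require an analogous ``progress lemma'' and is the only place where the generalization announced after the proof becomes non-trivial.
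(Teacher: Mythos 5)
Your proof is correct and follows essentially the same route as the paper: the paper's own proof simply states that the breakpoint distance, which is polynomial in the genome length, is an upper bound on the DCJ distance, and your proposal just unpacks why that upper bound holds (a DCJ can always reconstruct at least one missing adjacency). The extra care you flag about restricted operation models is a fair observation but not needed for the lemma as stated.
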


\begin{proof}
The breakpoint distance is an upperbound for the DCJ distance and it is a number that is polynomial with respect to the genome length.\qed
\end{proof}

\begin{theorem}
If the distance can be computed in polynomial time, then an optimal scenario can be as well.
\end{theorem}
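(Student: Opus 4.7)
The plan is a standard self-reducibility argument, leveraging Lemmas \ref{lem:possibleop} and \ref{lem:polydist} to reconstruct an optimal scenario one operation at a time.

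First I would observe the key local criterion: a DCJ operation $\sigma$ can serve as the first step of \emph{some} optimal scenario from a genome $G$ to the target if and only if $d(\sigma(G)) = d(G) - 1$, where $d(\cdot)$ denotes the distance from the current genome to the target. The ``if'' direction is immediate by triangle inequality combined with the fact that a single DCJ changes the distance by exactly one when it is optimal; the ``only if'' direction follows because any prefix of an optimal scenario is itself optimal on the residual instance.

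Next I would describe the algorithm: starting from $G_0 := G$, at step $i$ enumerate all candidate DCJs on $G_i$ (there are $O(n^2)$ of them by Lemma \ref{lem:possibleop}), apply each tentatively to obtain a genome $G_i'$, invoke the polynomial-time distance oracle to compute $d(G_i')$, and select any $\sigma$ such that $d(\sigma(G_i)) = d(G_i) - 1$. Set $G_{i+1} := \sigma(G_i)$ and iterate until $d(G_i) = 0$. Such a $\sigma$ always exists as long as $d(G_i) > 0$, because the true optimal scenario from $G_i$ must begin with some DCJ that strictly decreases the distance, and that DCJ is among the ones enumerated.

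For the complexity bound, I would combine the two lemmas: the loop runs at most $d(G) = O(n)$ times by Lemma \ref{lem:polydist}, each iteration tries $O(n^2)$ candidates, and each candidate requires one polynomial-time distance computation plus the $O(1)$-time application of a DCJ. The overall running time is therefore polynomial in $n$ times the cost of the distance oracle.

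I do not anticipate a real obstacle here; the only thing to flag is the genericity of the argument. The proof only uses that the operation is defined on a constant number of breakpoints (so Lemma \ref{lem:possibleop} gives polynomially many candidates) and that the breakpoint distance upper-bounds the true distance polynomially (so Lemma \ref{lem:polydist} applies). I would close by noting explicitly that the same template yields self-reducibility for reversals, block interchanges, $k$-breaks, and the other models discussed above, which is precisely the claim advertised before the theorem.
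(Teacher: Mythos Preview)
Your proposal is correct and follows essentially the same approach as the paper: enumerate all candidate operations at each step using Lemma~\ref{lem:possibleop}, select one that decreases the distance via the polynomial oracle, and iterate for the polynomially many steps guaranteed by Lemma~\ref{lem:polydist}. Your write-up is in fact more careful than the paper's, spelling out the local optimality criterion and the existence of a distance-decreasing move explicitly, but the underlying argument is identical.
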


\begin{proof}
If an operation decreases the distance, then it is optimal. An algorithm reconstructing an optimal scenario can always be built this way: At each step, we look for an optimal operation by trying all possible breakpoints positions, and compute the distance for the resulting genome. By lemma \ref{lem:possibleop} each step is polynomial iff the distance computation is polynomial, and by lemma \ref{lem:polydist}, there is a polynomial number of iterations. Thus, this algorithm is polynomial on the whole iff the distance computation is polynomial.\qed
\end{proof}

Naturally, this result can be generalized to other operation models, as long as 1) there is a polynomial number of possible operations at each step, whose resulting genomes can each be computed polynomially and 2) the distance itself is a polynomial number with respect to the genome length.

One can verify this holds for most common operation models such as reversals, translocations, block interchange, transpositions, and also k-breaks.

As a final note, I'll add that the resulting scenario computing algorithm is far from efficient and that its interest is most likely limited to this self-reducibility result and its implications.

\section{Model I: Breakpoint duplication}

This work has been published in \cite{Thomas11}.

Minor revisions have been made since, giving more information about the orientation cost for reversals, a proof has been rewritten, and vocabulary has been fixed (the algorithm I designed for this problem was a fixed-parameter tractable (FPT) algorithm, thanks to Laurent Bulteau for pointing out this fact).

Since I could develop the dual layered view in section \ref{sec:dl}, I also added a new section explaining the DCJ distance formula and complexity for this problem using this tool.

I proved the genome dedoubling problem is NP-hard for DCJ and reversals, and gave a FPT algorithm in the number of cycles to solve it under DCJ, and another one under reversals for a subclass of genomes.

\subsection{Biological motivation}

Gene duplication is an important source of variations in eukaryotes.
Recently, several studies have highlighted biological evidence for 
abundant segmental duplications that occur around breakpoints of 
rearrangement events in mammalians \cite{B04, D11}, and in Drosophila species group \cite{R07} \cite{M05} \cite{R05} \cite{M09}.

You might refer to these papers or to \cite{Thomas11} for more details.

\subsection{Model}

\subsubsection{Considered genomes}

The genomes considered in this section are duplicated and totally duplicated genomes, as defined in section \ref{sec:notations2}.

I also introduce a particular kind of duplicated genome, namely \emph{dedoubled genomes}.

\begin{definition}
A \emph{dedoubled genome} is a duplicated genome $G$ such that for any
duplicated marker $x$ in $G$,  either $(x~~\snd{x})$,  or  $(\snd{x}~~x)$ is an
adjacency of $G$.
\end{definition}

For example, $G = (\circ~-1~\msnd{1}~2~\circ)~(~4~\snd{4}~\snd{3}~3~)$ is a dedoubled genome with 3 duplicated markers.

\subsubsection{Considered operations}

I use two operation models in this section: DCJ, and reversals.

I recall that reversals are particular kinds of DCJ.

I also define new operations as follows:

A \emph{1-breakpoint-duplication DCJ} (1-BD-DCJ) operation on a genome $G$ is 
a rearrangement operation that alters two different adjacencies $(a~~b)$  
and $(c~~d)$ of $G$, by:
\begin{itemize}
\item first adding marker $\snd{a}$  at the appropriate position to produce 
segment $(a~~\snd{a}~~b)$,
\item then applying a DCJ operation that cuts adjacencies $(a~~\snd{a})$ and 
$(c~~d)$ to produce either  $(a~~d)$ and $(c~~\snd{a})$, or  $(a~~{-c})$ and 
$(\msnd{a}~~d)$. 
\end{itemize}

A \emph{2-breakpoint-duplication DCJ} (2-BD-DCJ) operation on a genome $G$ 
is a rearrangement operation that alters two different adjacencies 
$(a~~b)$  and $(c~~d)$ of $G$, by:
\begin{itemize}
\item first adding markers  $\snd{a}$ and  $\snd{c}$ at the appropriate 
positions to produce segments  $(a~~\snd{a}~~b)$ and $(c~~\snd{c}~~d)$,
\item then applying a DCJ operation that cuts adjacencies $(a~~\snd{a})$ 
and $(c~~\snd{c})$ to produce either $(a~~\snd{c})$ and $(c~~\snd{a})$, or  
$(a~~{-c})$ and $(\msnd{a}~~\snd{c})$.
\end{itemize}

In this context, a regular DCJ could also be regarded as a 0-BD-DCJ. I will include this possibility in the following general definition of a BD-DCJ, since regular DCJ can also be used in BD-DCJ scenarios.

\begin{definition}
A \emph{breakpoint-duplication DCJ} (BD-DCJ) operation on a genome $G$ is 
either a DCJ, a 1-BD-DCJ operation, or a 2-BD-DCJ operation.
\end{definition}

In the sequel, if some markers are duplicated by a BD-DCJ operation, they 
are indicated in bold font in the initial genome. 
For example, the following rearrangement is a 2-BD-DCJ operation that acts 
on adjacencies $({-2}~~{-1})$ and $(4~~{-3})$, and duplicates markers $2$ and 
$4$. The intermediate step resulting in the duplication of markers $2$ and $4$
is shown above the arrow.

$$ (1~\breakpoint~\mathbf{2}) ~ (\circ~~3~~\breakpoint ~\mathbf{-4}~~\circ) \overset{(1~\cdot~\snd{2}~\breakpoint~2) ~ (\circ~~3~\cdot~\msnd{4}~\breakpoint -4~~\circ)}{\rightarrow} (\circ~~3~\cdot~\msnd{4}~\cdot~2~~1~\cdot~\snd{2}~\cdot~{-4}~~\circ)$$

To summarize, a BD-DCJ operation consists of a \emph{first step} in which
zero, one or two markers are duplicated, followed by a \emph{second step} where
a  DCJ operation is applied. Similarly, we now define a 
\emph{breakpoint-duplication reversal} (BD-reversal) operation.

\begin{definition}
A \emph{breakpoint-duplication reversal} (BD-reversal) operation on a genome 
$G$ is a BD-DCJ operation such that the DCJ operation applied in the second 
step of the BD-DCJ operation is a reversal.
\end{definition}

For example, the following rearrangement is a BD-reversal that is a 
1-BD-DCJ operation that acts on adjacencies $(2~~{-1})$ and $({-3}~~4)$, 
and duplicates marker $2$.
$$ (\circ~~1~\breakpoint~\mathbf{-2}~~{-3}~\breakpoint~4~~\circ)\overset{(\circ~~1~\cdot~{-\snd{2}}~\breakpoint~{-2}~{-3}~\breakpoint~4~~\circ)}{\rightarrow} (\circ~~1~\cdot~{-\snd{2}}~\cdot~3~~2~\cdot~4~~\circ) $$

A \emph{BD-DCJ scenario} (resp. \emph{BD-reversal scenario}) between a 
non-duplicated genome $A$ and a duplicated genome $B$ is a sequence composed 
of BD-DCJ (resp. BD-reversal) operations allowing to transform $A$ into $B$.

\begin{definition}
Given a non-duplicated genome $A$ and a duplicated genome $B$, 
the \emph{BD-DCJ distance} (resp. \emph{BD-reversal distance}) 
between $A$ and $B$
is the minimal length of a  BD-DCJ (resp. BD-reversal)
scenario between $A$ and $B$.
\end{definition}

We now give an obvious, but useful property allowing to reduce a 
BD-DCJ scenario to a DCJ scenario.

\begin{proposition}
\label{BD-DCJtoDCJ}
Given  a non-duplicated genome $A$ and a duplicated genome $B$, 
for any a BD-DCJ (resp. BD-reversal) scenario between $A$ and $B$, 
there exists a DCJ (resp. reversal) scenario of same length between a 
dedoubled genome $D$ and $B$ such that the reduction of $D$ is $A$ ($D^R = A$).
\end{proposition}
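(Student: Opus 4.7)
The plan is to take a BD-DCJ scenario $S = (o_1, \dots, o_k)$ from $A$ to $B$, pre-load in $A$ all the paralogs that $S$ will eventually create, and then show that the stripped-down DCJ second steps of $S$ form a valid DCJ scenario of the same length from this pre-loaded genome $D$ to $B$. First I would identify the set $P$ of markers duplicated somewhere in $S$. Because $A$ is non-duplicated and $B$ has at most two copies of each marker, each $x \in P$ is duplicated by a unique operation $o_{i(x)}$, which also prescribes the tandem orientation in which $\snd{x}$ enters the genome (either $(x~\snd{x})$ or $(\snd{x}~x)$, placed between $x$ and one of its current neighbours at step $i(x)$). I build $D$ from $A$ by inserting $\snd{x}$ next to $x$ with the matching orientation for every $x \in P$. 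By construction $D$ is a dedoubled genome, and erasing the inserted paralogs gives back $A$, so $D^R = A$. I then form $S' = (o'_1, \dots, o'_k)$ by taking $o'_i$ to be the underlying DCJ (or reversal) of the second step of $o_i$, so that $|S'| = |S|$; for the reversal version, $o'_i$ is a reversal whenever $o_i$ is a BD-reversal by definition.

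The core claim to prove inductively is that the intermediate genome $G'_j$ reached by applying $o'_1, \dots, o'_j$ to $D$ is obtained from $G_j = (o_j \circ \dots \circ o_1)(A)$ by inserting, beside every marker $x \in P$ with $i(x) > j$, the paralog $\snd{x}$ in the tandem position determined by $o_{i(x)}$. A clean way to formalise this is a bijection between extremities of $G_j$ and the outer extremities of tandems in $G'_j$, under which adjacencies of $G_j$ correspond to outer--outer adjacencies of $G'_j$, the remaining adjacencies of $G'_j$ being exactly the tandem adjacencies of still-untreated paralogs. The induction step then splits into three cases according to whether $o_{j+1}$ is a $0$-, $1$-, or $2$-BD-DCJ: the two adjacencies that $o_{j+1}$ cuts in $G_j$ correspond to two adjacencies of $G'_j$ (outer--outer, one outer and one tandem, or both tandem, respectively), and the re-gluing of the exposed extremities matches on both sides because the DCJ performed in the second step of $o_{j+1}$ is literally $o'_{j+1}$.

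The main obstacle is verifying that each tandem $(x~\snd{x})$ present in $D$ is still intact in $G'_{i(x)-1}$ when $o'_{i(x)}$ needs to cut it. This reduces to checking that no earlier $o'_j$ (with $j < i(x)$) can touch $(x~\snd{x})$. But $\snd{x}$ does not exist in the BD-DCJ world before step $i(x)$, so the adjacencies mentioned in the description of $o_j$, and therefore of $o'_j$, do not involve $\snd{x}$; under the extremity bijection only outer--outer adjacencies are affected, and the tandem $(x~\snd{x})$ survives untouched until step $i(x)$. Once all $k$ steps are processed, no pre-placed tandem remains and $G'_k = G_k = B$, which completes the proof in both the BD-DCJ and the BD-reversal case.
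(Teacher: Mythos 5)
Your proposal is correct and follows essentially the same route as the paper's proof: build $D$ by pre-inserting each paralog $\snd{x}$ next to $x$ in the tandem position prescribed by the operation of $S$ that duplicates $x$, then reuse the second-step DCJs (resp.\ reversals) of $S$ in the same order. The paper states this construction and asserts its correctness without detail, whereas you additionally supply the inductive invariant and the check that pre-placed tandems survive untouched until their designated step, which is a welcome but not divergent elaboration.
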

\begin{proof} 

Let $S$ be a BD-DCJ (resp. BD-reversal) scenario between $A$ and $B$.
$D$ is the genome obtained from $A$, by adding, for any marker $x$ 
duplicated by a BD-DCJ operation in $S$, the 
marker $\snd{x}$ in a way to produce either adjacency $(\snd{x}~x)$, 
or $(x~\snd{x})$ as done in $S$.

It is easy to see that $D^R = A$. The DCJ (resp. reversal) scenario
between $D^R$ and $B$ having the same length as $S$, is the sequence of
DCJ  (resp. reversal) contained in $S$ or in BD-DCJ (resp. BD-reversal) 
operations of $S$, with the same order as in $S$.\qed
\end{proof}

\noindent For example, in the following, a BD-reversal scenario between
$A=  (\circ ~ 1~~2~~3~~4~~5~ {\circ})$
and $B = (\circ ~ 1~\msnd{4}~~2~~\snd{3}~\msnd{5} ~\msnd{2}~ \msnd{1}~~4~{-3}~~5~ {\circ})$ is transformed into a reversal scenario between
$D = (\circ ~ 1~~\snd{1}~~\snd{2}~~2~~\snd{3}~~3~~\snd{4}~~4~~\snd{5}~~5~ {\circ})$ and $B$.

$$
\begin{array}{lll}
\text{BD-reversal  scenario} &~~& \text{Reversal ~ scenario}
\\
A = (\circ ~ \mathbf{1}~\breakpoint~2~~3~\breakpoint~\mathbf{4}~~5~ {\circ}) & & D = (\circ ~ 1~\breakpoint~\snd{1}~~\snd{2}~~2~~\snd{3}~~3~~\snd{4}~\breakpoint~4~~5~~\snd{5}~ {\circ})

\\(\circ ~1~\msnd{4}~\breakpoint~{-3}~\mathbf{-2}~\breakpoint~\msnd{1}~~4~~5~ {\circ}) & & (\circ ~ 1~\msnd{4}~\breakpoint~{-3}~\msnd{3}~{-2}~\breakpoint~\msnd{2}~\msnd{1}~~4~~5~~\snd{5}~ {\circ})

\\
(\circ ~1~\msnd{4}~~2~~3~\breakpoint~\msnd{2}~\msnd{1}~~4~\breakpoint~5~ {\circ}) & & (\circ ~ 1~\msnd{4}~~2~~\snd{3}~~3~\breakpoint~\msnd{2}~\msnd{1}~~4~\breakpoint~~5~~\snd{5}~ {\circ})
\\
(\circ ~1~\msnd{4}~~2~\breakpoint~~\mathbf{3}~{-4}~~\snd{1}~~\snd{2}~~\mathbf{5}~ \breakpoint~{\circ}) & & (\circ ~ 1~\msnd{4}~~2~~\snd{3}~\breakpoint~3~{-4}~~\snd{1}~~\snd{2}~5~\breakpoint~\snd{5}~ {\circ})

\\(\circ ~1~\msnd{4}~~2~~\snd{3}~{-5}~\msnd{2}~\msnd{1}~~4~{-3}~~\snd{5}~ {\circ}) & & (\circ ~ 1~\msnd{4}~~2~~\snd{3}~{-5} ~\msnd{2}~ \msnd{1}~~4~{-3}~~\snd{5} ~{\circ})
\end{array}
$$

\subsection{Genome Dedoubling}
\label{sec:dedoubling}

I now state the genome dedoubling problem.

\begin{definition}
Given a duplicated genome $G$, the \emph{DCJ (resp. reversal) genome 
dedoubling problem} consists in finding a non-duplicated genome $H$ such 
that the BD-DCJ (resp. BD-reversal) distance between $H$ and $G$ is minimal.
\end{definition}

Given a duplicated genome $G$, we denote by $d_{dcj}(G)$ (resp. $d_{rev}(G)$),
the minimum BD-DCJ (resp. 
BD-reversal) distance between any non-duplicated genome and $G$.
From Proposition \ref{BD-DCJtoDCJ}, the following proposition is straigthforward.

\begin{proposition}
\label{BD-dedoubletoDCJ-dedouble}
Given a duplicated genome $G$, the DCJ (resp. reversal) genome dedoubling problem on $G$ is equivalent to finding  a dedoubled genome $D$ such that the DCJ  (resp. reversal) distance between  $D$ and $G$ is minimal.
\end{proposition}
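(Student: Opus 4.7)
My plan is to show that the two optimal values coincide by proving both inequalities.

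For the inequality $\min_{D \text{ dedoubled}}\, d_{DCJ}(D, G) \leq d_{dcj}(G)$, I would apply Proposition~\ref{BD-DCJtoDCJ} directly. Take a non-duplicated genome $H$ realising $d_{dcj}(G)$ together with an optimal BD-DCJ scenario of that length from $H$ to $G$. Proposition~\ref{BD-DCJtoDCJ} produces a dedoubled $D$ with $D^R = H$ and a DCJ scenario of the same length from $D$ to $G$. Hence $d_{DCJ}(D, G) \leq d_{dcj}(G)$, and taking the minimum over dedoubled genomes gives the inequality.

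For the reverse inequality $d_{dcj}(G) \leq \min_{D \text{ dedoubled}}\, d_{DCJ}(D, G)$, the idea is to build a converse of Proposition~\ref{BD-DCJtoDCJ}: given a dedoubled $D$ with an optimal DCJ scenario $s_1, \ldots, s_d$ to $G$, I would construct a BD-DCJ scenario from $D^R$ to $G$ of length $d$. The construction defers the introduction of each duplicated marker: for every $x$ with $(x~~\snd{x})$ or $(\snd{x}~~x)$ an adjacency of $D$, let $\tau(x)$ be the first step at which $s_{\tau(x)}$ cuts this dedoubling adjacency. In the BD-DCJ scenario being built, $\snd{x}$ stays absent until step $\tau(x)$, and any earlier step that would otherwise refer to an adjacency through $\snd{x}$ is simply replayed by a regular DCJ acting on the contracted adjacency of the current (non-duplicated for $x$) genome. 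At step $\tau(x)$ itself, the DCJ $s_{\tau(x)}$ is replaced by a 1-BD-DCJ that introduces $\snd{x}$ next to $x$ and performs the matching cut-and-join; if $\tau(x) = \tau(y)$ for two distinct duplicated markers, a 2-BD-DCJ is used instead.

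The main technical obstacle will be to verify that the BD-DCJ assembled at step $\tau(x)$ really replicates $s_{\tau(x)}$: because $s_{\tau(x)}$ cuts exactly $(x~~\snd{x})$ in the DCJ version, the two rejoinings allowed by the DCJ model match the two rejoinings offered by the 1-BD-DCJ (respectively the 2-BD-DCJ) after insertion, which a careful bookkeeping of marker extremities and signs confirms exhausts all cases. A second subtlety arises when a dedoubling adjacency is never cut by the scenario, i.e.\ when $\tau(x) = \infty$: this forces $(x~~\snd{x})$ to be already present in $G$, and one argues that an optimal dedoubled $D$ can always be chosen so that no dedoubling adjacency is left untouched by some optimal scenario (otherwise one modifies $D$ locally without increasing the distance). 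Combining both inequalities yields the claimed equivalence.
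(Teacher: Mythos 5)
The paper offers no proof of this proposition beyond the remark that it is ``straightforward'' from Proposition~\ref{BD-DCJtoDCJ}, and what that proposition actually delivers is exactly your first inequality, $\min_D d_{DCJ}(D,G) \le d_{dcj}(G)$; your argument for that half is the intended one. You are also right that the converse inequality is the genuinely non-trivial half, and your deferral construction (keep $\snd{x}$ contracted into $x$ until the first step $\tau(x)$ that cuts the dedoubling adjacency, then realise that step as a 1- or 2-BD-DCJ) is the correct way to read the proof of Proposition~\ref{BD-DCJtoDCJ} backwards. The case analysis of the rejoinings does check out: a DCJ cutting $(a~~\snd{a})$ and $(c~~d)$ has precisely the two non-trivial outcomes listed in the definition of a 1-BD-DCJ, and similarly for the 2-BD-DCJ.

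The gap is in your treatment of $\tau(x)=\infty$. Your fix --- that an optimal dedoubled $D$ can always be chosen so that no dedoubling adjacency is left untouched by some optimal scenario --- is asserted rather than proved, and it is false in general. If $G$ itself contains $(x~~\snd{x})$ as an adjacency (in the extreme, if $G$ is already a dedoubled genome), then $D=G$ attains $d_{DCJ}(D,G)=0$ and every dedoubling adjacency is untouched; no local modification of $D$ restores your property without increasing the DCJ distance. On the BD-DCJ side, however, $\snd{x}$ must be introduced by some operation of the scenario, and by definition the second step of a 1-BD-DCJ (resp.\ 2-BD-DCJ) cuts the freshly created adjacency $(x~~\snd{x})$ --- the identity rejoining is not among the allowed outcomes --- so a single operation cannot simply deposit the doublet, and at least one further operation is needed to re-form $(x~~\snd{x})$ in $G$. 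On such instances the two optimal values genuinely differ, so the literal equality you are trying to establish cannot be proved as written. The paper silently ignores this boundary case as well; to complete the argument you would have to either restrict the class of inputs $G$, or adopt a convention under which a BD-DCJ may leave the inserted doublet intact, and then redo the $\tau(x)=\infty$ case under that convention.
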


The next proposition describes a further reduction of the genome dedoubling problem on a duplicated genome $G$.

\begin{proposition}
\label{partial-duptototal-dup}
Given a duplicated genome $G$, the \emph{DCJ (resp. reversal) genome dedoubling problem} on $G$ is equivalent  to the \emph{DCJ (resp. reversal) genome dedoubling problem} on the totally duplicated genome $G^T$ obtained from $G$ by replacing every maximal subsequence of non-duplicated markers beginning with a marker $x$ by the pair $x~\snd{x}$.
\end{proposition}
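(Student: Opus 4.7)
The plan is to invoke Proposition~\ref{BD-dedoubletoDCJ-dedouble} to rephrase the dedoubling problem as the search for a dedoubled genome $D$ (resp.\ $D^T$) minimizing the DCJ (resp.\ reversal) distance to $G$ (resp.\ $G^T$), and then to exhibit a distance-preserving bijection between the candidate dedoubled genomes on either side.

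First I would define the canonical correspondence $D \leftrightarrow D^T$. Since a dedoubled genome for $G$ has the same set of markers as $G$ (dedoubling only adds paralogs to already-duplicated markers), every maximal run $R = (x_1~x_2~\ldots~x_k)$ of non-duplicated markers of $G$ also appears as a non-duplicated run in $D$. Map $D$ to the genome $D^T$ obtained by replacing every such run by the tandem pair $(x_1~\snd{x_1})$; this pair is a valid double-adjacency in a dedoubled genome because $x_1$ is duplicated in $G^T$. The map is clearly a bijection between dedoubled genomes of $G$ and dedoubled genomes of $G^T$, and it is compatible with the transformation $G \mapsto G^T$ in the sense that the reduction of $D^T$ expands (uniquely) to the reduction of $D$.

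Next I would prove that $d_{dcj}(D,G) = d_{dcj}(D^T,G^T)$, and similarly for reversals, by translating scenarios in both directions. In the easy direction, given any DCJ (or reversal) scenario from $D^T$ to $G^T$, expanding each tandem pair $(x_1~\snd{x_1})$ back into the full run $R$ produces a scenario from $D$ to $G$ of the same length: every cut performed at either endpoint of the pair translates verbatim to a cut at the corresponding endpoint of $R$, and no internal adjacency of $R$ is ever touched. In the other direction, starting from a scenario from $D$ to $G$, one applies a standard exchange argument to rewrite it into an equivalent-length scenario in which no operation cuts strictly inside $R$: since all internal adjacencies of $R$ are identical in $D$ and $G$, they appear as trivial $2$-cycles (or, in graph terms, as fixed adjacencies) and any cut inside $R$ can be slid to the nearest endpoint of $R$ without changing the resulting intermediate genomes up to relabelling. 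Contracting $R$ to $(x_1~\snd{x_1})$ then yields a scenario of the same length from $D^T$ to $G^T$.

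The main obstacle will be making this exchange argument fully precise, in particular verifying that two cuts taking place inside a single non-duplicated run in the same operation can always be replaced by cuts at its endpoints while preserving the reversal property in the reversal case (the DCJ case is immediate since DCJ does not restrict the pair of cuts). Once the distance identity is established, the bijection $D \leftrightarrow D^T$ carries optimal dedoubled genomes of $G$ to optimal dedoubled genomes of $G^T$; taking reductions and invoking Proposition~\ref{BD-dedoubletoDCJ-dedouble} on both sides then gives the claimed equivalence of the two dedoubling problems.
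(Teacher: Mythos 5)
There is a genuine gap, and it sits exactly where the whole content of the proposition lies. Your argument opens by declaring a bijection $D \leftrightarrow D^T$ between dedoubled genomes of $G$ and dedoubled genomes of $G^T$, justified by the claim that ``every maximal run $R$ of non-duplicated markers of $G$ also appears as a non-duplicated run in $D$.'' That is not true by definition: a dedoubled genome $D$ is only required to place each duplicated marker next to its paralog; it is free to scatter the non-duplicated markers of $R$ anywhere, and an adversarially chosen (or, a priori, an optimal) $D$ might do exactly that. So your map is only defined on the subset of \emph{run-preserving} dedoubled genomes, and the minimization over that subset is not obviously the same as the minimization over all dedoubled genomes of $G$. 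Your later exchange argument does not rescue this, because it is stated under the hypothesis that ``all internal adjacencies of $R$ are identical in $D$ and $G$'' --- i.e.\ it presupposes the very run-preservation property that needs to be proved. What is missing is the one substantive claim in the paper's own (admittedly terse) proof: that there always exists an \emph{optimal} scenario, equivalently an optimal $D$, that preserves the consecutivity of each maximal block of non-duplicated markers, which the paper derives from the fact (established by the later graph analysis) that an optimal dedoubling scenario never needs to destroy an already-formed target adjacency, so the internal adjacencies of $R$ can be left untouched without loss.

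Once that preservation lemma is in hand, the rest of your plan (collapse each run to $x~\snd{x}$, translate scenarios verbatim in both directions, handle the reversal case by checking that cuts slid to the endpoints of $R$ still form a reversal) is sound and is essentially a more explicit version of what the paper does implicitly. I would also flag a minor point: the expansion of a dedoubled genome of $G^T$ back to one of $G$ must account for the orientation and order of the pair ($(x~\snd{x})$ versus $(\snd{x}~x)$ and their reversals), so the correspondence needs a small convention to be a genuine bijection; this is routine but should be stated.
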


\begin{proof}
By definition, building adjacencies of the type $\afs{x}{x}$ or $\asf{x}{x}$is the focus of breakpoint-duplication rearrangement scenarios, and as it will be shown in the next sections, destroying such already formed adjacencies is never needed. Therefore, there exists an optimal scenario that preserves the consecutivity of unduplicated markers grouped into a subsequence.\qed
\end{proof}

For example, solving the DCJ (resp. reversal) genome dedoubling problem on
 $G = (\circ ~ 1~~4~\mathbf{-7}~~\snd{1}~\mathbf{{-5} ~~10}~{-8}~ \msnd{4}~~\mathbf{2~~6~{-9} ~~3}~~\msnd{8}~ {\circ})$ 
is equivalent to solving it on
 $G^T = (\circ ~ 1~~4~\mathbf{{-7}~\msnd{7}}~~\snd{1}~\mathbf{{-5}~\msnd{5}}~{-8}~ \msnd{4}~~\mathbf{2~~\snd{2}}~~\msnd{8}~ {\circ})$.\\
The transformations applied on $G$ to obtain $G^T$ are indicated in bold font.

In the sequel, $G$ will always denote a totally duplicated genome, and 
we focus in Sections \ref{sec:DCJ-dedoubling} and \ref{sec:Reversal-dedoubling}  on the problem of finding  a dedoubled genome $D$ such that the DCJ  (resp. reversal) distance between  $D$ and $G$ is minimal.

\subsection{DCJ}
\label{sec:DCJ-dedoubling}

In this section, $G$ denotes a totally duplicated genome. In order to give
a formula for the DCJ dedoubling distance of $G$, $d_{dcj}(G)$, we use a 
graph called the  \emph{dedoubled adjacency graph} of $G$.

\subsubsection{Dedoubled adjacency graph}

\begin{definition}
The \emph{dedoubled adjacency graph} of $G$, denoted by $\mathcal{A}(G)$, 
is the graph whose vertices are the adjacencies of $G$, and for any marker $x$ 
there is one edge between  the vertices $\aff{x}{u}$ and  $\afs{v}{x}$, 
and  one edge between  the vertices $\aff{y}{x}$ and  $\asf{x}{z}$.
\end{definition}

An example of dedoubled adjacency  graph is depicted in 
Fig. \ref{fig:adjacency}. In the following, we will simply refer to 
dedoubled adjacency graphs as
adjacency graphs.

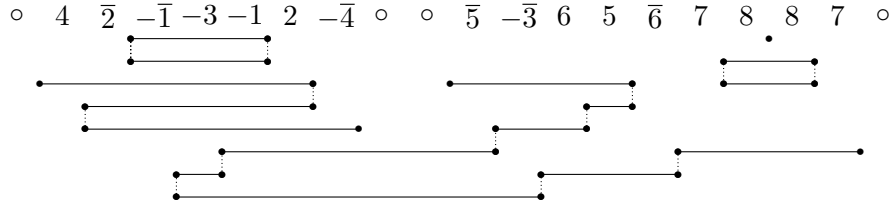
\begin{figure}[htbp]
    \centering
    
\begin{tikzpicture}[xscale=0.3,yscale=0.3]
    %\draw[style=help lines] (0,0) grid (38,-10);
    
    \foreach \i/\v in {0/\circ,2/4,4/\snd{2},6/\msnd{1},8/{-3},10/{-1},12/2,14/\msnd{4},16/\circ,18/\circ,20/\snd{5},22/\msnd{3},24/6,26/5,28/\snd{6},30/7,32/8,34/8,36/7,38/\circ} {
      \draw (\i,0) node {$\v$};
    };

    \fontsize{4pt}{4pt}\selectfont
    %\tikzstyle{every node}=[circle,draw,fill]

    \draw (33,-1) node {$\bullet$};

    \draw 
    (31,-2) node {$\bullet$} --
    (35,-2) node {$\bullet$};

    \draw[densely dotted]
    (35,-2) node {$\bullet$} --
    (35,-3) node {$\bullet$};
    \draw
    (35,-3) node {$\bullet$} --
    (31,-3) node {$\bullet$};
    \draw[densely dotted] 
    (31,-3) node {$\bullet$} -- 
    (31,-2) node {$\bullet$};

    \draw 
    (5,-1) node {$\bullet$} --
    (11,-1) node {$\bullet$};
    \draw[densely dotted] 
    (11,-1) node {$\bullet$} --
    (11,-2) node {$\bullet$};
    \draw
    (11,-2) node {$\bullet$} --
    (5,-2) node {$\bullet$};
\draw[densely dotted]  
   (5,-2) node {$\bullet$} --
   (5,-1) node {$\bullet$} -- cycle;

    \draw   
    (1,-3)  node {$\bullet$} --
    (13,-3) node {$\bullet$};
\draw[densely dotted]  
    (13,-3) node {$\bullet$} --
    (13,-4) node {$\bullet$};
\draw
    (13,-4) node {$\bullet$} --
    (3,-4)  node {$\bullet$};
\draw[densely dotted]  
    (3,-4)  node {$\bullet$} --
    (3,-5)  node {$\bullet$};
\draw
    (3,-5)  node {$\bullet$} --
    (15,-5) node {$\bullet$};

    \draw 
    (19,-3) node {$\bullet$} --
    (27,-3) node {$\bullet$};
    \draw[densely dotted]   
    (27,-3) node {$\bullet$} --
    (27,-4) node {$\bullet$};
    \draw
    (27,-4) node {$\bullet$} --
    (25,-4) node {$\bullet$};
    \draw[densely dotted]   
    (25,-4) node {$\bullet$} --
    (25,-5) node {$\bullet$};
    \draw
    (25,-5) node {$\bullet$} --
    (21,-5) node {$\bullet$};
    \draw[densely dotted]      
    (21,-5) node {$\bullet$} --
    (21,-6) node {$\bullet$};
    \draw
    (21,-6) node {$\bullet$} --
    (9, -6) node {$\bullet$};
    \draw[densely dotted]  
    (9, -6) node {$\bullet$} --
    (9, -7) node {$\bullet$};
    \draw
    (9, -7) node {$\bullet$} --
    (7, -7) node {$\bullet$};
    \draw[densely dotted]  
    (7, -7) node {$\bullet$} --
    (7, -8) node {$\bullet$};
    \draw
    (7, -8) node {$\bullet$} --
    (23,-8) node {$\bullet$};
    \draw[densely dotted]  
    (23,-8) node {$\bullet$} --
    (23,-7) node {$\bullet$};
    \draw
    (23,-7) node {$\bullet$} --
    (29,-7) node {$\bullet$};
    \draw[densely dotted]  
    (29,-7) node {$\bullet$} --
    (29,-6) node {$\bullet$};
    \draw
    (29,-6) node {$\bullet$} --
    (37,-6) node {$\bullet$};

\end{tikzpicture}

\caption{The adjacency graph of $G=(\circ ~ 4~~\snd{2}~\msnd{1}~{-3}~{-1}~~2~\msnd{4}~ {\circ})~~ (\circ ~\snd{5}~\msnd{3}~~6~~5~~\snd{6}~~7~~8~~8~~7 {\circ})$}
\label{fig:adjacency}
\end{figure}

Note that all vertices in  $\mathcal{A}(G)$ have degree one or two.
Thus, the connected components of  $\mathcal{A}(G)$ are only paths 
and cycles. These paths and cycles are called \emph{elements} of 
$\mathcal{A}(G)$.

Given a couple of paralogous markers $(x,\snd{x})$, an element
of the graph $\mathcal{A}(G)$  is said to \emph{contain} the couple 
$(x,\snd{x})$ if it contains the edge linking vertices $\aff{x}{u}$ and  
$\afs{v}{x}$, or the edge linking vertices  $\aff{y}{x}$ and  $\asf{x}{z}$.

By definition,  a couple $(x,\snd{x})$ can possibly be contained in only one
element $A$ of $\mathcal{A}(G)$ if said element $A$ contains both edges  
$(\aff{x}{u},\afs{v}{x})$ and  $(\aff{y}{x},\asf{x}{z})$. In this case,
$A$ is said to \emph{contain twice} the couple $(x,\snd{x})$, and 
$A$ is called a \emph{duplicated} element of $\mathcal{A}(G)$. If an element 
$A$ contains no couple  $(x,\snd{x})$ twice, then it is called a 
\emph{non-duplicated} element of $\mathcal{A}(G)$.
If the two edges $(\aff{x}{u},\afs{v}{x})$ and  $(\aff{y}{x},\asf{x}{z})$
belong to two different elements $A$ and $B$ of $\mathcal{A}(G)$, then  
$A$ and $B$ will both contain $(x,\snd{x})$. In this case, we say that
$A$ and $B$ \emph{intersect}. If two elements $A$ and $B$ do not intersect, then we say that $A$ and $B$ are \emph{independent}.
For example in Fig. \ref{fig:adjacency} the two paths of the adjacency 
graph are duplicated, while the three cycles are non-duplicated. The leftmost
path and the leftmost cycle intersect because they both contain the couple 
$(2,\snd{2})$, while the two paths are independent.
 Given an element $A$ of $\mathcal{A}(G)$, the \emph{set induced by} $A$ 
is the set of couples $(x,\snd{x})$ contained in $A$.

\subsubsection{General sorting}

In this section, I prove the following theorem:

\begin{theorem}
\label{thm:general-dcj}
Let $n$ be the number of couples of paralogous markers in $G$. Let
$C_{i}$ be the maximum size of a subset of non-duplicated pairwise 
independent cycles in $\mathcal{A}(G)$.
The DCJ dedoubling distance of $G$ is $d_{dcj}(G) = n-C_{i}$. It is NP-hard to compute.
\end{theorem}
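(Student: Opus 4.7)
My plan is to apply the dual-layered framework of Section \ref{sec:dl}. Dedoubling $G$ means producing, for every paralog pair $(x,\snd{x})$, a dedoubled adjacency $(x~\snd{x})$ or $(\snd{x}~x)$, which in $\mathcal{A}(G)$ shows up as a self-loop at the vertex carrying that adjacency. Since there are $n$ such paralog adjacencies to reconstruct and a single DCJ reconstructs only one in the worst case, the number of expected operations is $n$. The distance will then be written as $n$ minus a sum of bonuses coming from preexisting structure in $\mathcal{A}(G)$, and the whole task reduces to identifying those bonuses correctly.

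For the upper bound $d_{dcj}(G) \leq n - C_i$, I would pick a maximum set $\mathcal{I}$ of $C_i$ pairwise independent non-duplicated cycles and describe an explicit scenario that ``consumes'' each $A \in \mathcal{I}$. The argument is that the alternation of paralog-pair edges inside such a cycle lets one dedouble all the pairs it contains using one DCJ fewer than the number of pairs, the final adjacency being produced for free when the cycle closes. Because the cycles of $\mathcal{I}$ are pairwise independent, their contributions add up; the paralog pairs outside $\bigcup \mathcal{I}$ each cost exactly one DCJ, yielding a total scenario length of $n - C_i$.

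For the lower bound $d_{dcj}(G) \geq n - C_i$, I would prove the invariant that a single DCJ cannot increase by more than one the maximum number of pairwise independent non-duplicated cycles of the current adjacency graph. This is in the spirit of the classical observation that one DCJ modifies the cycle count of a breakpoint graph by at most one, but here it must be combined with the non-duplication and pairwise-independence constraints, which is the delicate point. Since a dedoubled genome satisfies $C_i = n$ (each paralog self-loop is a trivially non-duplicated cycle and the self-loops are all mutually independent), the distance is at least $n - C_i$.

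For NP-hardness, I would reduce from Maximum Independent Set. Given an arbitrary graph $H = (V,E)$, I design a gadget genome $G_H$ in which each vertex $v \in V$ is realized as a non-duplicated cycle $C_v$ of $\mathcal{A}(G_H)$, sharing a paralog pair with $C_u$ precisely when $\{u,v\} \in E$; the vertex-cycles and edge-gadgets are padded with unique markers so that no other non-duplicated cycles contribute to $C_i$. Then $C_i(\mathcal{A}(G_H)) = \alpha(H)$, so any algorithm computing $d_{dcj}$ solves Maximum Independent Set. I expect the main obstacle to be precisely this gadget construction: forcing the intersection pattern of non-duplicated cycles to match $E$ exactly, with no spurious non-duplicated cycles inflating $C_i$; the invariant for the lower bound should then fall out of a careful case analysis on how the two cut adjacencies interact with the endpoints of paralog-pair edges.
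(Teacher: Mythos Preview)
Your plan is essentially the paper's proof. The upper bound via an explicit scenario that extracts one paralog $1$-cycle per DCJ from the chosen cycles (saving one operation per selected cycle as remainder) is exactly Algorithm~\ref{algo:general-dcj}; the lower bound via the invariant ``a DCJ changes $C_i$ by at most $1$'' together with ``$C_i=n$ for a dedoubled genome'' is exactly Property~\ref{prop:independent}. The paper treats that invariant quite lightly (a DCJ either extracts one new cycle or destroys one from any independent family), so your worry that it is the delicate point is perhaps overstated.

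The one place you diverge is the hardness source: you reduce from general Maximum Independent Set, the paper from \emph{$2$-frequency Maximum Set Packing}. These are the same reduction in disguise, since in $\mathcal{A}(G)$ each paralog pair contributes exactly two edges and hence can be shared by at most two cycles; the conflict graph of non-duplicated cycles is therefore an arbitrary simple graph, and set-packing with frequency $2$ is just independent set on that conflict graph. Your anticipated ``main obstacle'' --- preventing spurious non-duplicated cycles in the gadget --- has a clean resolution in the paper that you may have missed: after laying down the adjacencies that realise each intended cycle, put telomeres on every remaining free extremity. All unused edges then lie on paths, so the only cycles in $\mathcal{A}(G_H)$ are the intended vertex-cycles, and $C_i(\mathcal{A}(G_H))=\alpha(H)$ exactly.
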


For example, in Fig. \ref{fig:adjacency}, the maximum size of a 
subset of non-duplicated pairwise independent cycles is $2$ as there
are three cycles, and the two rightmost cycles intersect. The distance 
would then be $d_{dcj}(G) = 8-2=6$.

To prove Theorem \ref{thm:general-dcj}, I use the following properties:

\begin{property}
\label{prop:independent}
Let $n$ be the number of couples of paralogous markers in $G$.
\begin{enumerate}
\item The maximum size $C_i$ of a set of pairwise independent 
cycles in the graph $\mathcal{A}(G)$ is $n$, since each of the $n$
couples $(x,\snd{x})$ is 
contained in at most one cycle of a set of pairwise independent 
cycles. 
\item By definition of a dedoubled genome, if $G$ is a dedoubled genome, 
then the graph $\mathcal{A}(G)$
has $n$ non-duplicated pairwise independent cycles, 
each one containing a single couple of paralogous markers, plus possibly 
other cycles. Thus, in this case, $C_i = n$.
\item A DCJ operation can only alter the maximum size $C_i$ of a 
set of pairwise independent cycles by $-1$, $0$ or $+1$, because
a DCJ operation can only either extract
a new cycle that contributes to increase $C_i$ by $1$, or destroy a single 
cycle in any  set of pairwise independent cycles, thus decreasing 
$C_i$ by $1$, or leaves $C_i$ unchanged.
\end{enumerate}
\end{property}

Algorithm \ref{algo:general-dcj} is an algorithm that provides a $n-C_i$ 
length DCJ scenario transforming $G$ into a  dedoubled genome. It is FPT in the number of cycles in the graph.

\begin{algorithm}
\caption{Transforming a totally duplicated genome $G$ into a dedoubled genome by DCJ}
\label{algo:general-dcj}
\begin{algorithmic}[1]

\STATE Construct $\mathcal{A}(G)$.
\STATE Choose a maximum size set $S_i$ of non-duplicated pairwise independant cycles.
\FOR{Any couple $(x,\snd{x})$ of paralogous markers}
\IF{$(x,\snd{x})$ is contained in a cycle $c$ of $S_i$ containing more than one couple}
\STATE Perform the DCJ that creates adjacency $(x~\snd{x})$ or $(\snd{x}~x)$
by splitting $c$\\ into two cycles, one of the cycles containing only the couple
$(x,\snd{x})$. 
\STATE Replace $c$ in $S_i$ by the two new cycles.
\ELSE
\STATE Perform any DCJ that creates adjacency $(x~\snd{x})$ or $(\snd{x}~x)$, unless such adjacency is already present.
\ENDIF
\ENDFOR
\end{algorithmic}
\end{algorithm}

We now have all the pre-requisites to give the proof of Theorem  
\ref{thm:general-dcj}.

\begin{proof}{of Theorem  \ref{thm:general-dcj}.}
From Property \ref{prop:independent}, a DCJ operation cannot increase 
$C_i$ by more than $1$. Algorithm \ref{algo:general-dcj} provides a 
DCJ scenario transforming $G$ into a dedoubled genome, by increasing  
$C_i$ by $1$ at each DCJ operation until it reaches its upper bound $n$.  
Algorithm  \ref{algo:general-dcj} then provides a minimum length scenario
which is of length $n-C_i$ (any shorter scenario would necessarily increase $C_i$ by more than 1 at some point which is a contradiction).\qed
\end{proof}

\begin{lemma}
\label{lem:complecity-dcj}
Choosing a maximum size set of pairwise independant cycles is a NP-hard, APX-complete problem, approximable with an approximation ratio of $2$.
\end{lemma}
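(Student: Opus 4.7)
My plan is to recast the problem as maximum independent set in a derived conflict graph, then establish hardness and approximability in turn. Associate each non-duplicated cycle $c$ of $\mathcal{A}(G)$ with the set $S_c$ of paralogous couples $(x,\snd{x})$ having exactly one of their two edges in $c$. Two non-duplicated cycles $c,c'$ are independent precisely when $S_c\cap S_{c'}=\emptyset$. Since each paralogous couple contributes exactly two edges to $\mathcal{A}(G)$, and each edge sits in a unique element, every couple belongs to at most two sets $S_c$. Thus $C_i$ is the maximum independent set in the conflict (multi)graph $H$ whose vertices are the non-duplicated cycles and whose edges record shared couples between pairs of cycles; equivalently, it is a set packing instance with element frequency at most $2$.

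For NP-hardness and APX-hardness, I would reduce from maximum independent set in a bounded-degree graph class known to be APX-hard, such as cubic graphs. Given an input graph $G'$, I would build a totally duplicated genome $G$ whose adjacency graph $\mathcal{A}(G)$ realises $G'$ as its conflict graph: each vertex of $G'$ is encoded by a small \emph{cycle gadget}, i.e.\ a short non-duplicated cycle of $\mathcal{A}(G)$ with a prescribed half-couple signature, and each edge of $G'$ is encoded by a single paralogous couple shared between the two corresponding gadgets. Verifying that this yields an L-reduction (sizes and gaps scaling linearly, and an optimum-preserving back-translation) will transport both NP-hardness and APX-hardness to our problem.

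For membership in APX with ratio $2$, I would exploit the bounded frequency property: in the $2$-frequency set packing formulation, the hyperedge incidence matrix is a $\{0,1\}$-matrix with at most two $1$'s per column, which is well suited to a matching-based (or LP-rounding) argument on $H$. Concretely, compute a maximum matching in $H$ and round it into an independent set by discarding at most one endpoint per matching edge, giving a direct counting argument against any optimum independent set. Combined with APX-hardness above, this yields APX-completeness.

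The principal obstacle is the gadget design: one must simultaneously ensure that every intended cycle of $\mathcal{A}(G)$ is indeed a cycle (and not absorbed into a larger component), that no gadget cycle is accidentally duplicated, and that no spurious shared couples appear beyond those encoding edges of $G'$. Once such gadgets are in hand, the hardness transfer and the matching-based $2$-approximation follow by standard arguments, so essentially all the work is concentrated in this combinatorial encoding step.
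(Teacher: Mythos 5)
Your opening reformulation---$C_i$ as a maximum independent set in the conflict graph of non-duplicated cycles, equivalently a set-packing instance with element frequency at most $2$---is exactly the forward direction of the paper's argument, which establishes a two-way equivalence with the \emph{2-frequency Maximum Set Packing} problem and then imports APX-completeness from \cite{BF-95} and the ratio-$2$ approximation from \cite{H-83}. But your proposal has two genuine gaps beyond that point. The first is the hardness direction: you correctly identify that everything hinges on a gadget realising a prescribed conflict structure inside $\mathcal{A}(G)$, and then you leave that gadget unconstructed. That construction \emph{is} the proof here, not an implementation detail. The paper's converse reduction supplies it: create a pair $i,\snd{i}$ for each element, and for each set $(k_1,\dots,k_n)$ build the cyclic chain of adjacencies $(k_i~\snd{k_{i+1}})$ together with $(k_n~\snd{k_1})$ (switching to the paralogous extremity when one is already used), capping all remaining free extremities with telomeres; one must then verify that each chain closes into a non-duplicated cycle and that the only intersections between cycles are the intended shared couples. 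Without carrying out and checking such an encoding, no hardness is transported.

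The second gap is the approximation argument, which as stated is wrong. In a general conflict graph, deleting one endpoint of each edge of a maximum matching does not leave an independent set (a triangle already defeats it), and the set of unmatched vertices, while independent by maximality of the matching, can be empty when the optimum is $|V|/2$ (a disjoint union of edges). No counting argument against a maximum matching yields a constant factor for maximum independent set, and note that your own reformulation shows the conflict graph can be essentially arbitrary and of unbounded degree, so a generic graph-theoretic $2$-approximation is not available. The paper deliberately does not attempt one: it reduces to the $2$-frequency set-packing formulation and invokes the known $2$-approximation of \cite{H-83}, whose analysis uses the frequency-$2$ incidence structure (LP/half-integrality type bounds) rather than a matching in the conflict graph. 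If you want a self-contained argument, you must reproduce that structure; the matching heuristic cannot be repaired into a factor-$2$ guarantee.
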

\begin{proof}
We show the equivalence of the problem with a
 \emph{2-frequency Maximum Set Packing}, known to be APX-complete
\cite{BF-95} and 2-approximable \cite{H-83}. 
A \emph{2-frequency collection of sets} 
is a collection of finite sets such that each element of any set belongs 
to at most two sets of the collection. Given a 2-frequency collection $C_n$
of sets, the 2-frequency Maximum Set Packing problem on $C_n$
asks to find the maximum
number of pairwise disjoint sets in $C_n$.

Computing the maximum size $C_{i}$ of a subset of non-duplicated pairwise independent cycles in $\mathcal{A}(G)$ can obviously be reduced to the \emph{2-frequency 
Maximum Set Packing problem} on a 2-frequency collection $C_n$ of sets:

\begin{itemize}
\item Treat each non-duplicated cycle in $\mathcal{A}(G)$ as the set of its edges.
\item Transform them into sets from $C_n$ by replacing each edge $(x~\snd{x})$ or $(\snd{x}~x)$ by the element $x$.
\end{itemize}

Conversely, a 2-frequency collection $C_n$
of sets, containing elements in the form $(k_1,...,k_n)$, can be converted into a totally duplicated genome $G$ such that 
the non-duplicated cycles of $\mathcal{A}(G)$ induce the sets of $C_n$:

\begin{itemize}
\item Create markers $i$ and $\snd{i}$ for all distinct element from all sets.
\item For each set $(k_1,...,k_n)$ create all adjacencies $(k_i~\snd{k_{i+1}})$, as well as $(k_n~\snd{k_1})$ (if one marker extremity is already taken, use both paralogs instead, creating $(\snd{k_i}~k_{i+1})$ or $(\snd{k_n}~k_1)$). 
\item Put telomeres on all remaining free extremities.
\end{itemize}
\qed
\end{proof}

\begin{corollary}
\label{lem:hardness-dcj}
The Genome Dedoubling problem by DCJ is NP-complete. Algorithm 
\ref{algo:general-dcj} solves the problem 
in linear time complexity, except for the computation of the set 
of cycles $S_i$ that is 2-approximable.
\end{corollary}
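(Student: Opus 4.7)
The plan is to assemble the corollary's three claims from results already established. For NP-completeness, I would first note that Theorem~\ref{thm:general-dcj} already contains the NP-hardness statement; a clean argument is that the identity $d_{dcj}(G) = n - C_i$ together with the fact that $n$ can be read off the input in linear time implies that any polynomial algorithm for $d_{dcj}$ would also compute $C_i$, contradicting the NP-hardness established in Lemma~\ref{lem:complecity-dcj}. Membership in NP follows from the observations of Section~\ref{sec:distscenar}: an optimal BD-DCJ scenario is bounded in length by the breakpoint distance, hence is of polynomial size, and can be verified by replaying its operations on the input, so it serves as a polynomial-size certificate for the decision version ``is $d_{dcj}(G) \leq k$?''.

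For the algorithmic claim I would walk through the steps of Algorithm~\ref{algo:general-dcj}. Building $\mathcal{A}(G)$ takes linear time via a single scan over the adjacencies of $G$. Line~2 is precisely the NP-hard computation of Lemma~\ref{lem:complecity-dcj}, for which a 2-approximation is available through the reduction to 2-frequency Maximum Set Packing and the classical result used in its proof. The main loop executes exactly once per paralogous couple; with a pointer-based representation in which each couple $(x,\snd{x})$ stores direct links to the at most two edges of $\mathcal{A}(G)$ that represent it and, transitively, to the cycle of $S_i$ to which those edges belong, deciding which branch of the conditional applies and performing the corresponding DCJ costs $O(1)$. The bookkeeping update, which in the nontrivial case replaces a cycle $c \in S_i$ by the two cycles produced by the split, can also be handled in $O(1)$ by storing each cycle as a doubly-linked list of its edges. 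Correctness of the scenario length itself follows directly from Theorem~\ref{thm:general-dcj}, since each iteration increases $C_i$ by exactly one until it attains its maximum.

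The main obstacle is the amortized bookkeeping needed to make the loop run in overall $O(n)$ time rather than quadratic: a naive implementation that rescans $\mathcal{A}(G)$ or the set $S_i$ at each iteration would blow up, so the key is to commit to the right invariants and cross-references before entering the loop. Once these are in place each edge of $\mathcal{A}(G)$ is touched a constant number of times over the whole execution, and the claimed linear complexity outside the $S_i$ computation drops out immediately.
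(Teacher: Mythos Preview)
Your proposal is correct and follows the same approach the paper implies. The paper actually states this corollary with no proof at all, treating it as an immediate consequence of Theorem~\ref{thm:general-dcj} and Lemma~\ref{lem:complecity-dcj}; your write-up fleshes out precisely those implications, and in fact goes further than the paper by supplying an explicit NP-membership argument and a concrete discussion of the data structures needed to keep the loop linear, neither of which the paper spells out.
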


\begin{corollary}
As the computation of the set 
of cycles $S_i$ can be done in exponential time with respect to the total number of cycles, algorithm 
\ref{algo:general-dcj} is FPT in the number of cycles in $\mathcal{A}(G)$. Indeed, the number of cycles in the graph is independent from the genome length (since it is always possible to merge cycles with DCJ, it is possible to build arbitrarily long genomes with any fixed number of cycles).
\end{corollary}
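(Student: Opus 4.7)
The plan is to verify the two claims separately: first that Algorithm \ref{algo:general-dcj} is FPT when parameterized by the number of cycles $k$ in $\mathcal{A}(G)$, and second that $k$ is a genuine parameter in the sense that it does not grow with the genome length, so that the FPT status is non-trivial.

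For the FPT claim, I would build on Corollary \ref{lem:hardness-dcj}, which states that the only non-polynomial step of Algorithm \ref{algo:general-dcj} is the computation of a maximum size set $S_i$ of pairwise independent non-duplicated cycles. Since $S_i$ is a subset of the $k$ cycles of $\mathcal{A}(G)$, a brute-force enumeration inspects at most $2^k$ candidate subfamilies. For each candidate, testing pairwise independence amounts to checking, for every pair of cycles and every marker $x$, whether the same couple $(x,\snd{x})$ is shared; this costs $O(k^2\,n)$ time using the adjacency graph. The overall bound is $O(2^k \cdot k^2 \cdot n)$, which is FPT in $k$, and the remainder of Algorithm \ref{algo:general-dcj} is already linear in $n$ by Corollary \ref{lem:hardness-dcj}.

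For the second claim, I would exhibit a construction that produces arbitrarily long totally duplicated genomes whose adjacency graphs contain a fixed number of cycles. The key observation is that a DCJ acting on two adjacencies chosen in two distinct cycles merges those cycles into one, while a DCJ acting within a single cycle can enlarge it by inserting additional paralogous couples along its edges (formally: given a cycle passing through an adjacency $\aff{a}{b}$, one can insert a fresh couple $(y,\snd{y})$ so that the edges representing $y$ and $\snd{y}$ integrate into the same cycle rather than spawning a new one). Iterating these two moves starting from any totally duplicated genome shows that, for every fixed $k \geq 1$, totally duplicated genomes with arbitrarily many markers and exactly $k$ cycles exist; hence $k$ and $n$ are not functionally related.

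The main obstacle is the second claim rather than the FPT argument: one must be careful that the genome produced by the enlargement step remains totally duplicated and that the newly inserted couple genuinely extends an existing cycle instead of creating an isolated component. Once this is argued cleanly on a small explicit family (for example, $G_n = (\circ ~ 1 ~ 2 ~ \dots ~ n ~ \snd{1} ~ \snd{2} ~ \dots ~ \snd{n} ~ \circ)$, whose adjacency graph has a bounded number of cycles independent of $n$), the corollary follows directly.
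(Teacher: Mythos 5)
Your proposal is correct and follows essentially the same route as the paper, which justifies the corollary inline by (i) noting that $S_i$ can be found in time exponential only in the number of cycles, hence FPT, and (ii) observing that cycle-merging DCJs yield arbitrarily long genomes with a fixed number of cycles. You merely make both halves explicit — the $O(2^k\cdot\mathrm{poly}(n))$ enumeration of candidate subsets of cycles, and a concrete witness family (for which one can check the adjacency graph has at most one cycle regardless of $n$) — which is a useful elaboration but not a different argument.
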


\subsubsection{Using the dual layered view}
\label{sec:dedoublebonus}

In this section, I will use the ideas I developed in section \ref{sec:dl} to quickly establish the distance formula and thus show the same basic reasonings still hold for more complex problems.

First, the \emph{number of expected operations} is $n$. This is straightforward since we are trying to reconstruct $n$ doublets and a DCJ can always reconstruct one.

Now, in order to establish what is the \emph{sum of bonuses}, we have to keep the graph in mind.

We are trying to reconstruct 1-cycles for each of the markers, and although each marker appears twice in the graph, we only need one of them. This implies the following properties.

\begin{enumerate}
\item{The bonuses are all in the cycles, since only the cycles will eventually give a 1-cycle as remainder.}

\item{However, if a cycle contains a same marker twice, it won't count as a bonus. This is because having adjacency $(x \snd{x})$ as free remainder when adjacency $(\snd{x} x)$ is already present is not a bonus, since having both is not necessary.}

\item{More generally, and for the same reason, if we have a cycle that contains $(x \snd{x})$, then another cycle containing $(\snd{x} x)$ won't count as a bonus either.}
\end{enumerate}

It follows that there are as many bonuses as there are ``non-duplicated (property 2) independent (property 3) cycles (property 1)".

The minimum distance is reached through the maximum number of bonuses. Therefore we are looking for a maximum number of non-duplicated independent cycles. This is exactly the 2-frequency set-packing problem, which is NP-hard.

\subsubsection{Sorting between linear unichromosomal genomes}

In this section, as a first step to study genome dedoubling by reversal, we search for a minimum length DCJ scenario that transforms
$G$ into a dedoubled genome consisting of a single linear chromosome.

In this section and the sequel, $G$ denotes a totally 
duplicated genome consisting of a single linear chromosome. In this case,
the graph $\mathcal{A}(G)$ contains exactly \emph{one path, and possibly several
cycles}.

\begin{definition}
The path in $\mathcal{A}(G)$ is said to be \emph{valid} if it contains every 
couple $(x,\snd{x})$ of paralogous markers in $G$.
\end{definition}

A DCJ operation that merges a cycle $c$ of $\mathcal{A}(G)$ in the path $p$
is a DCJ operation that acts on an adjacency of $c$ and an adjacency of $p$,
thus gathering $c$ and $p$ into a longer path.

Note that if $G$ is a dedoubled genome, then the path in $\mathcal{A}(G)$ is 
necessarily valid. We call such a genome a \emph{dedoubled linear genome}. So, 
if 
the path in $\mathcal{A}(G)$ is not valid, then any DCJ scenario transforming 
$G$ into a dedoubled linear genome will merge, in the path, cycles
containing the couples  $(x,\snd{x})$ that are not contained in the path.

In the following, we always denote by $m$ the minimum number of cycles
required to make the path of $\mathcal{A}(G)$ valid. We also denote by $C_{i}$ 
the maximum size of a subset of non-duplicated pairwise independent cycles.
First, we have the following property:
 \begin{property}
\label{prop:cycles}
Let $C$ be the number of cycles in $\mathcal{A}(G)$. We have $C_i=C-m$.
\end{property}
\begin{proof}
Let $\mathcal{S}$ be the set of all cycles from $G$ ($|\mathcal{S}| = C$).

If $\mathcal{S}_m$ is a set of
cycles that can be merged in the path to make it valid, then  necessarily
$\mathcal{S}-\mathcal{S}_m$ is a set of non-duplicated pairwise independent cycles: 
\begin{itemize}
\item Any cycle $c$ contained in $\mathcal{S}-\mathcal{S}_m$ is non-duplicated, 
otherwise the path obtained after merging would not be valid as it would 
not contain any couple  of paralogous markers $(x,\snd{x})$ contained twice 
in $c$. 
\item Any two cycles $c_1$ and $c_2$ of $\mathcal{S}-\mathcal{S}_m$ are 
independent, otherwise the path obtained after merging would not be valid as it 
would not contain any couple of paralogous markers $(x,\snd{x})$ contained 
in both $c_1$ and $c_2$.

\end{itemize}
\qed
\end{proof}

From Property \ref{prop:cycles}, we then have the following lemma.
\begin{lemma}
\label{lem:linear-dcj}
Let $n$ be the number of couples of paralogous markers in $G$. 
Let $C$ be the number of cycles in $\mathcal{A}(G)$.
The  minimum length $d$ 

of a DCJ scenario transforming
$G$ into a dedoubled genome consisting of a single linear 
chromosome equals $d = n-C+2m$.
\end{lemma}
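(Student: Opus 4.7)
The plan is to prove the equality $d = n - C + 2m$ by matching upper and lower bounds.

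For the upper bound, I would exhibit a two-phase scenario. In phase one, choose a minimum-size set $\mathcal{S}_m$ of $m$ cycles whose merging into the path makes it valid, and execute $m$ DCJs, each merging one such cycle into the chromosome while preserving unichromosomal linearity (merging a cycle of $\mathcal{A}(G)$ corresponds to a DCJ between an adjacency of the cycle and an adjacency of the chromosome, which can always be oriented so as not to create a second chromosome). The resulting genome $G'$ has one valid path and $C - m$ remaining cycles which, by Property \ref{prop:cycles}, form a maximum set of non-duplicated pairwise independent cycles. In phase two, run Algorithm \ref{algo:general-dcj} on $G'$, verifying that each extracting DCJ can be chosen so as to preserve unichromosomal linearity (since the path is already valid, each needed 1-cycle can be produced by a DCJ that stays within the chromosome or folds a remaining cycle into it). This takes $n - (C - m) = n - C + m$ more DCJs, for a total of $m + (n - C + m) = n - C + 2m$.

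For the lower bound, I would introduce the potential function $\phi = C_i - m$, equivalently $\phi = C - 2m$ by Property \ref{prop:cycles}. Then $\phi(G) = C - 2m$ and $\phi(D) = n$ (since for a dedoubled linear unichromosomal $D$ one has $m(D) = 0$ and $C_i(D) = n$), so the required net increase is exactly $n - C + 2m$. The key step is to show that any single DCJ increases $\phi$ by at most $1$, by case analysis on its effect on $\mathcal{A}$: (i) a between-component merge ($\Delta C = -1$) can absorb at most one cycle, so $\Delta m \geq -1$ and $\Delta\phi = \Delta C - 2\Delta m \leq 1$; (ii) a within-component rearrangement ($\Delta C = 0$) leaves the set of edges in each component unchanged, so $\Delta m = 0$ and $\Delta\phi = 0$; (iii) a within-component split ($\Delta C = +1$) merely redistributes edges already present in one component into two, so no couple previously absent from the path can enter it, yielding $\Delta m \geq 0$ and $\Delta\phi \leq 1$. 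Applying this invariant across the scenario gives $d \geq \phi(D) - \phi(G) = n - C + 2m$.

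The main obstacle is the case analysis for the lower bound, in particular establishing cases (i) and (iii) rigorously. One must track how the four graph edges incident to the two adjacencies affected by the DCJ are rerouted to the two new adjacencies, and confirm that this rerouting cannot suddenly supply a path with a couple that was previously present only in cycles, except at the cost of absorbing the whole cycle. A complementary subtlety is extending the definitions of $C_i$ and $m$ consistently to intermediate genomes that may transiently have more than one chromosome, so that the potential function $\phi$ is well-defined at every step of any DCJ scenario from $G$ to $D$.
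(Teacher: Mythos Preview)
Your proposal is correct and follows essentially the same approach as the paper: the same two-phase upper-bound construction (merge the $m$ cycles outside $S_i$ into the path, then run Algorithm~\ref{algo:general-dcj}) and the same potential $\phi = C_i - m = C - 2m$ for the lower bound. The paper organizes the lower bound slightly differently---arguing the implications $\Delta C_i = +1 \Rightarrow \Delta m = 0$ and $\Delta m = -1 \Rightarrow \Delta C_i = 0$ directly rather than casing on $\Delta C$---but both routes establish $\Delta\phi \leq 1$ per DCJ, and the subtlety you flag about intermediate multichromosomal states is equally glossed over in the paper.
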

\begin{proof}
First, from Property \ref{prop:cycles}, we have $n-C+2m = n-C_i+m$.
Similarly to $C_i$, a DCJ operation can only alter $m$ by $+1$, $-1$
or $0$.
Next, a DCJ operation that increases $C_i$ by $1$ also increases $C$ by $1$,
and then leaves $m$ unchanged. Conversely, a DCJ operation that decreases $m$ 
by $1$ leaves $C_i$ unchanged.

 Algorithm \ref{algo:general-dcj} in which we add the line 
(2': Merge in the path all the cycles that are not in $S_i$)
between
lines $2$ and $3$
provides a DCJ scenario that first decreases $m$ until it reaches its lower 
bound $0$ (in $m$ DCJ operations), then increases $C_i$ until it reaches its 
upper bound $n$ (in $n-C_i$ DCJ operations). \qed
\end{proof}

\begin{corollary}
\label{lem:hardness-dcj-linear}
The problem of finding a DCJ scenario transforming $G$ into a dedoubled genome 
consisting of a single linear chromosome is NP-hard.
Algorithm \ref{algo:general-dcj}, in which we add the line 
(2': Merge in the path all the cycles that are not in $S_i$)
between lines $2$ and $3$, solves the problem in linear time complexity, 
except for the computation of the set of cycles $S_i$ that is 2-approximable.
\end{corollary}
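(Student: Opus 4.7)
The plan is to combine Lemma \ref{lem:linear-dcj}, which gives the exact distance formula $d = n - C + 2m = n - C_i + m$, with the APX-completeness result of Lemma \ref{lem:complecity-dcj} on choosing a maximum set of non-duplicated pairwise independent cycles. The corollary splits into two parts: the complexity claim (NP-hardness, 2-approximability) and the algorithmic claim (the augmented version of Algorithm \ref{algo:general-dcj} is linear apart from computing $S_i$).

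For NP-hardness, I would adapt the reduction from 2-frequency Maximum Set Packing described in Lemma \ref{lem:complecity-dcj}. The original construction builds a totally duplicated genome whose non-duplicated cycles encode the input sets. I would tune this construction so that the resulting genome is a single linear chromosome and, crucially, so that the path of $\mathcal{A}(G)$ already contains every couple $(x,\snd{x})$, i.e., so that $m=0$. This can be arranged by inserting a ``bus'' linear chromosome passing through one occurrence of each marker, connected to the gadgets by telomeric segments that leave all sets intact. Once $m=0$ is guaranteed, the distance $n - C_i + m$ collapses to $n - C_i$, so computing the dedoubling-to-linear distance solves the underlying 2-frequency set packing instance, establishing NP-hardness.

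For the algorithmic part, I would argue correctness of the augmented procedure directly from Lemma \ref{lem:linear-dcj}. Line 2' performs $m$ DCJ operations that each merge one of the cycles outside $S_i$ into the unique path; by Property \ref{prop:cycles}, exactly such a merge decreases $m$ by one while leaving $C_i$ unchanged. After line 2' the path is valid, and the subsequent loop is the original Algorithm \ref{algo:general-dcj}, which by the proof of Theorem \ref{thm:general-dcj} performs $n - C_i$ DCJ operations, each increasing $C_i$ by one until it reaches $n$. The total is $m + (n - C_i) = n - C + 2m$, matching the lower bound of Lemma \ref{lem:linear-dcj}, hence optimal. For complexity, building $\mathcal{A}(G)$, selecting the cycles to merge, executing the merges, and then iterating through the $n$ couples of paralogs are each linear in the size of $G$; the only super-linear ingredient is the computation of $S_i$, which inherits the 2-approximation from Lemma \ref{lem:complecity-dcj}.

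The main obstacle will be the reduction for NP-hardness: we must make sure the encoding gadget yields a genome that is \emph{linear and unichromosomal} while still forcing $m = 0$ on the resulting adjacency graph, without accidentally creating spurious cycles or destroying the correspondence between non-duplicated independent cycles and disjoint sets of the packing instance. Once that gadget is verified, the rest follows mechanically from the already-proven formula and from the analysis of the modified algorithm.
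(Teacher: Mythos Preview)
Your handling of the algorithmic half is fine and mirrors exactly what the paper intends: line $2'$ performs $m$ merges, then the loop performs $n-C_i$ extractions, matching the lower bound $n-C+2m$ from Lemma~\ref{lem:linear-dcj}.

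The NP-hardness argument, however, has a real gap. You propose to tune the gadget so that the constructed genome has $m=0$, hoping the distance then collapses to $n-C_i$. But look at Property~\ref{prop:cycles}: $C_i=C-m$. Forcing $m=0$ forces $C_i=C$, i.e.\ \emph{every} cycle of $\mathcal{A}(G)$ is already non-duplicated and pairwise independent. In the set-packing encoding that means all the input sets are pairwise disjoint, so your reduction only ever outputs trivial instances. You cannot simultaneously have $m=0$ and a hard packing instance encoded in the cycles.

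The fix --- and the reason the paper states this as a bare corollary --- is that you do not need $m=0$ at all. From $d=n-C+2m$ and $m=C-C_i$ you get $C_i=(n+C-d)/2$. Both $n$ and $C$ are read off $\mathcal{A}(G)$ in linear time, so a polynomial algorithm for $d$ would give one for $C_i$, contradicting Lemma~\ref{lem:complecity-dcj}. The only thing you must check is that the reduction of Lemma~\ref{lem:complecity-dcj} can be made to output a single linear chromosome; this is immediate by chaining the free extremities into one path instead of capping them all with telomeres, which touches only the path of $\mathcal{A}(G)$ and leaves the cycle structure (hence $C_i$) untouched. The 2-approximability then carries over by the same identity.
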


\subsection{Reversal}
\label{sec:Reversal-dedoubling}

The graph used in this section behaves like the \emph{arc overlap graph}
used in \cite{B-05} for the Hannenhalli-Pevzner theory of sorting by 
reversal \cite{HP-95}. The genome $G$ consists of a single linear
chromosome.

\subsubsection{Dedoubled overlap graph}

For any couple $(x,\snd{x})$ of paralogous markers in $G$, the segment of $(x,\snd{x})$ is the smallest segment of $G$
containing both markers $x$ and $\snd{x}$. For example, in genome 
$G=(\circ ~ 1~~3~~\snd{1}~\msnd{2}~{-4}~ \msnd{3}~~2~\msnd{4}~ {\circ})$, the 
segment of  $(1,\snd{1})$ is $(1~~3~~\snd{1})$, and the segment of 
$(2,\snd{2})$ is $(\msnd{2}~{-4}~ \msnd{3}~~2)$.

\begin{definition}
The \emph{dedoubled overlap graph} of $G$, denoted by $\mathcal{O}(G)$, 
is the graph whose vertices are all the couples $(x,\snd{x})$ of paralogous 
markers of $G$, and there is an edge between two vertices  $(u,\snd{u})$ and  
$(v,\snd{v})$ if the segments of $u$ and $v$ overlap. 
\end{definition}

An example of dedoubled overlap graph is depicted in Fig. \ref{fig:overlap}.a.
In the following, I will simply refer to dedoubled overlap graphs as
overlap graphs.

The vertex $(x,\snd{x})$ of the graph $\mathcal{O}(G)$ is \emph{oriented} 
if $x$ and $\snd{x}$ have different signs in $G$, 
otherwise it is \emph{unoriented}. If the vertex $(x,\snd{x})$ is 
oriented then there exists a reversal operation denoted by 
$\text{Rev}(x~~\snd{x})$ 
that produces the adjacency $(x~~\snd{x})$ and a reversal operation 
denoted by $\text{Rev}(\snd{x}~~x)$ that produces the adjacency 
$(\snd{x}~~x)$. 
For example, in genome  $G=(\circ ~ 1~~3~~\snd{1}~\msnd{2}~{-4}~ \msnd{3}~~2~\msnd{4}~ {\circ})$, both copies of 3 have opposite sign, therefore $(3,\snd{3})$ is an oriented vertex of $\mathcal{O}(G)$.\\
\\
$\text{Rev}(3~~\snd{3}) = (\circ ~ 1~~3~\underline{~\snd{1}~\msnd{2}~{-4}~ \msnd{3}~}~2~\msnd{4}~ {\circ})  \rightarrow (\circ ~ 1~~3~~\snd{3}~~4~~\snd{2}~\msnd{1}~~2~\msnd{4}~ {\circ})$.\\
\\
$\text{Rev}(\snd{3}~~3) = (\circ ~ 1~\underline{~3~~\snd{1}~\msnd{2}~{-4}}~ \msnd{3}~~2~\msnd{4}~ {\circ})  \rightarrow (\circ ~ 1~~4~~\snd{2}~\msnd{1}~{-3}~~\msnd{3}~~2~\msnd{4}~ {\circ})$.\\
\\

\begin{figure}[h!]
    \centering
 a.   \begin{tikzpicture}[xscale=2,yscale=1.5]
        \foreach \i/\x/\y/\c in {1/0/1/white,2/0/0/black!10,3/1/1/black!10,4/1/0/white,5/2/1/white,6/2/0/white} {
          \draw (\x,\y) node[circle,draw,fill=\c] (N-\i) {$\left(\fst{\i},\snd{\i}\right)$};
        };
        \draw (N-1) -- (N-3);
        \draw (N-2) -- (N-3);
        \draw (N-3) -- (N-4);
        \draw (N-2) -- (N-4);
        \draw (N-5) -- (N-6);
    \end{tikzpicture}
b.
\begin{tikzpicture}[xscale=2,yscale=1.5]
        \foreach \i/\x/\y/\c in {1/0/1/black!10,2/0/0/white,3/1/1/white,4/1/0/black!10,5/2/1/white,6/2/0/white} {
          \draw (\x,\y) node[circle,draw,fill=\c] (N-\i) {$\left(\fst{\i},\snd{\i}\right)$};
        };
        \draw (N-1) -- (N-2);
        \draw (N-1) -- (N-4);
        \draw (N-5) -- (N-6);
    \end{tikzpicture}

\caption{a. The overlap graph of $G=(\circ ~ 1~~3~~\snd{1}~\msnd{2}~{-4}~ \msnd{3}~~2~\msnd{4}~ {\circ})~~ (\circ ~\snd{5}~6~~5~~\snd{6}~ {\circ})$. Oriented vertices are colored in grey. The graph
$\mathcal{O}(G)$ has two connected components, one oriented and one unoriented. 
b. the overlap graph obtained after applying the reversal $\text{Rev}(3~~\snd{3})$ to produce adjacency $(3~~\snd{3})$.}
\label{fig:overlap}
\end{figure}
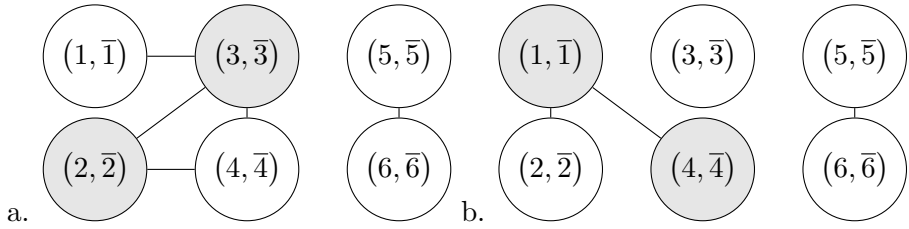

The overlap graph of $G$ behaves like arc overlap graphs used in \cite{B-05}
for the Hannenhali-Pevzner theory of sorting by reversal \cite{HP-95}. Indeed, given an 
oriented vertex 
$(x,\snd{x})$ of the graph $\mathcal{O}(G)$, performing the reversal 
$\text{Rev}(x~~\snd{x})$ or $\text{Rev}(\snd{x}~~x)$ complements the subgraph induced by $(x,\snd{x})$ and all its neighbouring vertices, and changes the orientation of all vertices in this subgraph  (see Fig. \ref{fig:overlap}.b).

A connected component of the graph $\mathcal{O}(G)$ 
is \emph{oriented} if it contains at least one oriented vertex, 
otherwise it is \emph{unoriented}.
A genome is \emph{oriented} if all connected components of 
the graph $\mathcal{O}(G)$ are oriented, otherwise it is  \emph{unoriented}.

Given  an oriented vertex $(x,\snd{x})$ of the graph $\mathcal{O}(G)$,
the score of $(x,\snd{x})$ is the number of oriented vertices in
the genome obtained after applying $\text{Rev}(x~~\snd{x})$ on $G$. 
Note that the same number of oriented vertices is obtained after applying 
 $\text{Rev}(\snd{x}~~x)$ on $G$.

If you are familiar with other papers on sorting by reversal, you might notice this score definition differs from the usual one. In fact they both produce the same ranking of vertices.

\begin{property}
\label{prop:max-score}
Let $(x,\snd{x})$ be an oriented vertex of  $\mathcal{O}(G)$ of maximum score.
Performing $\text{Rev}(x~~\snd{x})$ or  $\text{Rev}(\snd{x}~~x)$ does not
create  new unoriented connected components in the overlap graph of the
genome obtained.
\end{property}
\begin{proof}
Because the vertices ranking is the same, the proof goes the same way as the proof of Theorem 10 in \cite{B-05}: if 
$\text{Rev}(x~~\snd{x})$ or  $\text{Rev}(\snd{x}~~x)$ creates a 
new unoriented connected component $C$ in the overlap graph. Then, 
$C$ necessarily 
contains a vertex $(w,\snd{w})$ that was adjacent to $(x,\snd{x})$ and
oriented before applying $\text{Rev}(x~~\snd{x})$ or  $\text{Rev}(\snd{x}~~x)$, 
and such that the score of $(w,\snd{w})$ was greater than the score of 
$(x,\snd{x})$, which is a contradiction.\qed
\end{proof}

In the sequel, we focus on sorting oriented genomes using reversal 
dedoubling scenarios. 
A totally duplicated genome $G$ consisting of a single linear chromosome 
is called a \emph{valid-path genome} if the single path in $\mathcal{A}(G)$ 
is valid. Otherwise, it is called a \emph{non-valid-path genome}.

\subsubsection{Sorting an oriented valid-path genome}

In this section, we consider an oriented valid-path genome $G$.

With $n$ the number of couples of paralogous 
markers in $G$, and $C$ the number of cycles in  $\mathcal{A}(G)$, I prove the following theorem:

\begin{theorem}
\label{th:vp-rev}
The reversal dedoubling distance of $G$ is  $d_{rev}(G) = n-C$.
\end{theorem}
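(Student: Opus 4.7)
The plan is to derive the lower bound from the DCJ dedoubling case already handled in Section \ref{sec:DCJ-dedoubling}, and build the upper bound by an inductive Hannenhalli--Pevzner-style greedy that performs a maximum-score oriented reversal at each step. For the lower bound: since every reversal is a DCJ, $d_{rev}(G) \geq d_{dcj}(G)$. The valid-path assumption forces $m = 0$, so Property \ref{prop:cycles} gives $C_i = C$, and Theorem \ref{thm:general-dcj} yields $d_{dcj}(G) = n - C$, whence $d_{rev}(G) \geq n - C$.

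For the upper bound, I would induct on $n - C$. In the base case $n - C = 0$, Property \ref{prop:independent}.2 combined with the valid-path assumption forces every couple to already sit in its own self-loop 1-cycle of $\mathcal{A}(G)$, so $G$ is already dedoubled and zero reversals are needed. For $n - C \geq 1$, orientedness yields at least one oriented vertex in $\mathcal{O}(G)$; I will pick one of maximum score, say $(x,\snd{x})$, and perform $\text{Rev}(x~\snd{x})$ or $\text{Rev}(\snd{x}~x)$ to create the doublet adjacency. In the resulting genome $G'$ the new doublet appears as a self-loop in $\mathcal{A}(G')$, so the couple $(x,\snd{x})$ is extracted from its current element into a standalone 1-cycle, raising the cycle count by exactly one; the remaining couples stay in the residual path of $\mathcal{A}(G')$ since only edges incident to extremities of $x$ are rerouted, so $G'$ is still a single-linear-chromosome valid-path genome on its $n-1$ still-unsolved couples. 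By Property \ref{prop:max-score}, $G'$ remains oriented. Applying the induction hypothesis to $G'$ produces a scenario of length $n - (C+1)$, and prepending my reversal gives a length-$(n - C)$ scenario for $G$, matching the lower bound.

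The hard part will be the cycle-count bookkeeping: I must verify carefully that the max-score reversal really raises the number of cycles of $\mathcal{A}(G)$ by exactly one, with no side effect of merging two previously distinct cycles, and that the ``remainder'' path structure of $\mathcal{A}(G')$ is indeed the one expected. This is the analogue of the classical ``safe reversal'' lemma from the Hannenhalli--Pevzner theory, where here Property \ref{prop:max-score} is playing the role of the safe-reversal guarantee by forbidding creation of new unoriented components and thereby keeping the induction inside the class of oriented valid-path genomes. The precise verification will rely on the overlap-graph representation introduced at the start of Section \ref{sec:Reversal-dedoubling}, following the line of \cite{B-05}.
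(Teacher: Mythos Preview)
Your approach is essentially the same as the paper's: lower bound via $d_{rev}(G)\geq d_{dcj}(G)=n-C_i=n-C$ (using $m=0$), upper bound by repeatedly performing a maximum-score oriented reversal and invoking Property~\ref{prop:max-score} to stay in the oriented class. The paper packages the upper bound as ``Algorithm~\ref{algo:vp-rev} provides a scenario of length $n-C$'' rather than an explicit induction, but the content is identical.

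The one point you flag as the ``hard part'' is handled in the paper not by a separate lemma but by a case split inside Algorithm~\ref{algo:vp-rev}: if the chosen couple $(x,\snd{x})$ lies in a cycle $c$ of $S_i$, you must \emph{choose} between $\text{Rev}(x~\snd{x})$ and $\text{Rev}(\snd{x}~x)$ the one that splits $c$ into two cycles (one of them the new 1-cycle); if instead both edges for $x$ lie in the path, either reversal extracts a 1-cycle from the path. Your inductive step as written (``perform $\text{Rev}(x~\snd{x})$ or $\text{Rev}(\snd{x}~x)$'') glosses over this choice, and your sentence ``the remaining couples stay in the residual path'' implicitly assumes the path case. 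Once you make that choice explicit, the cycle-count bookkeeping you worry about is immediate, and the valid-path property is preserved because the path still carries at least one edge of every remaining couple.
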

\begin{proof}
In Algorithm \ref{algo:vp-rev}, since $G$ is a valid path genome, then
$S_i$ is the set of all cycles. Thus,  Algorithm \ref{algo:vp-rev}
 provides a reversal scenario of length $n-C$, 
which is the smallest length that can be reached since 
$d_{dcj}(G) = n-C_i=n-C+m=n-C$ as $m=0$ here, and a $d_{dcj}(G) \leq d_{rev}(G)$ since a reversal scenario is also a DCJ scenario.\qed
\end{proof}

\begin{algorithm}
\caption{Transforming an oriented genome $G$ into a dedoubled genome by reversals}
\label{algo:vp-rev}
\begin{algorithmic}[1]
\STATE Construct $\mathcal{A}(G)$.
\STATE Construct $\mathcal{O}(G)$.
\STATE Choose a maximum size set $S_i$ of non-duplicated pairwise independant cycles.
\STATE Merge in the path all the cycles that are not in $S_i$.
\WHILE{$G$ is not a dedoubled genome}
\STATE Pick a maximum score vertex $(x,\snd{x})$ in $\mathcal{O}(G)$.
\IF{$(x,\snd{x})$ is contained in a cycle $c$ of $S_i$}
\STATE Choose between $\text{Rev}(x~\snd{x})$ and $\text{Rev}(\snd{x}~x)$ the
reversal that splits $c$ into two cycles, and perform it. 
\STATE Replace $c$ in $S_i$ by the two new cycles.
\STATE Apply the modification induced by the reversal on $\mathcal{O}(G)$.
\ELSE
\STATE Perform any reversal that creates adjacency $(x~\snd{x})$ or $(\snd{x}~x)$.
\ENDIF
\ENDWHILE
\end{algorithmic}
\end{algorithm}
\vspace{-5mm}%test

\subsubsection{Sorting an oriented non-valid-path genome}
In this section, $G$ denotes an oriented non-valid path genome. At least $m$ cycles of $\mathcal{A}(G)$ have to be merged in the path to make it valid.

An edge $(\aff{x}{u},\afs{v}{x})$ or  $(\aff{y}{x},\asf{x}{z})$ of the adjacency graph $\mathcal{A}(G)$ is called oriented if markers $x$ and $\snd{x}$ have different signs.
Note that extracting a cycle from any element of the graph $\mathcal{A}(G)$ 
requires this element to contain oriented edges. It is easy to see that given two 
adjacencies picked in a given element, a reversal 
acting on these adjacencies will extract a cycle if and only if the path 
linking these adjacencies contains an odd number of oriented edges.
Thus, we have the following lemma:
\begin{lemma}
\label{lem:mergeorient}
Merging a cycle in the path never creates unoriented connected components
in $\mathcal{O}(G)$.
\end{lemma}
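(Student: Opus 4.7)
The plan is to analyze the effect of a merging reversal on the overlap graph and derive a contradiction from the assumption that some component becomes unoriented, following the same template as the proof of Property \ref{prop:max-score}.

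First, I would set up the relevant bookkeeping. Let $r$ be a reversal that merges a cycle $C$ of $\mathcal{A}(G)$ into the path $P$: $r$ cuts one adjacency $a \in C$ and one adjacency $b \in P$ and reverses the chromosomal segment $S$ lying between them. A couple $(y,\snd{y})$ changes orientation under $r$ iff exactly one of $y,\snd{y}$ lies in $S$, equivalently, iff the segment of $(y,\snd{y})$ in $G$ straddles exactly one of the two cut positions of $r$. The edges of $\mathcal{O}(G)$ incident to such couples are updated by a neighbourhood-complementation analogous to the arc-overlap operation of \cite{B-05}, so the only vertices whose orientation or neighbourhood can change are precisely those straddling one of the two cuts.

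Suppose for contradiction that $r$ produces a new unoriented connected component $K'$ of $\mathcal{O}(G)$. Since $G$ is oriented, every pre-image component of $K'$ contained an oriented vertex; let $(w,\snd{w})$ be such a vertex now lying in $K'$. Then $(w,\snd{w})$ changed orientation under $r$, so its segment straddles exactly one cut of $r$, and hence $(w,\snd{w})$ was adjacent in the pre-$r$ overlap graph to every couple whose segment lies on the corresponding side of that cut. The plan is then to pick the merging reversal so that its cut $a$ in $C$ maximises the number of oriented vertices obtained after performing $r$ (the merge analogue of the score of Property \ref{prop:max-score}); replaying the score argument exhibits a couple $(v,\snd{v})$ adjacent to $(w,\snd{w})$ that becomes oriented after $r$, providing an oriented vertex of $K'$ and the desired contradiction.

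The main obstacle is formalising the merge analogue of the ``max-score'' choice: unlike the situation of Property \ref{prop:max-score}, which fixes a single oriented couple and operates on it, a merging reversal has two independent cut positions (one in $C$ and one in $P$), so the scoring must range over admissible merges rather than over couples. Once the scoring is in place, the neighbourhood-complementation rule inherited from \cite{B-05} makes the contradiction argument go through almost verbatim; the delicate part is simply verifying that the merge operation admits a max-score cut and that this cut behaves with respect to orientations exactly as a max-score couple reversal does in the classical Hannenhalli--Pevzner setting.
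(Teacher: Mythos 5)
There is a genuine gap here, and it is the one you yourself flag as ``the main obstacle'': the merge analogue of the max-score choice is never constructed, and there is good reason to think it cannot be constructed by simply replaying Property \ref{prop:max-score}. The contradiction in the proof of Property \ref{prop:max-score} (and of Theorem 10 of \cite{B-05}) is driven by comparing the score of the chosen vertex with the score of a neighbouring vertex $(w,\snd{w})$ --- both are scores \emph{of vertices of $\mathcal{O}(G)$}, and the reversal being analysed is the one canonically attached to an oriented vertex. A merging reversal is not attached to any vertex of $\mathcal{O}(G)$: its two cuts sit on an adjacency of the cycle and an adjacency of the path, so there is no vertex whose score you can declare maximal and then play off against the score of $(w,\snd{w})$. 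Your intermediate claim that $(w,\snd{w})$ ``was adjacent in the pre-$r$ overlap graph to every couple whose segment lies on the corresponding side of that cut'' is also false as stated: adjacency in $\mathcal{O}(G)$ requires the two segments to overlap, not merely to lie on the same side of a cut position. Finally, even if a scoring over admissible merges could be set up, your argument would only establish that \emph{some} well-chosen merge avoids creating unoriented components, which is weaker than what the lemma and Algorithm \ref{algo:vp-rev} need.

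The intended argument is much more elementary and does not use scores at all: it is a reversibility argument. The merge gathers the genome components containing its two breakpoints into a single component of the resulting genome, containing all the couples that were in the merged components. If that component were unoriented, then the \emph{inverse} operation --- which is a reversal extracting a cycle --- would be extracting a cycle from an unoriented component; but it was observed just before the lemma that a reversal extracts a cycle if and only if the path linking its two adjacencies in $\mathcal{A}(G)$ contains an odd (hence nonzero) number of oriented edges, so the component must contain an oriented couple. This one observation replaces the entire max-score apparatus you are trying to transplant.
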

\begin{proof}
A \emph{component} of $G$ is the smallest segment of $G$ containing all 
markers of a connected component in $\mathcal{O}(G)$. Breakpoints of merging reversals can always be inside two distinct components
of the genome. Performing the operation gathers these two components of 
the genome, possibly with some other as well, into a single component, obviously containing all couples of paralogous markers that used to be in the merged components.
Therefore, it has to be an oriented component. Assuming otherwise, it would 
be possible to extract a cycle from an unoriented component, which is nonsense, by performing the inverse operation in the resulting genome.\qed
\end{proof}

\begin{theorem}
Let $G$ be a non-valid-path oriented genome. Let $C$ be the number of cycles in the graph  and $m$ be the minimum number of cycles to merge in the path to make it valid. The reversal dedoubling distance of $G$ is  $d_{rev}(G) = n-C+2m$.
\end{theorem}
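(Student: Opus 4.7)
The plan is to establish matching lower and upper bounds of $n - C + 2m$ on $d_{rev}(G)$, combining Lemma~\ref{lem:linear-dcj}, Theorem~\ref{th:vp-rev} and Lemma~\ref{lem:mergeorient}.

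For the lower bound, I first observe that a reversal on a single linear chromosome is a particular DCJ which preserves unichromosomality and linearity. Hence any reversal scenario transforming $G$ into a dedoubled genome is in particular a DCJ scenario transforming $G$ into a dedoubled genome consisting of a single linear chromosome. By Lemma~\ref{lem:linear-dcj}, every such DCJ scenario has length at least $n - C + 2m$, so $d_{rev}(G) \geq n - C + 2m$.

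For the upper bound I would exhibit a reversal scenario of exactly that length, carried out in two phases. First, fix a maximum-size set $S_i$ of non-duplicated pairwise independent cycles in $\mathcal{A}(G)$; by Property~\ref{prop:cycles} its complement in the set of cycles has cardinality $m$, and merging exactly these $m$ cycles into the single path of $\mathcal{A}(G)$ suffices to make it valid. In phase one, I perform one reversal per cycle in this complement, each chosen to act on one adjacency of the path and one adjacency of the cycle so as to merge them into a single longer path. Each such reversal decreases the number of cycles by one, and by Lemma~\ref{lem:mergeorient} it creates no unoriented component in the overlap graph, so since $G$ is oriented the intermediate genomes stay oriented throughout. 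After these $m$ reversals, the resulting genome $G'$ is an oriented valid-path genome with the same number of couples $n$ and with $C - m$ cycles. In phase two, I invoke Theorem~\ref{th:vp-rev} on $G'$, which yields a reversal dedoubling scenario of length $n - (C - m) = n - C + m$. Concatenating the two phases produces a reversal dedoubling scenario from $G$ of total length $m + (n - C + m) = n - C + 2m$, matching the lower bound.

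The main obstacle is to justify phase one, and more precisely to argue that at each step there exists a reversal which (i) merges the chosen cycle into the path and (ii) preserves orientedness. Existence of a merging reversal follows because both the path and the cycle live on the same single linear chromosome of the current genome, so any two adjacencies taken respectively from the path and from the cycle can serve as the breakpoints of some reversal; since the resulting connected component of $\mathcal{A}(G)$ must still carry the two telomeric ends coming from the path, the merge necessarily produces a path rather than a larger cycle, and the cycle count drops by exactly one. Preservation of orientedness is precisely the content of Lemma~\ref{lem:mergeorient}, which is also what makes Theorem~\ref{th:vp-rev} applicable to $G'$ in phase two and closes the argument.
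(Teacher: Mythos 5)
Your proof is correct and takes essentially the same route as the paper: the lower bound is obtained from Lemma \ref{lem:linear-dcj} by viewing a reversal scenario as a DCJ scenario between unilinear genomes, and the upper bound is realized by the same two-phase construction (merge the $m$ cycles outside a maximum independent set into the path, then sort the resulting oriented valid-path genome via Theorem \ref{th:vp-rev}), which is exactly what Algorithm \ref{algo:vp-rev} does. You merely spell out explicitly the existence of the merging reversals and the preservation of orientedness via Lemma \ref{lem:mergeorient}, points the paper leaves implicit in its appeal to the algorithm.
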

\begin{proof}

From lemma \ref{lem:linear-dcj}, we have that $d_{rev}(G) \geq n - C + 2m$ as a reversal scenario is always a DCJ scenario.
Algorithm \ref{algo:vp-rev} provides a scenario of length 
$n-C_i + m = n - C + 2m$.
\qed
\end{proof}

\begin{corollary}
\label{lem:hardness-rev}
The Genome Dedoubling problem by reversal on oriented genomes is NP-hard.
Algorithm \ref{algo:vp-rev} solves the problem in quadratic time complexity, 
except for the computation of $S_i$ that is 2-approximable.
\end{corollary}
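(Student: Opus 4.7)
The plan is to assemble three pieces already developed in the section. First, for the NP-hardness claim I would reuse the 2-frequency Maximum Set Packing reduction of Lemma \ref{lem:complecity-dcj}, taking care that the constructed totally duplicated genome be realised as a single linear chromosome and be oriented. The sign of each paralogous pair is a free parameter of that construction and does not affect which cycles appear in $\mathcal{A}(G)$, so I would simply assign opposite signs to every pair $x,\snd{x}$, which makes every vertex of $\mathcal{O}(G)$ oriented and hence every connected component oriented. Combined with the formula $d_{rev}(G)=n-C+2m$ proved just above and Property \ref{prop:cycles}, deciding the reversal dedoubling distance amounts to computing the maximum size of a set of pairwise independent non-duplicated cycles, which Lemma \ref{lem:complecity-dcj} shows to be NP-hard.

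Second, for correctness of Algorithm \ref{algo:vp-rev} I would rely on the preceding theorem, which already gives $d_{rev}(G)=n-C+2m$; all that remains to check is that the algorithm can always take its next step. Lemma \ref{lem:mergeorient} guarantees that the preliminary merge phase preserves orientedness, and Property \ref{prop:max-score} guarantees that selecting a max-score oriented vertex in each iteration of the while loop never creates a new unoriented component, so a suitable reversal is always available. The number of reversals used is then $m+(n-C_i)=n-C+2m$, matching the lower bound.

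Third, for the quadratic running time I would audit the algorithm line by line. Building $\mathcal{A}(G)$ is $O(n)$; building $\mathcal{O}(G)$ is $O(n^2)$ by testing the $O(n^2)$ pairs of paralog segments for overlap. The $S_i$ step is the only intrinsically hard one and is dispatched to the 2-approximation of Lemma \ref{lem:complecity-dcj}. The merge phase performs $m=O(n)$ reversals, each an $O(n)$ update of both graphs. The main loop runs $n-C_i=O(n)$ times, and within each iteration a maximum-score vertex can be located in $O(n)$ and the graph complementation triggered by the reversal propagated in $O(n)$. The global cost is $O(n^2)$.

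The main obstacle I anticipate lies in the first point: pinning down a concrete sign assignment and telomere placement that yields simultaneously a single linear chromosome and a fully oriented instance, without disturbing the non-duplicated-cycle combinatorics on which the reduction rests. Once that construction is in hand, the remaining work is essentially bookkeeping on top of the theorem and the graph-manipulation lemmas already proved above.
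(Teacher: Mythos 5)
Your overall assembly is exactly the route the paper intends: the corollary is stated there without its own proof, and the implicit justification is precisely (i) the distance formula $d_{rev}(G)=n-C+2m=n-C_i+m$ from the preceding theorem, which makes computing the reversal dedoubling distance on oriented genomes equivalent to computing $C_i$, hence NP-hard by Lemma~\ref{lem:complecity-dcj}; (ii) correctness and step-availability of Algorithm~\ref{algo:vp-rev} from Lemma~\ref{lem:mergeorient} and Property~\ref{prop:max-score}; and (iii) the $O(n^2)$ audit with the $S_i$ step delegated to the 2-approximation. Your second and third points are sound and match the paper.

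The one step that would fail as written is the claim that the sign of each paralogous pair is a free parameter that ``does not affect which cycles appear in $\mathcal{A}(G)$''. It does. The two edges of $\mathcal{A}(G)$ associated with a marker $x$ are determined by which adjacency contains the extremity $x\cdot$ and which contains $\cdot x$ (and likewise for $\snd{x}$); negating an occurrence of $x$ in the linear writing of the genome exchanges the roles of its two flanking adjacencies, so it rewires both edges attached to $x$ and can change the cycle decomposition entirely. You therefore cannot first run the construction of Lemma~\ref{lem:complecity-dcj} and then flip one copy of every marker. The repair is to work at the level of adjacencies and extremities from the outset, as the reduction already does (it prescribes adjacencies such as $(k_i~\snd{k_{i+1}})$, which forces the relative orientations once the chromosome is laid out as a single linear sequence), and then to check that the resulting unilinear instance is oriented --- keeping in mind that orientedness only requires one oriented vertex per connected component of $\mathcal{O}(G)$, not all of them, which is considerably weaker than your all-pairs-opposite-signs requirement. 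The paper leaves this verification implicit too, so you have correctly located the only non-trivial point of the corollary; but the justification you give for it is not valid as stated.
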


\subsubsection{A few words on unoriented genomes}
\label{unoriented}

I reviewed this work with Laurent Bulteau, visiting my lab for a week in May 2012. 

We spent time on the last unsolved step: sorting unoriented genomes by reversal.

In \cite{BMS-04}, the distance for sorting between non-duplicated genomes by reversal is expressed as $d_{rev} = d_{dcj} + t$, this last term being a parameter called \emph{orientation cost}.

Unfortunately, the techniques used in this article cannot be directly applied to the genome dedoubling problem, leaving open the question of computing this cost.

The work was promising thanks to Laurent Bulteau contribution and it seemed we found a dynamic programming algorithm for computing the orientation cost in polynomial time.
However we did not have time to finish this work.

I leave the result here as a conjecture. 

\begin{conjecture}
The orientation cost for genome dedoubling by reversal can be computed in polynomial time using dynamic programming. This leads to another FPT algorithm in the number of cycles in $\mathcal{A}(G)$ for genome dedoubling by reversal.
\end{conjecture}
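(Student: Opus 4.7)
The plan is to decompose the orientation cost $t$ into contributions that can be computed by traversing the unoriented components of $\mathcal{O}(G)$ along the linear order of $G$. Following the spirit of \cite{BMS-04}, I would first verify that the decomposition $d_{rev}(G) = d_{dcj}(G) + t$ still holds in the dedoubling setting, where $d_{dcj}(G) = n - C + 2m$ has already been characterized. The key observation is that once all unoriented components of $\mathcal{O}(G)$ have been made oriented (possibly at the cost of some extra reversals that do not decrease the DCJ bound), Algorithm \ref{algo:vp-rev} on the now-oriented instance attains the corresponding DCJ bound; thus $t$ is exactly the minimum number of such orientation-inducing reversals required before sorting.

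Next I would study how a single reversal can simultaneously orient several unoriented components. Because the segments of paralogous couples $(x, \snd{x})$ lie inside a single linear chromosome, the unoriented components inherit a natural interval structure: each component spans a minimal segment of $G$, and two such spans are either disjoint, nested, or overlapping. I would prove, by the same complementation argument used for Property \ref{prop:max-score}, that an optimal orientation strategy can always be taken to respect this interval structure, i.e.\ each orienting reversal groups together a set of components whose spans form a contiguous block along $G$. This reduces the search for an optimal orientation schedule to choosing a partition of the unoriented components into blocks, each block being handled by a single reversal at unit cost and with an admissibility test depending on the parity of oriented edges inside the block.

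Given this reduction, the dynamic program is natural: order the unoriented components by the left endpoint of their spans and let $T[i,j]$ be the minimum cost to orient every unoriented component whose span lies in the interval $[i,j]$, with transitions that either handle the leftmost component alone or merge it with a compatible suffix of nested or overlapping components. Since there are at most $O(n)$ unoriented components and $O(n^2)$ intervals, the table has polynomial size, and each transition can be evaluated in polynomial time using the adjacency graph $\mathcal{A}(G)$ to check admissibility. Combining this with Algorithm \ref{algo:vp-rev}, and using the fact that the only non-polynomial subroutine is the choice of $S_i$, which is FPT in the number of cycles by Corollary 4, yields an FPT algorithm in the number of cycles in $\mathcal{A}(G)$ for genome dedoubling by reversal.

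The main obstacle is to prove that an optimal orientation schedule really respects the interval structure of unoriented components, since an orienting reversal can also alter the set $S_i$ of independent cycles (by splitting a cycle in two, as in line 8 of Algorithm \ref{algo:vp-rev}) and thus interact with the merging of the $m$ cycles required to validate the path. Showing an exchange argument that decouples these two optimisations, so that one can first fix an optimal $S_i$ (and the corresponding $m$ merges) and then independently minimise the orientation cost by the interval DP above, is where the argument would need the most care, and is precisely the step that remained unfinished in the joint work with Laurent Bulteau.
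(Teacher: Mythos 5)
First, note that the paper contains no proof of this statement: it is deliberately left as a conjecture, because the joint work with Laurent Bulteau on exactly this question was never completed. So there is no argument of the paper to compare yours against; the only question is whether your proposal would close the gap, and it does not. Your final paragraph concedes the decisive point yourself: the correctness of the interval dynamic program rests entirely on the claim that an optimal orientation schedule can be taken to respect the interval structure of the unoriented components' spans, and that this choice can be decoupled from the choice of $S_i$ and from the $m$ merging operations. That exchange argument is not a technical detail to be deferred --- it \emph{is} the open content of the conjecture. In the classical Hannenhalli--Pevzner setting the analogous interaction between unoriented components is already delicate (hurdles, super-hurdles and the fortress correction exist precisely because merging and cutting unoriented components interact in parity-sensitive ways), and section \ref{unoriented} explicitly records that the techniques of \cite{BMS-04} cannot be transplanted directly to dedoubling. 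Asserting that each block of components can be oriented ``by a single reversal at unit cost'' with an admissibility test depending only on a parity condition is an unsubstantiated transposition of that machinery.

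There is a second, quieter gap at the very start: you take for granted that $d_{rev}(G) = d_{dcj}(G) + t$ holds in the dedoubling setting with $t$ equal to the minimum number of orientation-inducing reversals performed \emph{before} sorting. This presupposes that an optimal reversal scenario can always be reorganised into an orientation phase followed by a sorting phase in which Algorithm \ref{algo:vp-rev} attains the DCJ bound, with no cheaper interleaving --- for instance, a merging reversal (which, by Lemma \ref{lem:mergeorient}, never creates unoriented components) might also orient an existing unoriented component for free, so the orientation cost cannot be minimised independently of how the $m$ merges are scheduled. Until both the two-phase decomposition and the interval-structure lemma are actually proved, the DP is a plausible heuristic, not a proof, and the statement rightly remains a conjecture.
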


\subsection{Closing words and credits}

Upon presenting this work at \emph{RECOMB-CG 2011, Galway, Ireland}, I learnt from Anne Bergeron that the DCJ genome dedoubling was a problem Aida Ouangraoua originally worked on with her, and that she did not know Aida O. kept working on it with other people after she went back to France.

Aida O. explained the dedoubling problem, showed me the reduction from BD scenario to dedoubling scenario, and described the dedoubled adjacency graph but could not prove the DCJ distance.

I took care of proving the DCJ distance, providing a sorting algorithm, as well as studying the problem under reversals (using \cite{BMS-04} as a starting point, I designed a study for the dedoubling problem).

While the original publication contained a biological application using genomes taken from \cite{R07} by Aida O., I chose not to include it here, since I have always considered it was a display of circular reasoning\footnote{I recall that the breakpoint duplication model was made based on Ranz' observations. Using his own data to validate it is irrelevant.}.

Jean-Stéphane Varré drew some of the figures for the article.

A while later I briefly revisited this work with Laurent Bulteau which led to the aforementioned additions.

I took care of proving every single result from these papers.

\section{Model II: Whole tandem duplication}
\label{sec:wholetandem}

I studied this model separately for block interchange and for DCJ.

The block interchange study has been published in \cite{Thomas13pre}, then further extended in a journal version \cite{Thomas13}.

The DCJ study has been published in \cite{Thomas12}.

I proved the single tandem halving problem is polynomial for both BI and DCJ, providing $O(n^2)$ algorithms for the scenario and a $O(n)$ computation for the distance.

\subsection{Biological motivation}

This problem was designed as a starting point on rearrangement problems using segmental duplications as the cause to duplicated content. Since the genome halving is the main polynomial result on duplicated genome rearrangements, I used it as a base and made a tandem duplication variant.

While a tandem duplication of the whole content of a unilinear genome might not make much sense biologically, we were hoping that the results would allow a better insight on more realistic tandem duplication models, as shown in further studies from section \ref{sec:partialdup}.

\subsection{Model}

\subsubsection{Considered genomes}
\label{sec:dedouble}

The genomes considered in this section are totally duplicated genomes, and perfectly duplicated genomes, as defined in section \ref{sec:notations2}.

I also introduce single tandem duplicated genomes (or 1-tandem duplicated genomes), and a process called reduction whose purpose is just to help shortening the definition:

\emph{Reduction} is the process of rewriting a consecutive sequence of double-adjacencies a
single marker. In the following example, 
genome $G$ could be reduced by rewriting $\fst{2}~\diamond~\fst{4}~\diamond~\fst{5}$
and their paralogs

as $\fst{10}$ and $\snd{10}$:

$$G =
(\circ~~\fst{1}~~\snd{1}~~\fst{3}~~\fst{2}~\diamond~\fst{4}~\diamond~\fst{5}~~\fst{6}
~~\snd{6}~~\fst{7}~~\snd{3}~~\fst{8}~~\snd{2}~\diamond~\snd{4}~\diamond~\snd{5}~~\fst{
9}~~\snd{8}~~\snd{7}~~\snd{9}~~\circ )$$
$$G^r = (\circ
~~\fst{1}~~\snd{1}~~\fst{3}~~\fst{10}~~\fst{6}~~\snd{6}~~\fst{7}~~\snd{3}~~\fst{
8}~~\snd{10}~~\fst{9}~~\snd{8}~~\snd{7}~~\snd{9}~~\circ )$$

\begin{definition}
\label{def:tandem}

    A \emph{single tandem duplicated genome}  (or \emph{1-tandem duplicated genome}) is a totally
    duplicated genome which can be reduced to a genome of the form
    $(\circ~~\fst{x}~~\snd{x}~~\circ )$.
\end{definition}

In other words, a tandem duplicated genome is composed of a single linear
chromosome where all adjacencies, except the two telomeric adjacencies 

and the central adjacency, are
double-adjacencies.
For example, the genome
$(\circ~\fst{1}\diamond\fst{2}\diamond\fst{3}\diamond\fst{4}~\snd{1}\diamond\snd{2
}\diamond\snd{3}\diamond\snd{4}~\circ)$ is a tandem-duplicated genome that can
be reduced to 

$(\circ~\fst{5}~\snd{5}~\circ )$
by rewriting $\fst{1}\diamond\fst{2}\diamond\fst{3}\diamond\fst{4}$ and
$\snd{1}\diamond\snd{2}\diamond\snd{3}\diamond\snd{4}$ as  $\fst{5}$ and $\snd{5}$.

I recall the perfectly duplicated genome definition.

\begin{definition}
\label{def:perfectly}
A \emph{perfectly duplicated genome} is a totally duplicated genome such that
all adjacencies are double-adjacencies, none of them in the form $(x~\msnd{x})$.
\end{definition}

For example, the genome $(1~~{2}~~{3}~~4~~\snd{1}~~\snd{2}~~\snd{3}~~\snd{4})$
is a perfectly duplicated genome, while $(\circ~~1~~2~~\msnd{2}~~-1~~\circ)$ is not.

It is to note that this definition is equivalent to the one from \cite{Mixtacki08}, which slightly differs from the one originally introduced in \cite{Warren08}.

From definitions \ref{def:tandem} and \ref{def:perfectly}, we get the following property:

\begin{property}
    \label{prop:dtdp}
    In the case of
    unichromosomal genomes, a perfectly duplicated genome is a single
    tandem duplicated genome which has been circularized (the
    perfectly duplicated genome can be reduced to $(x~\snd{x})$, it
    just lacks telomeres).
\end{property}

\subsubsection{Considered operations}

I use two separate operation models in this section: Block Interchange, and DCJ.

Here is a useful property linking BI operations to DCJ operations.

\begin{property}
\label{1BIto2DCJ}
A single BI operation on a linear chromosome is equivalent to two DCJ operations: an excision followed by a reintegration.
\end{property}

\begin{proof}
Let $(\circ~~1~~U~~2~~V~~3~~\circ)$ be a genome, $U$ and $V$ the two intervals 
that are to be swapped by a block interchange operation, $1$ $2$ and $3$ the 
intervals constituting the rest of the genome (note that each of them may be 
empty). 

The first DCJ operation is the excision that produces the adjacency
$(1~~V)$ by extracting and circularizing the interval $[U~;~2]$: 
$$(\circ~~1~\breakpoint~U~~2~\breakpoint~V~~3~~\circ) \rightarrow
(\circ~~1~~V~~3~~\circ)(U~~2~)$$

The second DCJ operation is the integration that produces the adjacency
$(U~~3)$ by reintegrating the circular chromosome $(U~~2)$ in the
appropriate way:
$$(\circ~~1~~V~\breakpoint~3~~\circ)(U~~2~\breakpoint) \rightarrow (\circ~~1~~V~~2~~U~~3~~\circ).$$
\end{proof}

\subsection{Single tandem halving}

Because the single tandem halving problem is close to the genome halving problem, I will recall both problems definitions, with adapted notations to avoid ambiguity between both distances.

\begin{definition}
    Given a unilinear totally duplicated genome $G$, the \emph{single
      tandem halving problem} (or 1-tandem halving problem) consists
    in finding an \emph{optimal} 1-tandem duplicated genome $H$, such that the distance between $G$
    and $H$ is minimal. This minimal distance is called the
    \emph{1-tandem halving distance}, and is denoted $d^t(G)$.
\end{definition}

\begin{definition}[\cite{Mixtacki08}]
    Given a totally duplicated genome $G$, the \emph{DCJ genome
      halving problem} consists in finding an \emph{optimal} perfectly
    duplicated genome $H$, such that the DCJ distance between
    $G$ and $H$ is minimal. This minimal distance is called the
    \emph{genome halving distance} and is denoted $d^p(G)$.
\end{definition}

Thus, when the goal is a 1-tandem duplicated genome, the distance is noted $d^t$, when the goal is a perfectly duplicated genome, it is $d^p$.\\

Moreover, I will also recall the operation model as subscript.

For example, using these notations, $d^p_{DCJ}(G)$ is the DCJ genome halving distance as defined in \cite{Mixtacki08}, while $d^t_{BI}(G)$ is the BI 1-tandem halving distance I am about to prove in the next section.

\subsection{Block Interchange}

\subsubsection{Lowerbound}
\label{sec:lb}

\noindent
In this section I give a lowerbound on the BI 1-tandem halving distance of a totally duplicated genome. I use a data structure representing the genome called the \emph{natural graph} introduced in \cite{Mixtacki08}. 

\begin{definition}\emph{\cite{Mixtacki08}}
The natural graph of a totally duplicated genome $G$, denoted by $\NG(G)$, is the graph whose vertices are the \emph{adjacencies} of $G$, and for any marker $u$ there is one edge between  $\aff{u}{v}$ and  $\asf{u}{w}$, and  one edge between  $\aff{x}{u}$ and  $\afs{y}{u}$.

\end{definition}

Note that the number of edges in the natural graph of a genome $G$ containing $n$ distinct markers, each one present in two copies, is always $2n$. Moreover, since every vertex has degree one or two, then the natural graph consists only of cycles and paths. For example, the natural graph of genome $G = (\circ~~ \fst{1}~~\snd{2}~~\snd{1}~~\snd{4}~~\fst{3}~~\fst{4}~~\snd{3}~~\fst{2}~~\circ)$ is depicted in figure \ref{fig:NGdef}.

\begin{figure}[htbp]
    \centering

\includegraphics[width=4cm]{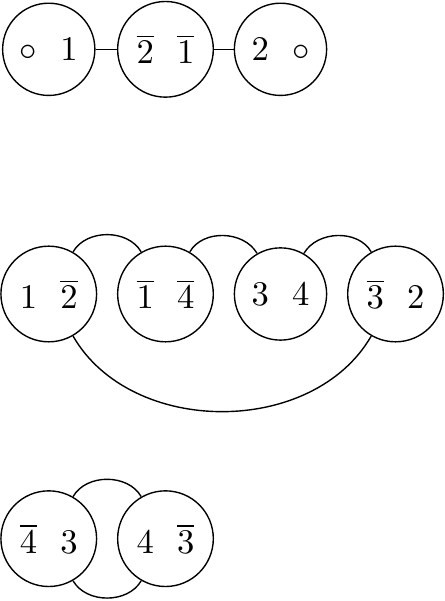}

\caption{The natural graph of genome $G = (\circ~~ \fst{1}~~\snd{2}~~\snd{1}~~\underline{\snd{4}~~\fst{3}}~~\underline{\fst{4}~~\snd{3}}~~\fst{2}~~\circ)$ ; it is composed of one path and two cycles.}
\label{fig:NGdef}
\end{figure}

\begin{definition}
    Given an integer $k$, a \emph{$k-$cycle} (resp. \emph{$k-$path}) in the 
    natural graph of a totally duplicated genome is a cycle (resp. path) 
    that contains $k$ edges. If $k$ is even, the cycle (resp. path) is called 
   \emph{even}, and \emph{odd} otherwise.
\end{definition}

\def\EC{\ensuremath{\mbox{EC}}}
\def\OP{\ensuremath{\mbox{OP}}}

Based on the natural graph, a formula for the DCJ halving distance was given in \cite{Mixtacki08}. Given a totally duplicated genome $G$ such that the number of even cycles and the number of odd paths in $\NG(G)$
are respectively denoted by $\EC$  and $\OP$, the  DCJ halving distance of $G$ is:
    $$d^p_{DCJ}(G) = n - \EC - \left\lfloor \frac{\OP}{2} \right\rfloor $$

In the case of the BI 1-tandem halving distance, some peculiar properties of the natural graph need to be stated, allowing one to simplify the formula of the DCJ halving distance, and leading to a lowerbound on the BI 1-tandem halving distance.

In the following properties, we assume that $G$ is a genome composed of a 
single linear chromosome containing $n$ distinct markers, each one present 
in two copies in $G$. 

\begin{property}
The natural graph $\NG(G)$ contains only even cycles and paths: 
\label{prop:EC}
\label{prop:1EP}
\begin{enumerate}
\item All cycles in the natural graph $\NG(G)$ are even.
\item The natural graph $\NG(G)$ contains only one path, and this path is even.
\end{enumerate}
\end{property}

\begin{proof}
    First, if \aff{a}{x} is a vertex of the
    graph that belongs to a cycle $C$, then there exists an edge between
    \aff{a}{x} and a vertex \asf{a}{y}. These two
    adjacencies are the only two containing a copy of the marker $a$ at the
    first position. So, if we consider the set of all the first markers in
    all adjacencies contained in the cycle $C$, then each marker in this set is
    present exactly twice. Therefore, the cycle $C$ is an even cycle.
  
    Secondly, the graph contains exactly two vertices (adjacencies) containing the
    marker $\circ$ which are both necessarily ends of a path in $\NG(G)$. Thus
   there can be only one path in the graph. Since the number of edges in the 
   graph is even and all cycles are even, then the single path is also even.\qed
\end{proof}

I now give a lowerbound on the minimum length of a DCJ scenario transforming 
$G$ into a 1-tandem duplicated genome.

\begin{lemma}
    Let $d^t_{DCJ}(G)$  be the minimum DCJ distance between $G$ and any 
   1-tandem duplicated genome. If $\NG(G)$ contains $C$ cycles then a 
   lowerbound on $d^t_{DCJ}(G)$ is given by:
    $$d^t_{DCJ}(G) \geq n - C - 1$$
\label{DCJdistLB}
\end{lemma}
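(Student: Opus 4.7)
The plan is to establish the bound by reducing the tandem halving problem to the classical genome halving problem of \cite{Mixtacki08}, whose distance formula was already recalled above.

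First, I would exploit Property \ref{prop:dtdp}: in the unichromosomal setting, a perfectly duplicated genome is exactly a 1-tandem duplicated genome that has been circularized. Concretely, if $H = (\circ~a_1~\cdots~a_n~\snd{a_1}~\cdots~\snd{a_n}~\circ)$ is a 1-tandem duplicated genome, then the single DCJ cutting the two telomeric adjacencies $(\circ~a_1)$ and $(\snd{a_n}~\circ)$ and joining the exposed extremities into $(\snd{a_n}~a_1)$ yields a circular chromosome in which every adjacency is a double-adjacency (the newly produced $(\snd{a_n}~a_1)$ is the paralog of the already-present central adjacency $(a_n~\snd{a_1})$). In particular, starting from any DCJ scenario of length $d^t_{DCJ}(G)$ between $G$ and some 1-tandem duplicated genome $H$, appending this circularization step produces a DCJ scenario between $G$ and a perfectly duplicated genome of length $d^t_{DCJ}(G)+1$. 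Since $d^p_{DCJ}(G)$ is the minimum such length, this yields
$$d^p_{DCJ}(G) \leq d^t_{DCJ}(G) + 1.$$

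Next, I would simplify $d^p_{DCJ}(G)$ using Property \ref{prop:EC}. Since $G$ is a unilinear totally duplicated genome, $\NG(G)$ consists only of even cycles and exactly one even path. Hence $\EC = C$ and $\OP = 0$, so the DCJ halving formula of \cite{Mixtacki08} recalled above specializes to
$$d^p_{DCJ}(G) = n - \EC - \left\lfloor \frac{\OP}{2} \right\rfloor = n - C.$$

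Combining the two steps gives $d^t_{DCJ}(G) \geq d^p_{DCJ}(G) - 1 = n - C - 1$, as required. The only delicate point is verifying that the single circularization DCJ applied to a 1-tandem genome really does produce a perfectly duplicated genome (no adjacency of the forbidden form $(x~\msnd{x})$), which is immediate from the tandem structure as argued above; the rest of the argument is a direct application of Property \ref{prop:dtdp}, Property \ref{prop:EC}, and Mixtacki's formula.
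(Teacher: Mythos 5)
Your proof is correct and follows essentially the same route as the paper: it computes $d^p_{DCJ}(G) = n - C$ from the parity properties of $\NG(G)$ and then uses the fact that one DCJ (circularization) turns a 1-tandem duplicated genome into a perfectly duplicated one to get $d^t_{DCJ}(G) \geq d^p_{DCJ}(G) - 1$. The only difference is that you spell out explicitly why the circularization step yields a perfectly duplicated genome, which the paper leaves as an assertion.
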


\begin{proof}
    First, since all cycles  of $\NG(G)$ are even and $\NG(G)$ contains no odd 
    path, 
   then, from the DCJ halving distance formula, the DCJ halving distance of 
   $G$   is $d^p_{DCJ}(G) = n - C$.

    Now, since any  1-tandem duplicated genome can be transformed into 
    a perfectly duplicated genome with one DCJ, then $d^t_{DCJ} + 1
    \geq d^p_{DCJ}$. Therefore, we have $ d^t_{DCJ} \geq d^p_{DCJ} - 1 \geq
    n - C - 1$.\qed  
\end{proof}

We are now ready to state a lowerbound on the BI 1-tandem halving distance of a totally duplicated genome $G$.

\begin{theorem}
    If $\NG(G)$ contains $C$ cycles, then a lowerbound on the BI 1-tandem halving distance
    is given by: 
$$d^t_{BI}(G) \geq \left \lfloor \frac{n - C}{2} \right \rfloor$$
\end{theorem}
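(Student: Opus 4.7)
The plan is to combine Property \ref{1BIto2DCJ} with the lowerbound on the DCJ 1-tandem halving distance (Lemma \ref{DCJdistLB}). First, I would observe that any BI 1-tandem halving scenario starts from the unilinear genome $G$ and stays in the class of linear unichromosomal genomes at every step, since a BI on a linear chromosome produces a linear chromosome. In particular, any optimal BI scenario ending in a 1-tandem duplicated genome $H$ never leaves this class.

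Next, I would apply Property \ref{1BIto2DCJ} step-by-step: each of the $d^t_{BI}(G)$ block interchanges can be replaced by exactly two DCJ operations (an excision followed by a reintegration), producing a DCJ scenario of length $2\, d^t_{BI}(G)$ that still transforms $G$ into the 1-tandem duplicated genome $H$. Since $d^t_{DCJ}(G)$ is the minimum length of such a DCJ scenario, this gives the inequality
$$d^t_{DCJ}(G) \leq 2\, d^t_{BI}(G).$$
Combining this with Lemma \ref{DCJdistLB}, which states $d^t_{DCJ}(G) \geq n - C - 1$, I obtain
$$2\, d^t_{BI}(G) \geq n - C - 1, \quad \text{hence} \quad d^t_{BI}(G) \geq \frac{n - C - 1}{2}.$$

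Finally, since $d^t_{BI}(G)$ is an integer, the bound can be tightened to $d^t_{BI}(G) \geq \left\lceil \frac{n - C - 1}{2} \right\rceil$, and the elementary identity $\left\lceil \frac{n - C - 1}{2} \right\rceil = \left\lfloor \frac{n - C}{2} \right\rfloor$ (verified by checking both parities of $n-C$) yields the desired bound $d^t_{BI}(G) \geq \left\lfloor \frac{n - C}{2} \right\rfloor$. The only subtle point to be careful about is verifying that the intermediate states of the BI-to-DCJ conversion are legal, namely that the excision/reintegration decomposition in Property \ref{1BIto2DCJ} is always applicable; but this is automatic here because $G$ and all BI-intermediate genomes are linear unichromosomal, which is exactly the setting of that property.
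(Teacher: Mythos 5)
Your proposal is correct and follows essentially the same route as the paper: both convert an optimal BI scenario into a DCJ scenario of twice the length via Property \ref{1BIto2DCJ}, invoke Lemma \ref{DCJdistLB}, and then handle the parity of $n-C$. The only difference is presentational — the paper argues by contradiction while you argue directly via the identity $\left\lceil \frac{n-C-1}{2} \right\rceil = \left\lfloor \frac{n-C}{2} \right\rfloor$ — and your explicit remark that all BI-intermediate genomes remain linear unichromosomal (so the excision/reintegration decomposition applies at every step) is a point the paper leaves implicit.
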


\begin{proof}
    %is the minimal number of DCJ operations required to turn $G$ into a 1-tandem duplicated genome, assuming that $\NG(G)$ contains $C$ cycles. 

We denote by $\ell(S)$ the length of a rearrangement scenario $S$.
   Let $S_{BI}$ be a BI scenario transforming $G$ into a
   1-tandem duplicated genome.
    From property \ref{1BIto2DCJ}, we have that $S_{BI}$ is equivalent to
    a DCJ scenario $S_{DCJ}$ such that  $\ell(S_{DCJ})=2*\ell(S_{BI})$. 
    Now, suppose that $\ell(S_{BI}) < \lfloor \frac{n - C}{2}
    \rfloor$, then $\ell(S_{BI}) \leq \lfloor \frac{n - C}{2}
    \rfloor - 1 \leq \lceil \frac{n - C - 1}{2} \rceil - 1$.
    
    This implies $\ell(S_{DCJ}) \leq 2\lceil \frac{n - C - 1}{2}
    \rceil - 2 \leq n - C - 2 < n - C - 1$. Thus, from Lemma 
   \ref{DCJdistLB} we have $\ell(S_{DCJ}) < d^t_{DCJ}$ 
    which contradicts the fact that $d^t_{DCJ}$ is the minimal number
    of DCJ
    operations required to transform $G$ into a 1-tandem duplicated genome.

    In conclusion, we always have $d^t_{BI}(G) \geq \lfloor \frac{n - C}{2} \rfloor$.  \qed
\end{proof}

\subsubsection{Distance}
\label{sec:dist}

\noindent
In this section, I show that the established lowerbound is in fact the actual distance.

In other words, the BI 1-tandem halving distance of a totally duplicated genome $G$ with $n$ distinct markers such that  $\NG(G)$ contains $C$ cycles is exactly:

$$d^t_{BI}(G) = \left \lfloor \frac{n - C}{2} \right \rfloor$$

Which means that enforcing the constraint that successive couples 
of consecutive DCJ operations have to be equivalent to BI operations does not 
change the distance even though it 
obviously restricts the DCJ that can be performed at each step of the scenario. 

In
the following, $G$ denotes a totally duplicated genome $G$
constisting in a single linear chromosome with $n$ distinct markers
after the reduction process, and such that $\NG(G)$ contains $C$ cycles.

We begin by recalling some useful definitions and properties of the DCJ
operations that allow one to decrease the DCJ halving distance by $1$ in the resulting genome.

\begin{definition}
A DCJ operation on $G$ producing genome $G'$ is \emph{sorting} if it decreases the DCJ halving distance by $1$: $d^p_{DCJ}(G') = d^p_{DCJ}(G)-1 = n-C-1$.
\end{definition}

Since the number of distinct markers in $G'$ is $n$ and $d^p_{DCJ}(G') = n-C-1$, then $\NG(G')$ contains $C+1$ cycles. In other words, a DCJ operation is sorting if it increases the number of cycles in  $\NG(G)$ by $1$.

Given $(\fst{u}~~\fst{v})$ an adjacency of $G$ that is not a double-adjacency,
we denote by $\DCJ(\fst{u}~~\fst{v})$ the DCJ operation that cuts adjacencies $(\snd{u}~~\fst{x})$ and  $(\fst{y}~~\snd{v})$ to form adjacencies $(\snd{u}~~\snd{v})$ and  $(\fst{y}~~\fst{x})$, making $(\fst{u}~~\fst{v})$ a double-adjacency. 

\begin{property}
Let  $(\fst{u}~~\fst{v})$ be an adjacency of $G$ that is not a double-adjacency,
 $\DCJ(\fst{u}~~\fst{v})$ is a sorting DCJ operation.
\end{property}
\begin{proof}
$\DCJ(\fst{u}~~\fst{v})$ increases the number of cycles in  $\NG(G)$ by $1$, by creating a new cycle composed of adjacencies  $(\fst{u}~~\fst{v})$ and  $(\snd{u}~~\snd{v})$.\qed
\end{proof}

\begin{figure}[!h]
\centering

\includegraphics[width=7cm]{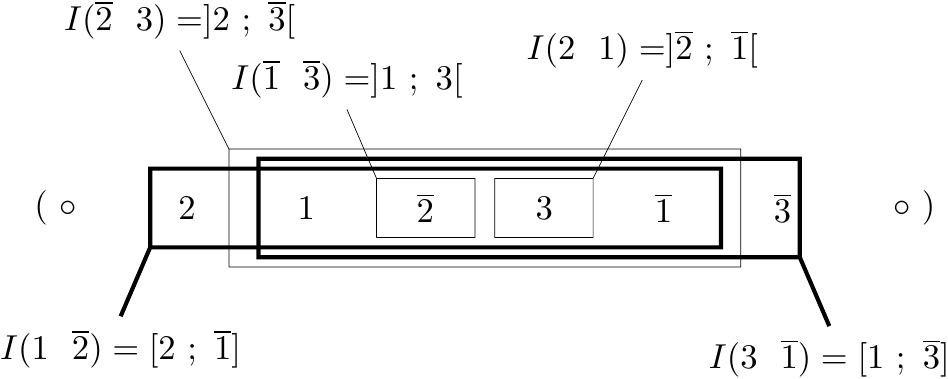}

\caption{ $\mathcal{I}(G) ~ = ~ \left\{ \oiss{2}{1} ~,
      \mathbf{\cifs{2}{1}} ~,~ \oifs{2}{3} ~,~ \mathbf{\cifs{1}{3}}
      ~,~ \oiff{1}{3}  \right\}$, the set of intervals of $G =
  (\circ~~\fst{2}~~\fst{1}~~\snd{2}~~\fst{3}~~\snd{1}~~\snd{3}~~\circ
  )$ depicted as boxes. The two boxes with thick lines represent two 
  overlapping intervals of $\mathcal{I}(G)$ inducing a \BI\xspace which 
  exchanges $\fst{2}$ and $\snd{3}$.}
    \label{fig:intervaldef}
\end{figure}

\begin{definition}
Let $(\fst{u}~~\fst{v})$,  $(\snd{u}~~\fst{x})$, and  $(\fst{y}~~\snd{v})$ be adjacencies of $G$. The \emph{interval} of the adjacency  $(\fst{u}~~\fst{v})$, denoted by $I(\fst{u}~~\fst{v})$ is either:
\begin{itemize}
\item the interval $[x~;~y]$ if $(\snd{u}~~\fst{x}) < (\fst{y}~~\snd{v})$.  In this case, we denote it by $]\snd{u}~;~\snd{v}[$, or
\item the interval $[\snd{v}~;~\snd{u}]$ if $(\fst{y}~~\snd{v}) < (\snd{u}~~\fst{x})$.
\end{itemize}
\end{definition}

For example, the intervals of the adjacencies in genome $(\circ~~\fst{2}~~\fst{1}~~\snd{2}~~\fst{3}~~\snd{1}~~\snd{3}~~\circ )$ are depicted in figure \ref{fig:intervaldef}.
Note that, given an adjacency  $(\fst{u}~~\fst{v})$ of $G$,  if  $(\fst{u}~~\fst{v})$ is a double-adjacency then the interval $I(\fst{u}~~\fst{v})$ is empty, otherwise  $\DCJ(\fst{u}~~\fst{v})$ is the excision operation that extracts the interval $I(\fst{u}~~\fst{v})$ to make it circular, thus producing the adjacency $(\snd{u}~~\snd{v})$.

Two intervals  $I\aff{a}{b}$ and $I\aff{x}{y}$ are said to be \emph{overlapping} if
their intersection is non-empty, and none of the intervals is included in the 
other.
It is easy to see, following Property \ref{1BIto2DCJ}, that given two
adjacencies $\aff{a}{b}$ and $\aff{x}{y}$ of $G$ such that
$I\aff{a}{b}$ and $I\aff{x}{y}$ are non-empty intervals, the
successive application of $\DCJ\aff{a}{b}$ and $\DCJ\aff{x}{y}$ is
equivalent  to a BI operation if and only if $I\aff{a}{b}$ and
$I\aff{x}{y}$ are overlapping. Note that in this case neither
$\aff{a}{b}$, nor  $\aff{x}{y}$ can be double-adjacencies in $G$ since
their  intervals are non-empty. Figure \ref{fig:intervaldef} shows an
example of two overlapping intervals. 

The following property states precisely in which case the successive application of $\DCJ\aff{a}{b}$ and $\DCJ\aff{x}{y}$ decreases the DCJ halving distance by $2$, meaning that both DCJ operations are sorting.

\begin{property}
Given two adjacencies $\aff{a}{b}$ and $\aff{x}{y}$ of $G$, such that 
$I\aff{a}{b}$ and $I\aff{x}{y}$ are overlapping, the successive application of $\DCJ\aff{a}{b}$ and $\DCJ\aff{x}{y}$ decreases the DCJ halving distance by $2$ if and only if  $x \neq \snd{a}$ and $y \neq \snd{b}$.

\end{property}
\begin{proof}
If $x \neq \snd{a}$ and $y \neq \snd{b}$, then the successive application of $\DCJ\aff{a}{b}$ and $\DCJ\aff{x}{y}$ increases the number of cycles in $\NG(G)$ by $2$, by creating two new 2-cycles. Otherwise, $\DCJ\aff{a}{b}$ first creates a new cycle that is then destroyed by  $\DCJ\aff{x}{y}$.\qed
\end{proof}

I denote by $\mathcal{I}(G)$, the set of intervals of all the adjacencies of $G$ that do not contain marker $\circ$.

\begin{remark}
Note that, if $G$ contains $n$ distinct markers, then there are $2n-1$ adjacencies in $G$ that do not contain marker $\circ$,  defining $2n-1$ intervals in  $\mathcal{I}(G)$.
\label{maxEdges}
\end{remark}

\begin{definition}
Two  intervals $I\aff{a}{b}$ and $I\aff{x}{y}$ of $\mathcal{I}(G)$ are said to be \emph{compatible} if they are overlapping and  $x \neq \snd{a}$ and $y \neq \snd{b}$.
\end{definition}

In the following, I prove the BI 1-tandem halving distance formula by showing that if genome $G$ contains more than three distinct markers, $n~>~3$,  then there exist two compatible intervals in $\mathcal{I}(G)$, and if $n=2$ or $n=3$ then $d^t_{BI}(G)=1$ and $2 \leq d^p_{DCJ}(G) \leq 3$.
This means that there exists a \BI ~halving scenario $S$ such that all \BI ~operations in $S$, possibly excluding the last one, are equivalent to two successive sorting DCJ operations.

From now on, until the end of the section,

$\aff{a}{b}$ is an adjacency of $G$ that is not a double-adjacency, $A$ is a genome consisting in a linear chromosome $\mathfrak{L}$ and a circular chromosome $\mathfrak{C}$, obtained by applying the \emph{sorting DCJ}, $\DCJ\aff{a}{b}$, on $G$.

If there exists an interval $I\aff{x}{y}$ in $\mathcal{I}(G)$ compatible with  $I\aff{a}{b}$, then applying $\DCJ\aff{x}{y}$ on $A$ consists in the integration of the circular chromosome $\mathfrak{C}$ into the linear chromosome $\mathfrak{L}$ such that the adjacency $\ass{x}{y}$ is formed.

Such an \emph{integration} can only be performed by cutting an adjacency $\asf{x}{u}$ in $\mathfrak{C}$ and an adjacency $\afs{v}{y}$ in $\mathfrak{L}$ (or inversely) to produce adjacencies $\ass{x}{y}$ and $\aff{v}{u}$.  This means that there must be an adjacency \aff{x}{y} in either $\mathfrak{C}$ or $\mathfrak{L}$ such that $\snd{x}$ is in $\mathfrak{C}$ and $\snd{y}$ in $\mathfrak{L}$ or inversely.
Hence, we have the following property:

\begin{property}
$\mathfrak{C}$ \emph{cannot} be reintegrated into $\mathfrak{L}$ by
applying a sorting DCJ,  $\DCJ\aff{x}{y}$, on $A$ if and only if either:

\begin{itemize} 
\item[(1)] for any adjacency $\aff{x}{y}$ in $\mathfrak{C}$
  (resp. $\mathfrak{L}$), markers $\snd{x}$ and $\snd{y}$ are in
  $\mathfrak{L}$ (resp. $\mathfrak{C}$), %$(1)$, 
  or

\item[(2)] for any adjacency $\aff{x}{y}$ in $\mathfrak{C}$ (resp. $\mathfrak{L}$), markers $\snd{x}$ and $\snd{y}$ are also in $\mathfrak{C}$ (resp. $\mathfrak{L}$). %$(2)$

\end{itemize}

\label{formuleENR}
\end{property}
\begin{proof}
If there exists no  adjacency $\aff{x}{y}$ in $A$ such that $\snd{x}$ is in $\mathfrak{C}$ and $\snd{y}$ in $\mathfrak{L}$ or inversely, then $A$ necessarily satisfies either $(1)$, or $(2)$.\qed 
\end{proof}

\begin{definition}
An interval $I\aff{a}{b}$ in  $\mathcal{I}(G)$ is called
\emph{interval of type 1} (resp. \emph{interval of type 2}) if $\DCJ\aff{a}{b}$ produces a genome $A$ satisfying configuration $(1)$  (resp.  configuration $(2)$) described in Property \ref{formuleENR}.

\end{definition}

For example, in genome  $(\circ~~\fst{2}~~\fst{1}~~\snd{1}~~\fst{3}~~\snd{2}~~\snd{3}~~\circ)$,  $I\asf{1}{3}$ is of type 1 as  $\DCJ\asf{1}{3}$ produces genome 
$(\circ~~\fst{2}~~\fst{1}~~\snd{3}~~\circ) ~ (\snd{1}~~\fst{3}~~\snd{2})$ ;  $I\ass{2}{3}$ is of type 2 as  $\DCJ\ass{2}{3}$ produces genome  $(\circ~~\fst{2}~~\fst{3}~~\snd{2}~~\snd{3}~~\circ) ~ (\fst{1}~~\snd{1})$.

Now we give the maximum numbers of intervals of type 1 and type 2 that can be contained in genome $G$.

\begin{lemma}
The maximum number of intervals of type 1 in $\mathcal{I}(G)$ is 2.
\label{maxType1}
\end{lemma}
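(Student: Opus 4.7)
The plan is to exploit the fact that a type~1 interval must occupy exactly half of the genome, so that any two type~1 intervals overlap in a rigid configuration which pins the ``defining adjacency'' of each interval at a unique position of $G$; a third type~1 interval would then force that unique adjacency to sit at two different positions, a contradiction.

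First I would establish the size constraint. The symmetric (``resp.'') form of Property~\ref{formuleENR}(1) implies that every marker of $\mathfrak{C}$ has its paralog in $\mathfrak{L}$ and vice versa, so $|\mathfrak{C}| = |\mathfrak{L}| = n$; moreover the adjacency $\aff{a}{b}$ itself must lie inside $\mathfrak{C}$, since placing it in $\mathfrak{L}$ would leave both $\fst{a}$ and its flank-paralog $\snd{a}$ in $\mathfrak{L}$, violating type~1. Next, take two distinct type~1 intervals $K_1 = [i_1,j_1]$ and $K_2 = [i_2,j_2]$ with $j_t = i_t + n - 1$ and, up to relabelling, $i_1 < i_2$. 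They cannot be disjoint in a genome of length $2n$ with non-telomeric flanks, so $r := i_2 - i_1 \in [1, n-1]$ and $G = P_1 \cdot (K_1 \setminus K_2) \cdot (K_1 \cap K_2) \cdot (K_2 \setminus K_1) \cdot P_3$, where $|P_1| + |P_3| = n - r = |K_1 \cap K_2|$. Combining the type~1 conditions on $K_1$ and $K_2$ forces the paralogs of $K_1 \cap K_2$ to fill $P_1 \cup P_3$ exactly, and consequently $K_1 \setminus K_2$ and $K_2 \setminus K_1$ are paralog translates of each other.

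Because the left flank $\snd{a}$ of $K_1$ is in $P_1$ and its right flank $\snd{b}$ is in $K_2 \setminus K_1$, the paralogs $\fst{a}$ and $\fst{b}$ sit respectively in $K_1 \cap K_2$ and $K_1 \setminus K_2$; the only pair of adjacent positions straddling those two segments is $(i_2-1, i_2)$, so $\fst{b} = g_{i_2-1}$ and $\fst{a} = g_{i_2}$. Hence the adjacency $\aff{a}{b}$ defining $K_1$ is forced to sit at positions $(i_2-1, i_2)$ of $G$. Suppose for contradiction that three distinct type~1 intervals existed, with $i_1 < i_2 < i_3$: applying the same reasoning to the pair $(K_1, K_3)$ would pin the same unique adjacency at positions $(i_3 - 1, i_3)$, forcing $i_2 = i_3$ and contradicting the distinctness of $K_2$ and $K_3$.

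The main obstacle is the counting step in the middle paragraph: one must combine the two type~1 conditions to see that the paralogs of $K_1 \cap K_2$ exhaust $P_1 \cup P_3$, hence that $K_1 \setminus K_2$ and $K_2 \setminus K_1$ are paralog images of each other, and thereby trap $(\fst{a},\fst{b})$ at the unique crossing of the boundary $K_1 \setminus K_2 \mid K_1 \cap K_2$. Once this structural rigidity is secured, the uniqueness of adjacency positions in $G$ closes the bound of~$2$ essentially for free.
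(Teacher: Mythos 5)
Your argument counts the wrong objects, and this is fatal to the stated bound. The lemma bounds the number of type~1 \emph{elements of $\mathcal{I}(G)$}, which are indexed by adjacencies (Remark~\ref{maxEdges}: $2n-1$ adjacencies define $2n-1$ intervals), and two \emph{distinct} adjacencies can have the \emph{same} underlying position-range: a range is the interval of $\aff{u}{v}$ in the ``open'' form when $\snd{u}$ and $\snd{v}$ sit just outside its two ends, and simultaneously the interval of a second adjacency $\aff{x}{y}$ in the ``closed'' form when $\snd{y}$ and $\snd{x}$ are its two endpoints. The paper's proof obtains the bound $2$ exactly as (at most one type-1 position-range) times (at most two adjacencies realizing that range). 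Your proof instead assumes that distinct type-1 intervals have distinct left endpoints $i_1<i_2<i_3$ and rules out three distinct \emph{ranges}; it says nothing about two adjacencies sharing one range, so even if every step were airtight it would only cap the number of type-1 ranges at $2$, which allows up to $4$ type-1 elements of $\mathcal{I}(G)$, not $2$.

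A second gap is that your flank analysis silently assumes the open form throughout: you place $\snd{a}$ at position $i_1-1$ and $\snd{b}$ at position $j_1+1$ and deduce that the adjacency $\aff{a}{b}$ lies inside $\mathfrak{C}$. In the closed form the markers $\snd{b}$ and $\snd{a}$ are the \emph{endpoints} of the interval, hence inside $\mathfrak{C}$, and the type-1 condition then forces $\aff{a}{b}$ to lie \emph{outside} $\mathfrak{C}$ --- the opposite of what you use. The closed form also undercuts your dismissal of disjoint intervals ``with non-telomeric flanks'', since a closed-form interval has no exterior flanks at all. Your size computation $|\mathfrak{C}|=|\mathfrak{L}|=n$ and the localization of $\fst{a}$ and $\fst{b}$ in the open-form overlapping case are correct and could be reused, but the per-adjacency counting and the open/closed case split must both be repaired before the bound of $2$ follows.
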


\begin{proof}
First, note that there cannot be two intervals $I$ and $J$ of $\mathcal{I}(G)$ 
such that $I \neq J$, and both  $I$ and $J$ are of type 1.
Now, if $I$ is an interval of type 1, there can be at most two different 
adjacencies $\aff{x}{y}$ and $\aff{u}{v}$ such that 
$I\aff{x}{y} = I\aff{u}{v} = I$. In this case $G$ necessarily has a chromosome of the form $(\ldots ~~\snd{x}~~\snd{v}~~\ldots ~~\snd{u}~~\snd{y}~~\ldots)$ or $(\ldots ~~\snd{u}~~\snd{y}~~\ldots ~~\snd{x}~~\snd{v}~~\ldots)$.
Therefore, there are at most two intervals of type 1 in $\mathcal{I}(G)$.\qed

\end{proof}

\begin{lemma}
The maximum number of intervals of type 2 in  $\mathcal{I}(G)$ is $n$.
\label{maxType2}
\end{lemma}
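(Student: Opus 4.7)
The plan is to show that in every type-2 interval $I(u,v)$, the adjacency $(u,v)$ must lie strictly inside the middle piece $M$ extracted by $\DCJ(u,v)$, and then to encode such ``central'' adjacencies by a simple indicator on positions of $G$, which is bounded by $n$.

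First I would argue that an adjacency whose DCJ produces a type-2 configuration cannot lie outside $M$. If $(\snd{u}~\fst{x}) < (\fst{y}~\snd{v})$, then $M = \oiss{u}{v}$ does not contain $\snd{u}$ or $\snd{v}$; the balance of the two resulting chromosomes then forces the paralog of $\snd{u}$ (namely $u$) to lie in the left outer part of $G$ and the paralog of $\snd{v}$ (namely $v$) to lie in the right outer part, so $u$ and $v$ cannot be consecutive in $G$. If instead $(\fst{y}~\snd{v}) < (\snd{u}~\fst{x})$, then $M = \ciss{v}{u}$ already contains $\snd{u}$ and $\snd{v}$, and balance forces $u$ and $v$ to lie in $M$ as well. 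Thus only the second sub-case can yield a type-2 interval, and in that sub-case $(u,v)$ lies strictly inside $M$, with $\snd{v}$ at the left end of $M$ and $\snd{u}$ at the right end.

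Next I would encode this local structure by a simple indicator on positions. For each position $c$ of the linear chromosome, call $c$ \emph{outgoing} if the paralog of the marker at position $c$ lies at some position strictly greater than $c$, and \emph{incoming} otherwise. The previous step shows that an adjacency $(u,v)$ at consecutive positions $(k,k+1)$ can be the centre of a type-2 interval only when $k$ is outgoing and $k+1$ is incoming, since these are exactly the conditions placing $\snd{u}$ to the right of the adjacency (the right end of $M$) and $\snd{v}$ to its left (the left end of $M$). Since each adjacency determines its interval, the map sending a type-2 interval to the position $k$ of its central adjacency is injective.

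Finally I would close the count. Exactly $n$ positions are outgoing, one per gene (the leftmost copy of each gene is outgoing, and the rightmost copy is incoming), and each outgoing position is followed by a unique next position which is either outgoing or incoming. The number of outgoing--incoming consecutive pairs is therefore at most $n$, which yields the bound. The main obstacle of the plan is the first step: ruling out the ``$(u,v)$ outside $M$'' configuration requires the balance condition to be used carefully in both sub-cases of the interval definition; once this is done, the indicator encoding and the counting step are immediate.
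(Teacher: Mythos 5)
There is a genuine gap, and it is in the very first step. You claim that a type-2 interval can only arise in the sub-case $(\fst{y}~\snd{v}) < (\snd{u}~\fst{x})$, i.e.\ that the defining adjacency $(u~v)$ must lie inside the excised piece $M$. This is false, and the paper's own illustrative example refutes it: in $G = (\circ~\fst{2}~\fst{1}~\snd{1}~\fst{3}~\snd{2}~\snd{3}~\circ)$ the interval $I\ass{2}{3}$ is of type 2, yet here $\fst{2}$ precedes $\fst{3}$ in $G$ (the first sub-case), the excised piece is $[\fst{1}~;~\snd{1}]$, and the adjacency $(\snd{2}~\snd{3})$ lies entirely outside it. Your ``balance'' argument for excluding this sub-case does not hold: the type-2 condition only requires that every marker of $\mathfrak{L}$ have its paralog in $\mathfrak{L}$, and $\mathfrak{L}$ is the \emph{union} of the left and right outer parts (glued by the new adjacency $\ass{u}{v}$); nothing forces $u$ into the left part and $v$ into the right part, so $u$ and $v$ may perfectly well be consecutive outside $M$ — as they are in the example. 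Since the injection of type-2 intervals into outgoing--incoming position pairs rests entirely on this false characterization, the counting step collapses with it.

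For comparison, the paper's proof avoids any geometric characterization of where the adjacency sits. It pairs each adjacency $\aff{x}{y}$ (not containing $\circ$) with the adjacency $\asf{x}{z}$ beginning with the paralogous first marker, and observes that at most one interval in each such pair can be of type 2. Since $\mathcal{I}(G)$ has $2n-1$ intervals and exactly one adjacency (the one whose first marker $u$ satisfies that $\asf{u}{\circ}$ is an adjacency) is left unpaired, at most $(2n-2)/2 + 1 = n$ intervals are of type 2. If you want to repair your approach, you would need a correct necessary condition covering both sub-cases — in particular the one where $\mathfrak{C}$ contains neither $\snd{u}$ nor $\snd{v}$ and the constraint instead bears on the markers strictly between $\snd{u}$ and $\snd{v}$ — and it is not clear this yields a cleaner count than the pairing argument.
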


\begin{proof}
\label{proof:proofType2}
First, note that for two adjacencies $\aff{x}{y}$ and $\asf{x}{z}$ in $G$ that 
do not contain marker $\circ$, if $\aff{x}{y}$ is of type 2 then $\asf{x}{z}$ 
cannot be of type 2.
Now, there is only one marker $u$ such that  $\asf{u}{\circ}$ is an adjacency 
of $G$. Let $\aff{u}{v}$ be the adjacency of $G$ having $u$ as first marker, 
then at most half of the intervals in $\mathcal{I}(G) - \{I\aff{u}{v}\}$ can 
be of type 2.
Therefore, there are at most $n$ intervals of type 2 in $\mathcal{I}(G)$.\qed

\end{proof}

\begin{theorem}
If $\NG(G)$ contains $C$ cycles, then the BI 1-tandem halving distance of $G$
    is given by: 
$$d^t_{BI}(G) =  \left \lfloor \frac{n - C}{2} \right \rfloor$$
\label{th:distance}
\end{theorem}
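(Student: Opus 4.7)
The lower bound $d^t_{BI}(G) \geq \lfloor (n-C)/2 \rfloor$ is already established, so it suffices to construct a matching BI scenario transforming $G$ into some 1-tandem duplicated genome in $\lfloor (n-C)/2 \rfloor$ operations. The plan is to repeatedly perform a BI that behaves as two consecutive sorting DCJ operations: by Property \ref{1BIto2DCJ} a BI is the composition of an excision and a reintegration, and when the two corresponding intervals in $\mathcal{I}(G)$ are compatible, the induced pair of DCJs is simultaneously a BI and a pair of sorting DCJs, so $n - C$ decreases by exactly $2$.

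The key step is therefore to prove that a compatible pair of intervals always exists when $n > 3$. I would establish this by a direct counting argument. By Remark \ref{maxEdges} there are $2n-1$ intervals in $\mathcal{I}(G)$, while Lemmas \ref{maxType1} and \ref{maxType2} bound the number of \emph{bad} intervals by $2 + n$. Whenever $2n-1 > n+2$, i.e.\ $n \geq 4$, some interval $I\aff{a}{b}$ is neither of type 1 nor of type 2, and by the contrapositive of Property \ref{formuleENR} the circular chromosome excised by $\DCJ\aff{a}{b}$ admits a sorting reintegration; the adjacency driving this reintegration supplies the sought compatible partner $I\aff{x}{y}$.

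Applied iteratively, this yields a sequence of BIs each lowering $n - C$ by $2$. After each BI the genome may be reduced (collapsing newly formed runs of double-adjacencies into single markers as in section \ref{sec:dedouble}), so that the counting argument can be reapplied to the reduced genome. The recursion is halted once the reduced genome has at most three distinct markers, equivalently $n - C \in \{2,3\}$. The base cases are then settled by a short case analysis: with two or three effective markers only finitely many totally duplicated configurations exist, each is directly seen to reach a 1-tandem in a single BI, and one checks $d^t_{BI}(G) = 1 = \lfloor (n-C)/2 \rfloor$ in each case.

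The main obstacle I expect is the parity endgame. When $n - C$ is even, a greedy sequence of fully sorting BIs naively terminates at a perfectly duplicated (circular) genome rather than at a linear 1-tandem; consequently the last BI of the scenario must be chosen so that one of its two DCJs is a linearization rather than a cycle-splitting sorting operation. Verifying that the compatible-interval machinery can always be tuned to permit this linearization on the final step, and dovetailing this with the base case analysis, is the delicate point that closes the proof and yields the matching upper bound $d^t_{BI}(G) \leq \lfloor (n-C)/2 \rfloor$.
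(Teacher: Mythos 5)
Your proposal is correct and follows essentially the same route as the paper's proof: the same counting argument ($2n-1$ intervals versus at most $n+2$ of type 1 or 2, via Lemmas \ref{maxType1} and \ref{maxType2}) yields a compatible pair whenever $n>3$, and the scenario is completed by reduction and a base-case analysis for $n\in\{2,3\}$. The ``parity endgame'' you flag is exactly what the paper's explicit $n=2,3$ case analysis resolves: there the final BI is exhibited directly (it need not be a pair of sorting DCJs), and the bound $2\le d^p_{DCJ}(G)\le 3$ in those cases gives $d^t_{BI}(G)=1=\left\lfloor \frac{n-C}{2}\right\rfloor$.
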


\begin{proof}

    Since there are $2n - 1$ intervals in $\mathcal{I}(G)$, and at
    most $n+2$ are of type $1$ or $2$, then if $G$ contains more than three distinct markers we have $n~>~3$, and since
    $2n-1~>~n+2$ then there exist two compatible intervals in
    $\mathcal{I}(G)$ inducing a BI operation that decreases the DCJ
    distance by $2$.

Next, I show that if $n = 2$ or $n = 3$, then $d^t_{BI}(G)=1$ and $2
\leq d^p_{DCJ}(G) \leq 3$.
 If $n = 2$, then the genome can be written, either as
$(\circ~\fst{a}~\fst{b}~\snd{b}~\snd{a}~ \circ)$, in which case a BI
can swap $\fst{a}$ and $\fst{b}$ to produce a 1-tandem duplicated
genome, or as $(\circ~\fst{a}~\snd{a}~\fst{b}~\snd{b}~\circ)$, in
which case a BI can swap $a$ and $\snd{a}~\fst{b}$ to produce a
1-tandem duplicated genome.

If $n = 3$, then the genome has two double-adjacencies to be
constructed, of the form $\ass{a}{b}$, $\ass{x}{y}$, with $\aff{a}{b}$
and $\aff{x}{y}$ being two adjacencies already present in the genome
such that $\fst{b} = \fst{x}$ or $\fst{b} = \snd{x}$ and $a$ and $y$
are distinct markers. One can rewrite $\aff{a}{b}$ and $\aff{x}{y}$ as
single markers since they will not be splitted, which makes a genome
with 4 markers such that at most 2 are misplaced. Then, a single BI
can produce a 1-tandem duplicated genome.

Now, it is easy to see to see that if  $n = 2$ or $n = 3$, then $d^p_{DCJ}(G) = n- C \leq 3$. Finally, if  $n = 2$ or $n = 3$, then $d^p_{DCJ}(G) \geq 2$, otherwise we would have  $d^p_{DCJ}(G) = 1$ which would imply, as $G$ consists in a single linear chromosome, $d^t_{BI}(G) = 0$.
In conclusion, if $n~>~3$ then there exist two compatible intervals in  $\mathcal{I}(G)$, otherwise if  $n = 2$ or $n = 3$, then $d^t_{BI}(G)=1$ and  $2 \leq d^p_{DCJ}(G) \leq 3$. Therefore $d^t_{BI} = \lfloor \frac{d^p_{DCJ}}{2}\rfloor = \lfloor \frac{n-C}{2}\rfloor$.\qed
\end{proof}

\subsubsection{Sorting algorithm}
\label{sec:scenario}

\noindent
In Section \ref{sec:dist}, I showed that if a genome $G$ contains more than 
three distinct markers after reduction then there exist two compatible 
intervals in $\mathcal{I}(G)$ inducing a BI to perform.
If $G$ contains two or three distinct markers then the BI to perform can be 
trivially computed.
Thus the main concern of this section is to describe an efficient algorithm 
for finding compatible intervals when $n~>~3$.

As in Section \ref{sec:dist}, in the following, $G$ denotes a genome consisting 
of $n$ distinct markers after reduction. 

It is easy to show that the set of intervals 
$\mathcal{I}(G)$ can be built in $O(n)$ time and space complexity.

We now show that finding 2 compatible intervals in  $\mathcal{I}(G)$ can be done in $O(n)$ time and space complexity.

\begin{property}

If $n~>~3$ %and $\mathcal{I}(G)$ contains compatible intervals
, then all the smallest intervals in $\mathcal{I}(G)$ that are not of type 2 admit compatible intervals.
\label{smallestOK}
\end{property}

\begin{proof}

    Let $J$ be a smallest interval that is not of type 2 in
    $\mathcal{I}(G)$. As $J$ is not of type 2, then $J$ has compatible
    intervals if $J$ is not of type 1.

Let us suppose that $J$ is of type 1, then for any adjacency $(a ~ b)$ such 
that markers $a$ and $b$ are not in $J$,  $\snd{a}$ and $\snd{b}$ are in $J$, 
and then $I(a ~ b)$ is strictly included in $J$ and $I(a ~ b)$ can't be of 
type 2. Such adjacency does exist as there are $n~>~3$ markers not included in $J$.

Therefore $J$ is not the smallest, which is a contradiction.\qed

\end{proof}

We are now ready to give the algorithm for sorting a duplicated genome $G$ into a 1-tandem duplicated genome with $\lfloor \frac{n - C}{2} \rfloor$ BI operations.

\begin{algorithm}                      % enter the algorithm environment
\caption{Reconstruction of a 1-tandem duplicated genome}          % give the algorithm a caption
\label{alg1}                           % and a label for \ref{} commands later in the document
\begin{algorithmic}[1]                    % enter the algorithmic environment

\WHILE{$G$ contains more than $3$ markers}
\STATE Construct $\mathcal{I}(G)$
\STATE Pick a smallest interval $I(a ~~ b)$ that is not of type 2 in $\mathcal{I}(G)$
\STATE Find an interval $I(x ~~ y)$ in $\mathcal{I}(G)$ compatible with $I(a ~~ b)$
\STATE Perform the \BI ~ equivalent to $\DCJ(a ~~ b)$ followed by $\DCJ(x ~~ y)$
\STATE Reduce $G$
\ENDWHILE
\IF{$G$ contains $2$ or $3$ markers}
\STATE Find the last BI operation and perform it
\ENDIF
\end{algorithmic}
\end{algorithm}

\begin{theorem}

Algorithm \ref{alg1} reconstructs a 1-tandem duplicated genome with a BI scenario of length $\lfloor \frac{n - C}{2} \rfloor$ in $O(n^2)$ time and space complexity, by computing pairs of sorting DCJ operations.
\end{theorem}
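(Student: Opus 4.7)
The plan is to verify three things in sequence: (i) that Algorithm \ref{alg1} always terminates producing a 1-tandem duplicated genome; (ii) that the number of BI operations it performs equals $\lfloor (n-C)/2 \rfloor$; and (iii) that each iteration runs in $O(n)$ time and space, giving the stated overall complexity.

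First I would argue correctness by combining the structural results already proved. At each iteration with $n>3$, Property \ref{smallestOK} guarantees that a smallest interval $I\aff{a}{b}$ in $\mathcal{I}(G)$ that is not of type 2 exists and admits at least one compatible partner $I\aff{x}{y}$; hence line 4 always succeeds. By Property \ref{1BIto2DCJ} and the compatibility characterization, the BI performed on line 5 is equivalent to the successive application of $\DCJ\aff{a}{b}$ and $\DCJ\aff{x}{y}$, and compatibility means both of these DCJs are sorting, so the DCJ halving distance $d^p_{DCJ}$ decreases by exactly $2$. The reduction on line 6 rewrites each newly formed block of double-adjacencies as single markers, strictly decreasing $n$ by at least one and keeping the invariant that $G$ remains a totally duplicated unilinear genome. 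When $n$ drops to $2$ or $3$, the explicit case analysis in the proof of Theorem \ref{th:distance} shows that exactly one final BI completes the sorting. Summing the contributions, the algorithm performs $\lfloor (n-C)/2 \rfloor$ BI operations in total, matching the distance from Theorem \ref{th:distance}.

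Next I would handle the complexity of a single iteration. Constructing $\mathcal{I}(G)$ from the ordered list of adjacencies of $G$ is linear in $n$, since each adjacency contributes one interval whose endpoints are obtainable in constant time from a lookup table mapping each marker to the positions of its two copies. Classifying every interval as type 1, type 2, or neither also takes constant time per interval once these position tables are built, so finding a smallest non-type-2 interval $I\aff{a}{b}$ costs $O(n)$. To find a compatible partner $I\aff{x}{y}$, one scans the intervals of $\mathcal{I}(G)$ and tests in constant time whether $I\aff{x}{y}$ properly crosses $I\aff{a}{b}$ and whether $x\neq \snd{a}$ and $y\neq \snd{b}$; Property \ref{smallestOK} ensures the scan succeeds, and it is $O(n)$. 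Performing the BI and the subsequent reduction only rewrites a linear number of adjacencies, so each iteration is $O(n)$. Since each iteration strictly decreases the number of distinct markers through reduction (and the contracted BI equivalent also removes at least one marker from consideration), the while loop runs at most $O(n)$ times, yielding the overall $O(n^2)$ time and space bound.

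The step I expect to require the most care is the $O(n)$ argument for line 4. A naive per-iteration scan over $\mathcal{I}(G)$ gives $O(n)$ only if each compatibility check is truly constant-time, which in turn relies on maintaining, across iterations, the position lookup tables for markers and their paralogs so that the endpoint comparisons defining overlap and the exclusions $x\neq\snd{a},\, y\neq\snd{b}$ can be tested without rescanning the genome. Provided this bookkeeping is done incrementally when the genome is rewritten on line 6 (touching only adjacencies affected by the BI and the local reduction), the stated $O(n^2)$ complexity follows.
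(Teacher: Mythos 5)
Your proposal is correct and follows essentially the same route as the paper: invoke Property \ref{smallestOK} to guarantee each iteration finds a compatible pair, count each BI as two sorting DCJs to get the length $\lfloor (n-C)/2 \rfloor$ (with the $n\le 3$ case handled separately as in Theorem \ref{th:distance}), and bound each iteration by $O(n)$ for an overall $O(n^2)$. You actually supply more detail than the paper does on the per-iteration $O(n)$ bookkeeping, which the paper simply asserts.
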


\begin{proof}

Building $\mathcal{I}(G)$ and finding two compatible intervals can be done in $O(n)$ time and space complexity. It follows that the while loop in the algorithm can be computed in $O(n^2)$ time and space complexity.

Finding and performing the last \BI~operation when  $2\leq n\leq 3$ can be done in constant time and space complexity.

Moreover, all \BI~operations, possibly excluding the last one, are computed as 
pairs of compatible intervals, ie. pairs of sorting DCJ operations, which ensures that the length of the scenario 
is $\lfloor \frac{n - C}{2} \rfloor$. \qed
\end{proof}

\begin{corollary}
Any BI scenario computed by Algorithm \ref{alg1} is also a most parsimonious DCJ scenario, twice as long since a BI is equivalent to 2 DCJ.
\end{corollary}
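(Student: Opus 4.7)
The plan is to turn the BI scenario from Algorithm~\ref{alg1} into a DCJ scenario of twice the length and match it with a DCJ lower bound. First I would recall from the preceding theorem that Algorithm~\ref{alg1} produces a BI scenario of length $\lfloor (n-C)/2 \rfloor$, with every BI chosen as a compatible pair of intervals. Applying Property~\ref{1BIto2DCJ} to each BI yields a DCJ scenario of length exactly $2\lfloor (n-C)/2 \rfloor$ in which, by the interval-compatibility analysis from the proof of Theorem~\ref{th:distance}, every DCJ is sorting, i.e., strictly decreases the halving distance $d^p_{DCJ}$ by one. This establishes the upper bound $d^t_{DCJ}(G) \leq 2\lfloor (n-C)/2 \rfloor$.

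For the matching lower bound I would split on the parity of $n-C$. When $n-C$ is odd, $2\lfloor (n-C)/2 \rfloor = n-C-1$, which coincides with the bound from Lemma~\ref{DCJdistLB}, and equality is immediate. When $n-C$ is even, the algorithm delivers $n-C$ DCJs while Lemma~\ref{DCJdistLB} yields only $n-C-1$, so a parity refinement is needed. I would argue by contradiction: assume a DCJ scenario of length $n-C-1$ transforms $G$ into some 1-tandem $H$. Since $d^p_{DCJ}(G) = n-C$ by the halving formula (Property~\ref{prop:EC} forces every cycle of $\NG(G)$ to be even and the unique path to be even) and $d^p_{DCJ}(H) = 1$, each of the $n-C-1$ DCJs must be sorting. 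I would then exploit the unilinear-linear topology of the target: each excision DCJ has to be matched by a later reintegration so that no circular chromosome survives in $H$, so the count of topology-changing DCJs is even; consequently the number of sorting reversals in the scenario has the same parity as $n-C-1$, i.e., is odd, and a natural-graph analysis shows that no single sorting reversal can simultaneously complete the last 2-cycle and close the telomeric 2-path expected in a 1-tandem, yielding the contradiction.

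Combining the two bounds gives $d^t_{DCJ}(G) = 2\lfloor (n-C)/2 \rfloor$, which is precisely the length of the DCJ scenario induced by Algorithm~\ref{alg1}; that scenario is therefore a most parsimonious DCJ scenario, as claimed. The main obstacle will be the parity step in the even case, since Lemma~\ref{DCJdistLB} is not tight there and closing the gap really requires using that the target 1-tandem is a single linear chromosome. I expect to package this cleanly as a counting invariant: the cumulative ``excisions minus reintegrations'' is zero at both endpoints of any admissible scenario, hence even, which turns the chromosomal-topology constraint into a parity obstruction on the total length and pins the DCJ distance down to $2\lfloor (n-C)/2 \rfloor$.
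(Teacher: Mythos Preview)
The paper offers no separate proof of this corollary; it is stated as an immediate consequence of the preceding theorem, whose proof emphasizes that every BI produced by Algorithm~\ref{alg1} (except possibly the last one when $n\le 3$) is a pair of \emph{sorting} DCJs. The implicit justification is simply that: each DCJ in the induced scenario decreases $d^p_{DCJ}$ by one, so the scenario is ``most parsimonious.'' The paper does not separately argue that the resulting DCJ scenario achieves $d^t_{DCJ}(G)$.

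You actually go further than the paper and correctly spot that matching the induced scenario length $2\lfloor(n-C)/2\rfloor$ against a lower bound on $d^t_{DCJ}$ is not automatic. Your odd case is clean: $2\lfloor(n-C)/2\rfloor = n-C-1$ coincides with Lemma~\ref{DCJdistLB}. The trouble is the even case. Your parity sketch starts well---since $G$ and $H$ are both unilinear, excisions and reintegrations pair up, so topology-changing DCJs are even in number, hence the reversal count in a hypothetical length-$(n-C-1)$ scenario is odd. But the final sentence (``no single sorting reversal can simultaneously complete the last 2-cycle and close the telomeric 2-path'') names no invariant that is actually violated. Even-cycle counting is perfectly consistent: $C + (n-C-1) = n-1$, which is exactly $|\EC|$ in $\NG(H)$ for a 1-tandem $H$. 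Nothing in your argument rules out that some 1-tandem is reachable in $n-C-1$ DCJs.

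In fact the paper only settles this question later, in the DCJ section, via the shapeshifting analysis and the parameter $f_G$ (Theorem~\ref{th:restricted}): whether $d^t_{DCJ}(G)$ equals $n-C-1$ or $n-C$ depends on the parity of $|C_D|$ in an optimal perfectly duplicated genome for $G_c$, not on a chromosome-topology parity of the scenario. Your excision/reintegration count is not a substitute for that argument.
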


An example of scenario is shown in figure \ref{fig:scenarEx}.

\begin{figure}
\scriptsize{$d_{DCJ}=n-EC-\left\lfloor \frac{OP}{2} \right\rfloor = 4$}

~

\begin{center}

\begin{tikzpicture}[node distance=4mm]
\node (G1) {$\overset{~n=5;EC=1;d=4}{(\circ~\snd{1}~\snd{2}~3~\snd{3}~_{\blacktriangle}5~4_{\blacktriangle}~1~2~\snd{4}~\snd{5}~\circ)}$};

\node (G2) [below=of G1] {$\overset{~n=5;EC=2;d=3}{(\circ~\snd{1}~\snd{2}~_{\blacktriangle}3~\snd{3}~1~2~\snd{4}~\snd{5}~\circ)~(5~_{\blacktriangle}4)}$};

\node (G3) [below=of G2] {$\overset{~n=5;EC=3;d=2}{(\circ~_{\blacktriangle}\snd{1}~\snd{2}~4~5~3_{\blacktriangle}~\snd{3}~1~2~\snd{4}~\snd{5}~\circ)}$};

\node (G4) [below=of G3] {$\overset{~n=5;EC=4;d=1}{(\snd{1}~\snd{2}~4~5~_{\blacktriangle}3)~(\circ_{\blacktriangle}~\snd{3}~1~2~\snd{4}~\snd{5}~\circ)}$};
\node (G5) [below=of G4] {$\overset{~n=5;EC=5;d=0}{(\circ~3~\snd{1}~\snd{2}~4~5~\snd{3}~1~2~\snd{4}~\snd{5}~\circ)}$};

\draw[->] (G1) -- node[right] {excision} (G2);
\draw[->] (G2) -- node[right] {reintegration} (G3);
\draw[->] (G3) -- node[right] {excision} (G4);
\draw[->] (G4) -- node[right] {reintegration} (G5);

\node (n1) [right=of G1] {};
\node (n2) [right=of G2] {};
\node (n3) [right=of G3] {};
\node (n4) [right=of G4] {};
\node (n5) [right=of G5] {};

 \path (n3) -| node (b2) {} (n2);
    \path (n1) -| node (b1) {} (n2);
 \path (n5) -| node (b4) {} (n4);
    \path (n3) -| node (b3) {} (n4);

\draw[thick,decorate,decoration={brace, amplitude=4pt}] 
        (b1) -- (b2) node[midway, right=3pt]{
= BI

 $(\circ~\snd{1}~\snd{2}~\breakpoint~\mathbf{3~\snd{3}}\breakpoint~5~\breakpoint\mathbf{4} \breakpoint~1~2~\snd{4}~\snd{5}~\circ)$};

    \draw[thick,decorate,decoration={brace, amplitude=4pt}] 
        (b3) -- (b4) node[midway, right=3pt]{
= BI

$(\circ~\breakpoint\mathbf{\snd{1}~\snd{2}~4~5}\breakpoint~\breakpoint\mathbf{3}\breakpoint~\snd{3}~1~2~\snd{4}~\snd{5}~\circ)$};
\end{tikzpicture}
\end{center}
\begin{flushright}
\scriptsize{$ d_{BI} = \left \lfloor \frac{d_{DCJ}}{2} \right \rfloor = 2 $}
\end{flushright}
\caption{A BI scenario computed by algorithm \ref{alg1}.}
\label{fig:scenarEx}
\end{figure}

\subsubsection{Conclusion}
\noindent
By expressing BI operations through the DCJ model I could show that restricting the scope of allowed DCJ operations under the constraint of performing only BI doesn't affect the 1-tandem halving distance.

This also means BI scenarios computed by the algorithm are in fact optimal DCJ scenarios and that it solves the DCJ 1-tandem halving problem for a subclass of genomes, namely genomes for which all markers must have the same orientation (sign) as their paralogs.

We will now see how to solve this problem for DCJ in the general case.

\subsection{DCJ}

\def\EP{{{\textsc{EP}}}}
\def\EC{{{\textsc{EC}}}}
\def\OC{{{\textsc{OC}}}}
\def\OP{{{\textsc{OP}}}}

In this section, because I only consider DCJ operations and no BI this time, I will drop the subscript.

$d^p(G)$ denotes the DCJ genome halving distance, towards a perfectly duplicated genome.

$d^t(G)$ denotes the DCJ 1-tandem halving distance, towards a 1-tandem duplicated genome.

~~

I recall that $d^{p}(G)$ can be computed using a data structure called the
\emph{natural graph}, first introduced in \cite{Mabrouk03}.  $\NG(G)$
is the graph whose vertices are the adjacencies of $G$, and 2 vertices
are connected by an edge iff they share a paralogous \emph{extremity} (see figure
\ref{fig:NG}).

\begin{figure}
\begin{tikzpicture}[scale=1]
    \small
    \node at (9,-1) {$G = (\circ~
      \fst{1}~\mfst{4}~~\fst{5}~~\snd{2}~\mfst{3}~~\fst{2}~~\snd{1}~~\snd{4}~~\snd{5}~~\snd{3}~\circ)$};
    \node at (9,-2) {$|OP| = 1 \qquad |EP| = 0$};
    \node at (9,-3) {$|OC| = 1 \qquad |EC| = 1$};
    
    \scriptsize
    \node[draw,circle] at (0,-1) (l0) {$\circ~~1$};
    \node[draw,circle] at (1.5,-1) (l1) {$2~~\snd{1}$};
    \node[draw,circle] at (3,-1) (l2) {$\snd{2}\mfst{3}$};
    \node[draw,circle] at (4.5,-1) (l3) {$\snd{3}~~\circ$};
    \draw (l0) -- node[above] {$\cdot 1$} (l1) -- node[above]
    {$2\cdot$} (l2) -- node[above] {$\cdot 3$} (l3);

    \node[draw,circle] at (0,-2) (ca0) {$\fst{5}~~\snd{2}$};
    \node[draw,circle] at (2,-2) (ca1) {$\mfst{3}~\fst{2}$};
    \node[draw,circle] at (1,-3.5) (ca2) {$\snd{5}~~\snd{3}$};
    \draw (ca0) -- node[above] {$\cdot 2$} (ca1) -- node[right]
    {$3\cdot$} (ca2) -- node[left] {$5\cdot$} (ca0);

    \node[draw,circle] at (4,-2) (cb0) {$\fst{1}\mfst{4}$};
    \node[draw,circle] at (6,-2) (cb1) {$\snd{1}~~\snd{4}$};
    \node[draw,circle] at (6,-3.5) (cb2) {$\mfst{4}~\fst{5}$};
    \node[draw,circle] at (4,-3.5) (cb3) {$\snd{4}~~\snd{5}$};
    \draw (cb0) -- node[above] {$1\cdot$} (cb1) -- node [right]
    {$\cdot 4$} (cb2) -- node[below] {$\cdot 5$} (cb3) -- node [left]
    {$4\cdot$} (cb0);

\end{tikzpicture}
\caption{The natural graph of $G$ and the number of odd
  and even paths and cycles.}
\label{fig:NG} 
\end{figure}
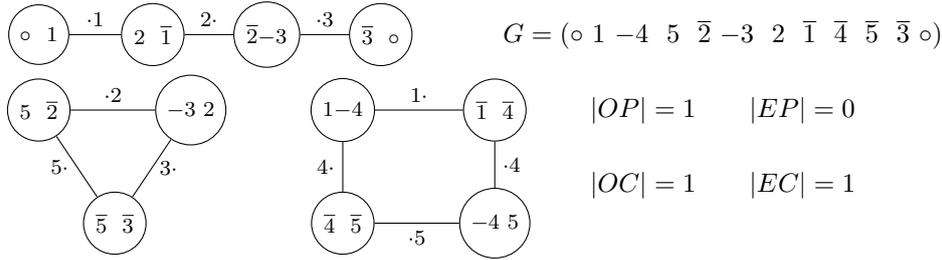
As an adjacency concerns a maximum of 2 markers extremities, this graph has
a maximum degree of 2. Thus, it is composed of paths and cycles only.

Moreover, it consists of nothing but 2-cycles and 1-paths if and only
if $G$ is a perfectly duplicated genome (a $k$-cycle or $k$-path is a
cycle or path containing $k$ edges).

Using this graph, Mixtacki gave the following distance formula:
\begin{theorem}[\cite{Mixtacki08}]
Let $G$ be a totally duplicated genome whose natural graph contains $|\EC|$ even
cycles and $|\OP|$ odd paths.
Then  $d^p(G) = n - |\EC| - \left\lfloor \frac{\left|\OP\right|}{2} \right\rfloor$.
\label{th:dp}
\end{theorem}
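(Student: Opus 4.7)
The plan is to prove the formula by a two-sided argument: a lower bound coming from an invariant bounding how much a single DCJ can "fix" in $\NG(G)$, and a matching upper bound from an explicit sorting procedure. Set $\phi(G) := n - |\EC| - \lfloor |\OP|/2 \rfloor$ and observe that $\phi(G) = 0$ exactly when $\NG(G)$ consists entirely of 2-cycles and 1-paths, i.e. when $G$ is perfectly duplicated; this is the target configuration, and the proof amounts to showing $d^p(G) = \phi(G)$.

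First I would analyze the effect of a single DCJ on $\NG(G)$. Each DCJ acts on two vertices of the graph (adjacencies of $G$), recombining their incident edges; a short enumeration of the possible local moves (merging two components, splitting one component, or converting a path into a cycle or vice versa) shows that the quantity $|\EC| + \lfloor |\OP|/2 \rfloor$ can increase by at most one per DCJ. Extracting a 2-cycle from a larger even cycle bumps $|\EC|$ by one; splitting an odd path into a 1-path plus an even path preserves or increases $\lfloor |\OP|/2 \rfloor$ depending on the current parity; and merging two odd paths into an even path decreases $|\OP|$ by two while possibly increasing $|\EC|$ by one. A careful parity bookkeeping rules out simultaneous "+1" contributions from both terms, giving $d^p(G) \geq \phi(G)$.

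For the upper bound, I would produce a sorting DCJ, i.e. one that strictly decreases $\phi$, whenever $\phi(G) > 0$, by a case split on the structure of a "wrong" component. If $\NG(G)$ contains an even cycle of length $\geq 4$, one DCJ on two well-chosen adjacencies peels off a 2-cycle, raising $|\EC|$ by one. If it contains an odd cycle, a DCJ turns it into a smaller even cycle plus a 1-path (or pairs it with another odd path/cycle). If there are at least two odd paths of length $\geq 3$, a DCJ acting across two of them produces a configuration from which the subsequent operation detaches a 2-cycle and a 1-path, amortising to exactly one unit of $\phi$ per DCJ. Even paths and 1-cycles are handled similarly by conversion. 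Iterating until $\phi = 0$ yields a scenario of length $\phi(G)$, so $d^p(G) \leq \phi(G)$, matching the lower bound.

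The main obstacle is the parity accounting for odd paths. One must verify that when $|\OP|$ is odd, exactly one "orphan" odd path is forced to be reduced entirely to 1-paths at full cost, accounting for the lost half in $\lfloor |\OP|/2 \rfloor$, while also checking that no sorting step inadvertently creates a new odd path that would disturb the running parity. Once this bookkeeping is verified in each branch of the case analysis, the two bounds coincide and the formula $d^p(G) = n - |\EC| - \lfloor |\OP|/2 \rfloor$ follows.
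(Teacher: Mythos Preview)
Your approach is essentially the one the paper takes. Note first that this theorem is quoted from \cite{Mixtacki08} and is not formally proved in the paper; however, the paper does give an informal re-derivation of exactly this formula in Section~\ref{halvingrevisit} (``Preliminary game II: genome halving''), and later recasts it in the ``dual layered'' language of Section~\ref{sec:dl}. That informal argument is precisely your two-sided strategy: the invariant $2n$ edges gives the baseline $n$, and then one counts ``bonuses'' --- a full bonus for each even cycle (the eventual 2-cycle remainder) and a half bonus for each odd path (the eventual 1-path remainder), with the floor arising because a lone odd path forces a leftover 1-cycle costing an extra half-operation. Your potential $\phi(G)=n-|\EC|-\lfloor |\OP|/2\rfloor$ and the case analysis on how a DCJ moves it are the same content phrased slightly more formally.

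One small correction in your upper-bound sketch: when you write that a DCJ on an odd cycle ``turns it into a smaller even cycle plus a 1-path'', that is not what a generic DCJ on two internal vertices of a cycle does --- it splits a cycle into two cycles (one odd, one even). To get a path out of a cycle you need the degenerate fission DCJ (cutting one adjacency into two telomeric ones), which the paper mentions explicitly as the last step of its example. This does not affect the correctness of your overall argument, but the case split should distinguish the non-degenerate (two-vertex) DCJs from the fission/fusion ones when you do the parity bookkeeping.
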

Unlike the genome halving problem, the aim of the 1-tandem halving
problem is to find a 1-tandem duplicated genome.

This induces one double-adjacency not to be reconstructed, which is
inelegant to deal with.

We will conveniently get rid of this concern.

From property \ref{prop:dtdp}, a 1-tandem genome that has been
circularized is a perfectly duplicated genome and conversely.

This allows us to establish a property that will reduce the 1-tandem
halving problem to a constraint on genome halving.

\begin{lemma}
\label{lem:unicirc=tandem}
Let $G$ be a unilinear genome. Let $G_c$ be the unicircular genome obtained by circularizing $G$.
Then for any scenario that transforms $G$ into a 1-tandem duplicated genome, there exists an equivalent scenario (of same length) transforming $G_c$ into a unicircular perfectly duplicated genome, and vice versa.
\end{lemma}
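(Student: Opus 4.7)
The plan is to exhibit a length-preserving bijection between DCJ scenarios on $G$ and on $G_c$ whose endpoints correspond. First I would set up the underlying adjacency correspondence. A unilinear $G = (\circ~m_1~m_2~\cdots~m_{2n}~\circ)$ and its circularization $G_c = (~m_1~m_2~\cdots~m_{2n}~)$ share all ``internal'' adjacencies $(m_i~m_{i+1})$; the only difference is that $G$ has the two telomeric adjacencies $(\circ~m_1)$ and $(m_{2n}~\circ)$ where $G_c$ has the single ``closing'' adjacency $(m_{2n}~m_1)$. I would define a map $\phi$ that sends each internal adjacency to itself and identifies $\{(\circ~m_1), (m_{2n}~\circ)\}$ with $(m_{2n}~m_1)$, and extend it inductively to genomes obtained along a scenario by always treating the $\circ$ markers as a theoretical bookkeeping device rather than as real content.

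For the forward direction, given a scenario $S$ on $G$ of length $k$ ending in a 1-tandem duplicated genome $H$, I would construct a scenario $S'$ on $G_c$ by replaying each DCJ of $S$ via $\phi$. A DCJ of $S$ that cuts two non-telomeric adjacencies translates directly into the same DCJ on $G_c$; a DCJ that cuts exactly one telomeric adjacency of $G$ translates into a DCJ of $G_c$ cutting the closing adjacency together with the analogous internal one. Each such translation is of length $1$, and by Property \ref{prop:dtdp}, the circularization of the 1-tandem duplicated endpoint $H$ is a unicircular perfectly duplicated genome, which is precisely the endpoint reached by $S'$. The converse direction is symmetric: any DCJ on $G_c$ can be read back as a DCJ on $G$ using $\phi^{-1}$, and the final unicircular perfectly duplicated genome can be opened up at the image of the original telomeric position to recover a 1-tandem duplicated endpoint on the linear side.

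The main obstacle I expect is the degenerate case in which a single DCJ of $S$ cuts \emph{both} telomeric adjacencies $(\circ~m_1)$ and $(m_{2n}~\circ)$ simultaneously: such an operation has no direct analogue on $G_c$ (only one closing adjacency exists), and it corresponds either to a global reversal of the linear chromosome or to turning it into a circular chromosome plus an empty linear chromosome. I would handle this by arguing that in any scenario heading to a unilinear 1-tandem duplicated genome such a ``double-telomeric'' DCJ can be replaced, without increasing scenario length, by a sequence that either avoids one of the telomeres or is absorbed into a later operation — essentially because the empty linear chromosome it produces is inert and must eventually be merged back. A short case analysis on the two possible join configurations of a double-telomeric DCJ should suffice to discharge this case.

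Once that reduction is in place, the resulting correspondence is length-preserving in both directions, and the equivalence of scenarios between $(G, \text{1-tandem})$ and $(G_c, \text{unicircular perfectly duplicated})$ follows immediately, establishing the lemma.
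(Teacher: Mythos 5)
Your proposal follows essentially the same route as the paper, whose entire proof is the observation that $G$ and $G_c$ present the same breakpoints, so one simply applies the same DCJ on the same breakpoints and invokes Property \ref{prop:dtdp} to match the endpoints. Your version is just a more explicit unpacking of that correspondence (including the double-telomeric edge case, which the paper's one-line argument silently ignores), so it is correct and consistent with the paper's argument.
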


\begin{proof}
As $G$ and $G_c$ present the same breakpoints, the scenario conversion is straightforward. It suffices to apply the same DCJ on the same breakpoints.\qed
\end{proof}

Thus, in the rest of this section, the focus will be on reconstructing
an optimal perfectly duplicated genome such that it is unichromosomal.
This is essentially a shape constraint on the genome halving
solutions.

I will follow an approach a bit similar\footnote{I had to
  expand on the ideas and add my own to make it a more complete system, due to the nature of the single tandem halving
  problem.} to what has been done in \cite{Kovac}, as
they enforced another shape constraint on optimal perfectly duplicated
genome configurations.

It consists in taking any optimal configuration then applying a number of
successive transformations (which I will call 
\emph{shapeshifting}) on it, such that they
preserve the distance, and that the optimal configuration converges
towards the desired shape.

In the following sections $G$ will denote a totally duplicated
genome, %consisting of $n$ markers
and $G_c$ its circularized version. $H$ will be an optimal perfectly
duplicated genome for $G_c$.

Following theorem \ref{th:dp}, one can observe that circularization can alter the halving distance, depending on whether the path of $\NG(G)$ is even or odd.%, as circularizing transforms the path into a cycle. 

\begin{property}
\label{prop:dpgc}
If $G$ is a genome such that $\NG(G)$ contains an even path, $d^p(G_c) = d^p(G) - 1$. Else, $d^p(G_c) = d^p(G)$. 
\end{property}

From Mixtacki's formula (Theorem \ref{th:dp}), we know that optimal halving scenarios on circular genomes are scenarios which increase the number of even cycles at each step.
There are two ways of increasing it. Either by splitting a cycle (\textit{i.e.} extracting an even cycle from any cycle), or by merging two odd cycles.

As it can be quite complex at first sight, the shapeshifting system will first be described on a restricted class of genomes, namely those whose natural graph contains only even cycles. This way, we ensure that optimal halving scenarios consist only in cycle extractions.
This special system will then be easily generalized to all genomes by considering cycle-merging operations.

\subsubsection{Special shapeshifting}

Here we consider that $\NG(G_c)$ has only even cycles.
It follows that $\NG(G)$ has an even path and $d^p(G_c) = d^p(G) - 1$.

\paragraph{Anatomy of a multicircular perfectly duplicated genome.}

$H$ is an optimal perfectly duplicated genome for $G_c$.

Since $G_c$ is unicircular, $\NG(G_c)$ contains nothing but cycles.
Therefore, $H$ consists of circular chromosomes only.

For $H$ to be a perfectly duplicated genome, circular chromosomes can
be of two kinds: doubled chromosomes, which can be reduced to
$(x~\snd{x})$, and single chromosomes, which can be reduced to $(x)$
and have a \emph{paralog chromosome} in $H$, which can be reduced to
$(\snd{x})$. Thus the number of single chromosomes is even.

\paragraph{Shapeshifting.}

Any optimal perfectly duplicated genome $H$ induces a class
$\mathcal{C}_H$ of optimal halving scenarios (the class of all optimal
DCJ scenarios transforming $G_c$ into $H$). By observing the structure
of $G_c$ and $H$, we will look for small changes to apply to
$\mathcal{C}_H$, along two criteria: $H$ must converge toward the
desired shape, and it must preserve its optimality.  Such small
changes are called \emph{shapeshifters}.

In our case, we want to end up with the least number of chromosomes in
$H$ (ideally only one), therefore we will look for ways to merge
chromosomes while preserving optimality. This leads us to the
following definition:

\begin{definition}
A shapeshifter is an adjacency \aff{x}{y} such that $x$ and $y$ belong to \emph{different chromosomes} of $H$ (convergence toward the desired shape), and such that \aff{x}{y} (and therefore \ass{x}{y} as well) can be reconstructed by an optimal halving scenario (preservation of optimality).
\end{definition}

For example, if $H$ contains markers $x$ and $y$ in different
chromosomes, $C_x$ and $C_y$, and if \aff{x}{y} can be reconstructed
by an optimal halving scenario, then such scenario induces a new shape
for $H$ such that $C_x$ and $C_y$ cannot be distinct chromosomes
anymore.

As for now we consider genomes whose natural graph has even cycles only, shapeshifters are adjacencies reconstructible by extracting even cycles.

\begin{property}
Adjacencies \aff{x}{y} reconstructible by extracting even cycles are those such that there exists, in $\NG(G_c)$, an induced subgraph which is an \emph{even} path, whose endpoints have outgoing edges $x\cdot$ and $\cdot y$.
\end{property}

Indeed, a DCJ reconstructing \aff{x}{y} will cut at the endpoints of such path and transform it into an even cycle.
However, it is not necessary to consider all even paths, so w.l.o.g we shall focus only on 2-paths (ie. adjancencies \aff{x}{y} that are \emph{present in $G_c$}), which correspond to 2-cycles extractions.

For example, $\aff{1}{4}$ in fig. \ref{fig:NG} is a shapeshifter, as the 2-path induced by vertices $(\fst{1}~\mfst{4})$, $\ass{1}{4}$, and $\aff{-4}{5}$ meets the requirements.

We may proceed and show how to simply apply a shapeshifter on $\mathcal{C}_H$:

Let \aff{x}{p} be the adjacency containing the extremity $x\cdot$ in
$H$, and \aff{q}{y} the one containing the extremity $\cdot y$,

it suffices to perform on $H$ one DCJ cutting adjacencies \aff{x}{p}
and \aff{q}{y} to reconstruct \aff{x}{y} (and \aff{p}{q}), and the
equivalent DCJ on the paralogs, cutting adjacencies \ass{x}{p} and
\ass{q}{y} to reconstruct \ass{x}{y} (and \ass{p}{q}).

One can easily verify that the resulting genome is still optimal (first DCJ brings $H$ closer to $G_c$, second one reconstructs a perfectly duplicated genome). %(sketch of proof: \aff{x}{p} and \aff{q}{y} were part of an optimal genome so they are optimally attainable, and so is \aff{x}{y}... therefore \aff{q}{p} is too..). 

Now we may proceed and study the shapeshifting induced by these DCJ.

Let \aff{x}{y} be a shapeshifter in $G_c$.

$x$ and $y$ belong to different chromosomes in $H$, so there are only
3 possible cases depending on the types of chromosomes ($C_S$ for
single chromosomes, and $C_D$ for doubled ones) which contain these
markers: 1) {$x \in C_S, y \in C_D$}, 2) {$x,y \in C_D$}, 3) {$x,y \in
  C_S$}. The last one could lead to different shapes. Figure
\ref{fig:shapeshift1} illustrates how the genome shape can be altered,
for each case.

\begin{figure}[htbp]
    \centering
    \begin{tabular}{ccc}
\begin{tikzpicture}[scale=0.5]
    \draw (0,0) arc (0:360:1);
    \draw (-1,0) node {$C_x$};
    \draw (0,0) ++(.2,.25) node {\tiny\fst{p}};
    \draw (0,0) ++(.2,-.25) node {\tiny\fst{x}};
    \draw (0,0) ++ (-.1,0) -- ++(.2,0);

    \draw (3,0) arc (0:360:1);
    \draw (2,0) node {$C_y$};
    \draw (1,0) ++(-.25,.25)  node {\tiny\fst{q}};
    \draw (1,0) ++(-.25,-.25) node {\tiny\fst{y}};
    \draw (1,0) ++ (-.1,0) -- ++(.2,0);
    \draw (3,0) ++(.2,.25)   node {\tiny\snd{y}};
    \draw (3,0) ++(.2,-.25)  node {\tiny\snd{q}};
    \draw (3,0) ++ (-.1,0) -- ++(.2,0);

    \draw (6,0) arc (0:360:1);
    \draw (5,0) node {$\overline{C_x}$};
    \draw (4,0) ++(-.25,.25) node {\tiny\snd{x}};
    \draw (4,0) ++(-.25,-.25) node {\tiny\snd{p}};
    \draw (4,0) ++ (-.1,0) -- ++(.2,0);

    \draw[dotted] (0,0) -- ++(1,0);
    \draw[dotted] (3,0) -- ++(1,0);

    \draw[->] (2,-1.3) -- +(0,-0.5);
    \draw(-1,1.7) node {\tiny \underline{case 1})};

    \draw (0,-3) ++(0,0) arc (10:350:1);    
    \draw (0,-3) ++(3,0) arc (10:170:1);
    \draw (0,-3) ++(3,-0.3) arc (-10:-170:1);
    \draw (0,-3) ++(4,-0.3) arc (-170:170:1);
    \draw (0,-3) ++(0,0) -- +(1,0);
    \draw (0,-3) ++(0,0) ++(.2,.25) node {\tiny\fst{p}};
    \draw (0,-3) ++(0,0) ++(.2,-.5) node {\tiny\fst{x}};
    \draw (0,-3) ++(3,0) -- +(1,0);
    \draw (0,-3) ++(1,0) ++(-.25,.25)  node {\tiny\fst{q}};
    \draw (0,-3) ++(1,0) ++(-.25,-.55) node {\tiny\fst{y}};
    \draw (0,-3) ++(3,0) ++(.2,.25)   node {\tiny\snd{y}};
    \draw (0,-3) ++(3,0) ++(.2,-.55)  node {\tiny\snd{q}};
    \draw (0,-3) ++(0,-0.3) -- +(1,0);
    \draw (0,-3) ++(4,0) ++(-.25,.25) node {\tiny\snd{x}};
    \draw (0,-3) ++(4,0) ++(-.25,-.55) node {\tiny\snd{p}};
    \draw (0,-3) ++(3,-0.3) -- +(1,0);
\end{tikzpicture}        
&\begin{tikzpicture}[scale=0.5]
    \draw (0,2) arc (90:270:1);
    \draw[dashed] (0,0) arc (-90:90:1);
    \node at (0,1) {$C_x$};
    \draw (0,2.2) ++(-.2,0) node {\tiny \snd{p}};
    \draw (0,2.2) ++(.2,0)  node {\tiny \snd{x}};
    \draw (0,-.2) ++(-.2,0) node {\tiny \fst{x}};
    \draw (0,-.2) ++(.2,0)  node {\tiny \fst{p}};
    \draw (0,1.9) -- ++(0,.2);
    \draw (0,-.1) -- ++(0,.2);

    \draw (3,0) arc (-90:90:1);
    \draw[dashed] (3,2) arc (90:270:1);
    \node at (3,1) {$C_y$};
    \draw (3,2.2) ++(-.2,0) node {\tiny \snd{y}};
    \draw (3,2.2) ++(.2,0)  node {\tiny \snd{q}};
    \draw (3,-.2) ++(-.2,0) node {\tiny \fst{q}};
    \draw (3,-.2) ++(.2,0)  node {\tiny \fst{y}};
    \draw (3,1.9) -- ++(0,.2);
    \draw (3,-.1) -- ++(0,.2);

    \draw[dotted] (0,0) -- (3,0);
    \draw[dotted] (0,2) -- (3,2);

    \draw[->] (1.5,-0.3) -- +(0,-0.5);
    \draw(-1.2,2.5) node {\tiny \underline{case 2})};

    \draw (0,-3.5) ++(0,2.1) arc (90:270:1.1);
    \draw[dashed] (0,-3.5) ++(0,0) arc (-90:90:1);
    \draw (0,-3.5) ++(0,2.2) ++(-.2,0.2) node {\tiny \snd{p}};
    \draw (0,-3.5) ++(0,2.2) ++(.2,-.4)  node {\tiny \snd{x}};
    \draw (0,-3.5) ++(0,-.2) ++(-.2,-.2) node {\tiny \fst{x}};
    \draw (0,-3.5) ++(0,-.2) ++(.2,0.4)  node {\tiny \fst{p}};
    \draw (0,-3.5) ++(3,-0.1) arc (-90:90:1.1);
    \draw[dashed] (0,-3.5) ++(3,2) arc (90:270:1);
    \draw (0,-3.5) ++(3,2.2) ++(.2,0.2) node {\tiny \snd{y}};
    \draw (0,-3.5) ++(3,2.2) ++(-.2,-.4)  node {\tiny \snd{q}};
    \draw (0,-3.5) ++(3,-.2) ++(.2,-.2) node {\tiny \fst{q}};
    \draw (0,-3.5) ++(3,-.2) ++(-.2,0.4)  node {\tiny \fst{y}};
    \draw (0,-3.5) ++(0,-0.1) -- +(3,0);
    \draw (0,-3.5) ++(0,2.1) -- +(3,0);
    \draw[dashed] (0,-3.5) ++(0,0) -- +(3,0);
    \draw[dashed] (0,-3.5) ++(0,2) -- +(3,0);
\end{tikzpicture}
&\begin{tikzpicture}[scale=0.5]
    \draw (0,0) arc (0:360:1);
    \draw (-2,0) ++(-.25,.25) node {\tiny\snd{p}};
    \draw (-2,0) ++(-.25,-.25) node {\tiny\snd{x}};
    \draw (0,0) ++ (-2.1,0) -- ++(.2,0);
    \draw (0,0) ++(.2,.25) node {\tiny\fst{q}};
    \draw (0,0) ++(.2,-.25) node {\tiny\fst{y}};
    \draw (0,0) ++ (-.1,0) -- ++(.2,0);

    \draw (3,0) arc (0:360:1);
    \draw (1,0) ++(-.25,.25)  node {\tiny\fst{x}};
    \draw (1,0) ++(-.25,-.25) node {\tiny\fst{p}};
    \draw (1,0) ++ (-.1,0) -- ++(.2,0);
    \draw (3,0) ++(.2,.25)  node {\tiny\snd{y}};
    \draw (3,0) ++(.2,-.25) node {\tiny\snd{q}};
    \draw (3,0) ++ (-.1,0) -- ++(.2,0);

    \draw[dotted] (0,0) -- ++(1,0);
    \draw[dotted] (-2,0) .. controls ++(-1,1.2) .. ++(3,1.2) .. controls
    ++(3,0) .. (3,0);

    \draw[->] (0.5,-0.5) -- +(0,-0.5);
    \draw(-1,2) node {\tiny \underline{case 3b and c})};

    \draw (0,-3) ++(0,0) ++(0,0.2) arc (0:180:1);
    \draw (0,-3) ++(3,0) ++(0,0.2) arc (0:180:1);
    \draw (0,-3) ++(0,0.2) -- +(1,0);
    \draw (0,-3) ++(-2,0.2) .. controls ++(.2,1.5) .. ++(2.5,1.5) .. controls    ++(2.3,0) .. ++(2.5,-1.5); %arc (0:-180:2.5);
    \draw (0,-3) ++(-2,0) ++(-.25,.25) node {\tiny\snd{p}};
    \draw (0,-3) ++(-2,0) ++(-.25,-.25) node {\tiny\snd{x}};
    \draw (0,-3) ++(0,0) ++(-.25,.25) node {\tiny\fst{q}};
    \draw (0,-3) ++(0,0) ++(-.25,-.25) node {\tiny\fst{y}};
    \draw (0,-3) ++(0,0) ++(0,-.2) arc (0:-180:1);
    \draw (0,-3) ++(3,0) ++(0,-.2) arc (0:-180:1);
    \draw (0,-3) ++(0,-0.2) -- +(1,0);
    \draw (0,-3) ++(-2,-0.2) .. controls ++(.2,-1.5) .. ++(2.5,-1.5) .. controls    ++(2.3,0) .. ++(2.5,1.5); %arc (0:-180:2.5);
    \draw (0,-3) ++(1,0) ++(.2,.25)  node {\tiny\fst{x}};
    \draw (0,-3) ++(1,0) ++(.2,-.25) node {\tiny\fst{p}};
    \draw (0,-3) ++(3,0) ++(.2,.25)  node {\tiny\snd{y}};
    \draw (0,-3) ++(3,0) ++(.2,-.25) node {\tiny\snd{q}};
\end{tikzpicture}
\end{tabular}
\begin{tikzpicture}[scale=0.5]
    \draw (0,0) arc (0:360:1);
    \draw (-1,0) node {$C_x$};
    \draw (0,0) ++(.2,.25) node {\tiny\fst{p}};
    \draw (0,0) ++(.2,-.25) node {\tiny\fst{x}};
    \draw (0,0) ++ (-.1,0) -- ++(.2,0);

    \draw (3,0) arc (0:360:1);
    \draw (2,0) node {$C_y$};
    \draw (1,0) ++(-.25,.25)  node {\tiny\fst{q}};
    \draw (1,0) ++(-.25,-.25) node {\tiny\fst{y}};
    \draw (1,0) ++ (-.1,0) -- ++(.2,0);

    \draw[dotted] (0,0) -- ++(1,0);

    \draw (6,0) arc (0:360:1);
    \draw (5,0) node {$\overline{C_x}$};
    \draw (6,0) ++(.2,.25) node {\tiny\snd{p}};
    \draw (6,0) ++(.2,-.25) node {\tiny\snd{x}};
    \draw (6,0) ++ (-.1,0) -- ++(.2,0);

    \draw (9,0) arc (0:360:1);
    \draw (8,0) node {$\overline{C_y}$};
    \draw (7,0) ++(-.25,.25)  node {\tiny\snd{q}};
    \draw (7,0) ++(-.25,-.25) node {\tiny\snd{y}};
    \draw (7,0) ++ (-.1,0) -- ++(.2,0);

    \draw[dotted] (6,0) -- ++(1,0);

\draw[->] (9.3,0) -- +(0.5,0);
\draw(-1,1.7) node {\tiny \underline{case 3a})};

    \draw (12,0.2) ++(0,0) arc (10:350:1);
    \draw (12,0.2) ++(1,0) arc (170:-170:1);
    \draw (12,0.2) ++(0,0) -- +(1,0);
    \draw (12,0.2) ++(0,0) ++(.2,.25) node {\tiny\fst{p}};
    \draw (12,0.2) ++(0,0) ++(.2,-.5) node {\tiny\fst{x}};
    \draw (12,0.2) ++(0,-0.3) -- +(1,0);
    \draw (12,0.2) ++(1,0) ++(-.25,.25)  node {\tiny\fst{q}};
    \draw (12,0.2) ++(1,0) ++(-.25,-.55) node {\tiny\fst{y}};
    \draw (12,0.2) ++(6,0) arc (10:350:1);
    \draw (12,0.2) ++(7,0) arc (170:-170:1);
    \draw (12,0.2) ++(6,0) -- +(1,0);
    \draw (12,0.2) ++(6,-0.3) -- +(1,0);
    \draw (12,0.2) ++(6,0) ++(.2,.25)   node {\tiny\snd{p}};
    \draw (12,0.2) ++(6,0) ++(.2,-.55)  node {\tiny\snd{x}};
    \draw (12,0.2) ++(7,0) ++(-.25,.25) node {\tiny\snd{q}};
    \draw (12,0.2) ++(7,0) ++(-.25,-.55) node {\tiny\snd{y}};
\end{tikzpicture}
    \caption{The different shapes that can be obtained by applying a shapeshifter.}
    \label{fig:shapeshift1}
\end{figure}
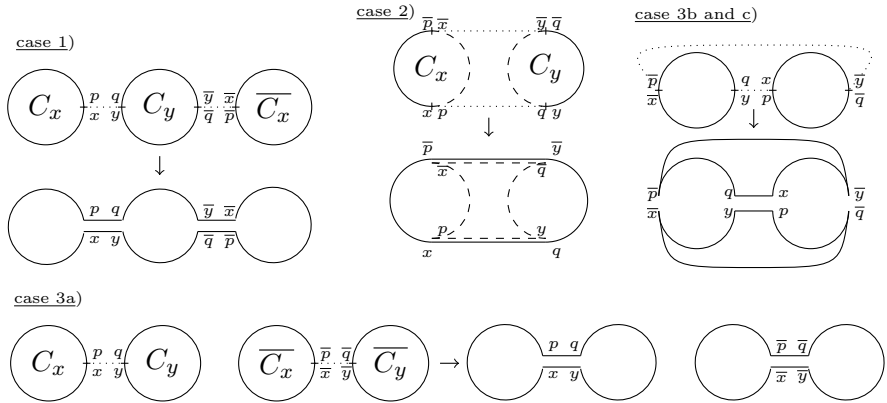

More formally, one can represent shapeshifting as a system of rewriting rules:
\begin{center}
    \begin{tabular}{ll@{$\quad$}ll@{$\quad$}ll}
        1) & {$2\times C_S + C_D \rightarrow C_D$}   & 3.a) & {$4\times C_S
          \rightarrow 2\times C_S$} & 3.c) & {$2\times C_S \rightarrow 2\times C_S$} \\
        2) & {$2\times C_D \rightarrow 2\times C_S$} & 3.b) & {$2\times C_S
          \rightarrow 2\times C_D$} \\
    \end{tabular}
\end{center}

This is convenient as one can deduce useful properties by looking at these rules, which we are about to do, in order to study limit states of the system.

\begin{property}
\label{prop:decreasing}
 Shapeshifting cannot increase the number of chromosomes.
\end{property}

Thus, any limit-cycle necessarily uses rules that do not change the number of chromosomes. Moreover, using rule 2 would eventually lead to using rule 3.b or 3.c as doubled chromosomes are changed into single chromosomes.

\begin{property}
\label{prop:limitcycle}
Any limit-cycle of the system necessarily uses rule 3.b or 3.c.
\end{property}

\begin{property}
\label{prop:invariant}
  Parity of $|C_D|$ is invariant by shapeshifting.
\end{property}

\begin{property}
\label{prop:steadystate}
 A unicircular genome (ie. one doubled chromosome) is the only steady state of the system.
\end{property}

\begin{lemma}
\label{lem:under3}
  By shapeshifting, the number of chromosomes in $H$ can always be decreased under 3.
\end{lemma}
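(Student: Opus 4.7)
The plan is to prove Lemma \ref{lem:under3} by case analysis on the composition $(|C_S|,|C_D|)$ of $H$, exhibiting from any configuration with $\geq 3$ chromosomes a (possibly multi-step) sequence of shapeshifters that strictly decreases the chromosome count. The key preliminary observation is that $G_c$ is a \emph{single} circular chromosome, so its $2n$ adjacencies must be partitioned among the chromosomes of $H$; as soon as $H$ has more than one chromosome, at least one adjacency of $G_c$ has its two endpoints in distinct chromosomes of $H$, hence defines a shapeshifter (a $2$-path with the required outgoing edges). This guarantees that the shapeshifting process is never stuck while $|H|\geq 2$.

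First I would treat the two pure cases. If $|C_D|=0$, then since $|C_S|$ is even and at least $3$ we have $|C_S|\geq 4$: I would pick a boundary adjacency of $G_c$ between two single chromosomes $C_x$ and $C_y$ that are \emph{not} each other's paralogs (such a pair must exist when $|C_S|\geq 4$, because the paralog of $C_x$ is a single chromosome distinct from $C_x$ and at least one of $C_x$'s $G_c$-neighbors is not it), and apply rule 3.a, immediately dropping the count by $2$. Symmetrically, if $|C_S|=0$ and $|C_D|\geq 3$, any boundary adjacency of $G_c$ lies between two doubled chromosomes and triggers rule 2, yielding a state with $|C_S|=2$ and $|C_D|\geq 1$, which is handled by the mixed case below.

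In the mixed case $|C_S|\geq 2$, $|C_D|\geq 1$, I would first look for a boundary adjacency of $G_c$ linking a single to a doubled chromosome: if it exists, rule 1 applies and the count drops by $2$. Otherwise the markers of the $C_S$'s form an isolated ``zone'' of $G_c$ whose $G_c$-neighbors are all other $C_S$'s; then a boundary adjacency inside the $C_S$ zone yields a shapeshifter triggering rule 3.b or 3.c. I would use rule 3.b to convert two $C_S$'s into two $C_D$'s, strictly decreasing $|C_S|$, and iterate: the quantity $|C_S|$ cannot decrease indefinitely, so after finitely many such steps either $|C_S|=0$ (Case $|C_S|=0$ above, and then back to the mixed case with a strictly smaller chromosome count after rules 2 then 1) or a $C_S$–$C_D$ boundary finally appears, enabling rule 1.

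The main obstacle will be proving that the mixed-case recursion terminates before the count stabilizes at $\geq 3$, i.e.\ ruling out a limit cycle that stays at the same chromosome count forever. To close this I would exhibit a monovariant: by Property \ref{prop:limitcycle} any such loop would use only rules 3.b or 3.c, and by Property \ref{prop:invariant} the parity of $|C_D|$ is preserved, so one can track the pair $(|C_S|,\text{number of }C_S\text{–}C_S\text{ boundary adjacencies in }G_c)$, which strictly decreases in lexicographic order under rule 3.b and cannot remain constant under rule 3.c without exposing a $C_S$–$C_D$ boundary, contradicting the assumption. This monovariant forces the process to eventually apply a count-reducing rule, completing the induction down to $|H|\leq 2$.
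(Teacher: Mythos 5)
Your proof is sound in its main lines and is considerably more explicit than the paper's own argument, which essentially asserts that having three chromosomes or more guarantees a count-decreasing shapeshifter and then sketches a labelling argument relative to the reference configuration with two single chromosomes. Your key observation --- that $G_c$ being a single circle forces, for any partition of its markers into two or more chromosome classes of $H$, the existence of a boundary adjacency straddling two classes, and that any such adjacency is a shapeshifter --- is exactly the justification the paper leaves implicit, and your case analysis on $(|C_S|,|C_D|)$ (rule 3.a in the all-single case, rule 2 followed by the mixed case in the all-doubled case, rule 1 in the mixed case) correctly yields a strict decrease of the chromosome count down to at most $2$.

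The one weak point is the ``otherwise'' branch of your mixed case. First, it is vacuous: if $|C_S|\geq 1$ and $|C_D|\geq 1$, colour each marker of $G_c$ by the \emph{type} (single or doubled) of the chromosome of $H$ containing it; both colours occur, so by connectivity of the circle some adjacency of $G_c$ joins a single-type marker to a doubled-type marker. That adjacency necessarily straddles two distinct chromosomes, one single and one doubled, hence is a rule-1 shapeshifter. So rule 1 is \emph{always} available in the mixed case and the recursion through rules 3.b/3.c is never needed. Second, as written that branch is not rigorous: whether a shapeshifter between $C_x$ and $\snd{C_x}$ realises rule 3.b or rule 3.c is dictated by the placement of the paralogous extremities in $H$, not freely chosen; and your lexicographic monovariant is only asserted --- rule 3.c redistributes the markers between the two single chromosomes, so the number of $C_S$--$C_S$ boundary adjacencies of $G_c$ can a priori increase. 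Replacing that branch by the vacuity observation above makes the proof complete.
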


\begin{proof}
Having 3 chromosomes or more guarantees existence of shapeshifters decreasing their number.
Consider the case where $H$ contains only 2 single chromosomes $C_S$ and $\snd{C_S}$. Label the markers from $G$ by the chromosome which holds them in $H$. Adding new chromosomes necessarily creates shapeshifters between at least one of the new chromosomes and $C_S$ or $\snd{C_S}$. Such shapeshifter decreases the number of chromosomes.\qed
\end{proof}

\begin{lemma}
\label{lem:oddCd}
There exists a unicircular optimal perfectly duplicated genome for $G_c$ if and only if $H$ has an \emph{odd} number of doubled chromosomes.
\end{lemma}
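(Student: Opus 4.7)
The plan is to combine the three already-established facts about shapeshifting (Property \ref{prop:decreasing}, Property \ref{prop:invariant}, Lemma \ref{lem:under3}) with the elementary observation that single chromosomes come in paralogous pairs, so $|C_S|$ is always even. These ingredients together essentially turn the proof into bookkeeping of parities.

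For the ($\Leftarrow$) direction, I would start from the given optimal $H$ with $|C_D(H)|$ odd and apply Lemma \ref{lem:under3} to shapeshift it into an optimal perfectly duplicated genome $H'$ with strictly fewer than three chromosomes. Property \ref{prop:invariant} guarantees $|C_D(H')|$ remains odd, while $|C_S(H')|$ is still even. The only decomposition satisfying $|C_D(H')|+|C_S(H')|\le 2$ with $|C_D(H')|$ odd and $|C_S(H')|$ even is $|C_D(H')|=1$ and $|C_S(H')|=0$. Hence $H'$ is a unicircular optimal perfectly duplicated genome for $G_c$, which is the existence claim.

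For the ($\Rightarrow$) direction, assume a unicircular optimal perfectly duplicated genome $H^\star$ exists; it trivially satisfies $|C_D(H^\star)|=1$, which is odd. To transfer this parity to the given arbitrary optimal $H$, I would argue that $|C_D|$ has a common parity across all optimal PDGs for $G_c$. My approach is to reduce both $H$ and $H^\star$ by shapeshifting to configurations of at most two chromosomes; by Property \ref{prop:steadystate}, the unicircular reduction of $H^\star$ is already a fixed point. Then I would invoke reversibility of the elementary rewriting rules $1$, $2$, $3.a$, $3.b$ and $3.c$ (each of these moves admits an inverse shapeshifter between the same two shapes) to connect $H$ to $H^\star$ through a path in the shapeshift graph; Property \ref{prop:invariant} applied along this path forces $|C_D(H)|\equiv 1\pmod 2$.

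The main obstacle I anticipate is precisely the reversibility argument used in the forward direction, i.e. showing that the set of optimal PDGs for $G_c$ is a single connected component under shapeshifting. A cleaner alternative I would try, in parallel, is to exhibit a direct invariant of $G_c$ computable from $\NG(G_c)$ whose value modulo $2$ always equals $|C_D(H)|\bmod 2$ for every optimal $H$; such an invariant would immediately make the parity of $|C_D|$ well-defined independently of the chosen $H$ and bypass any need for a connectivity result on the shapeshift system.
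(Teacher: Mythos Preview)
Your ($\Leftarrow$) direction is exactly the paper's argument: reduce via Lemma~\ref{lem:under3} to fewer than three chromosomes, keep the parity of $|C_D|$ by Property~\ref{prop:invariant}, and use the evenness of $|C_S|$ to force $|C_D|=1$, $|C_S|=0$. Nothing to add there.

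For ($\Rightarrow$), the paper's one-line proof (``straightforward from Lemma~\ref{lem:under3} and Property~\ref{prop:invariant}'') implicitly treats the parity of $|C_D|$ as independent of which optimal $H$ one starts from; it does not spell out a connectivity argument. Your attempt to supply one via reversibility of the rewriting rules does not go through: rules~1 and~3.a strictly decrease the number of chromosomes, and a shapeshifter is \emph{by definition} an adjacency of $G_c$ whose two markers lie in different chromosomes of $H$, so applying it always merges chromosomes. There is no shapeshifter that splits a chromosome, hence no inverse move for rules~1 and~3.a inside the system. Consequently you cannot, in general, walk from a unicircular $H^\star$ back up to an arbitrary optimal $H$ by shapeshifting, and the parity-along-a-path argument collapses.

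You correctly identify this as the obstacle. Your alternative plan---exhibit a quantity computable from $\NG(G_c)$ alone whose value modulo $2$ equals $|C_D(H)|\bmod 2$ for every optimal $H$---is the right way to close the gap and is also what is tacitly needed to make the paper's proof (and the subsequent Theorem~\ref{th:restricted}, which speaks of ``$H$'' as if the parity were canonical) fully rigorous. Until that invariant is produced, the forward direction remains incomplete in both your write-up and the paper's.
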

\begin{proof}
Straightforward from lemma \ref{lem:under3} and property \ref{prop:invariant}.\qed
\end{proof}

\begin{lemma}
\label{lem:evenCd}
If $H$ has an \emph{even} number of doubled chromosomes, the minimum number of DCJ operations required to reconstruct a unicircular perfectly duplicated genome is $d^p(G_c)+1$, and it can always be attained.
\end{lemma}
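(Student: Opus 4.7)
The plan is to establish the lower bound $\geq d^p(G_c)+1$ and the upper bound $\leq d^p(G_c)+1$ separately.

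For the lower bound, I would appeal directly to Lemma \ref{lem:oddCd}, which characterizes the existence of a unicircular optimal perfectly duplicated genome by the \emph{odd} parity of $|C_D|$ in $H$. Since $|C_D|$ is even here and, by Property \ref{prop:invariant}, shapeshifting preserves both optimality and the parity of $|C_D|$, no optimal perfectly duplicated genome for $G_c$ can be unicircular. Thus every unicircular perfectly duplicated genome lies strictly farther than $d^p(G_c)$ from $G_c$, giving the bound $\geq d^p(G_c)+1$.

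For the upper bound, the strategy is to first reach a conveniently shaped optimal perfectly duplicated genome in exactly $d^p(G_c)$ DCJ operations, and then finish with a single extra DCJ that merges what remains into a unicircular doubled chromosome while preserving the perfectly duplicated structure. By Lemma \ref{lem:under3}, shapeshifting lets me select an optimal target $H'$ with at most $2$ chromosomes; by Property \ref{prop:invariant}, $H'$ still has an even number of doubled chromosomes, so it consists of either two doubled chromosomes or two single paralogous chromosomes. If $H'$ has two doubled chromosomes, I would apply rule $2$ ($2\times C_D \to 2\times C_S$) once more, converting them into two single paralogous chromosomes $C_S$ and $\snd{C_S}$ without changing the distance (shapeshifting only changes the target, not the length of the optimal scenario). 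It then suffices to merge $C_S = (a_1~a_2~\ldots~a_n)$ and $\snd{C_S} = (\snd{a_1}~\snd{a_2}~\ldots~\snd{a_n})$ with a single DCJ: cutting $(a_n~a_1)$ and $(\snd{a_n}~\snd{a_1})$ and reconnecting across to form $(a_n~\snd{a_1})$ and $(\snd{a_n}~a_1)$ merges the two circular chromosomes into the unicircular doubled chromosome $(a_1~a_2~\ldots~a_n~\snd{a_1}~\snd{a_2}~\ldots~\snd{a_n})$. The two new adjacencies are paralogs of each other, so both are double-adjacencies, and neither is of the form $(x~\snd{x})$ in any non-degenerate case. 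This yields a unicircular perfectly duplicated genome in exactly $d^p(G_c)+1$ DCJ operations.

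The main obstacle I expect is rigorously justifying that rule $2$ shapeshifting remains applicable when $H'$ consists of two doubled chromosomes: it requires exhibiting a case $2$ shapeshifter, i.e., an adjacency of $G_c$ whose two markers sit in different doubled chromosomes of $H'$. Because $G_c$ is a single connected genome while $H'$ has two distinct chromosomes, a cross-chromosome $2$-path in $\NG(G_c)$ must exist; turning this existence argument into a clean, exhaustive case analysis (and ruling out the tiny $n=1$ degeneracy for the merging DCJ) is the delicate point of the proof. Everything else reduces to invoking the shapeshifting system already developed and the one-DCJ merging verification above.
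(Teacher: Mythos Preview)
Your proposal is correct and follows essentially the same approach as the paper: lower bound via Lemma~\ref{lem:oddCd}, upper bound by shapeshifting down to two chromosomes and then merging two single paralogous chromosomes with one extra DCJ. The only cosmetic difference is that where you invoke Property~\ref{prop:invariant} plus an explicit connectedness argument to guarantee rule~2 applies when $H'$ is two doubled chromosomes, the paper instead appeals to Property~\ref{prop:limitcycle} (any limit-cycle must pass through single chromosomes, so two doubled chromosomes cannot be a terminal configuration); your worry about this step is thus already handled by the shapeshifting framework, and your own connectedness argument would work just as well.
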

\begin{proof}
From lemma \ref{lem:oddCd}, it is impossible to attain a unicircular genome in $d^p(G_c)$ operations. However, from lemma \ref{lem:under3} and property \ref{prop:limitcycle}, it is then always possible to attain two single chromosomes.

Two single chromosomes can then be transformed into one doubled chromosome by one DCJ.\qed
\end{proof}

In conclusion, special shapeshifting allows to compute the tandem distance of any genome $G$ such that $\NG(G)$ contains only even cycles.
 \begin{theorem}
\label{th:restricted}
Let $G$ be a totally duplicated genome such that $\NG(G)$ contains only even cycles.
Let $G_c$ be its circularized version, and $H$ any optimal perfectly duplicated genome for $G_c$.

$d^t(G) = d^p(G)-1$ if and only if $H$ contains an odd number of doubled chromosomes. Else $d^t(G) = d^p(G)$.
\end{theorem}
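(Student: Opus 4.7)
The plan is to compose the lemmas already established in this subsection. First, I would invoke Lemma \ref{lem:unicirc=tandem} to translate the problem: $d^t(G)$ equals the minimum number of DCJ operations transforming $G_c$ into a \emph{unicircular} perfectly duplicated genome, so the question reduces to measuring the cost of imposing the unichromosomality constraint on top of ordinary halving. Under the hypothesis that $\NG(G_c)$ has only even cycles, the unique path of $\NG(G)$ is necessarily even (upon circularizing $G$ it becomes an even cycle and is absorbed into $|\EC|$ of $\NG(G_c)$), so Property \ref{prop:dpgc} gives $d^p(G_c) = d^p(G) - 1$.

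Next, I would fix any optimal perfectly duplicated target $H$ for $G_c$ and let $|C_D|$ denote its number of doubled chromosomes. Before splitting on parity I need the observation that this parity is well-defined across the set of optimal targets: this follows from Property \ref{prop:invariant} together with the fact that shapeshifting preserves optimality and connects optimal solutions, so the parity of $|C_D|$ does not depend on the chosen $H$.

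Then I would split into two cases. If $|C_D|$ is odd, Lemma \ref{lem:oddCd} supplies a unicircular optimal perfectly duplicated genome for $G_c$, which can be reached from $G_c$ in exactly $d^p(G_c)$ DCJ operations; combining with the translation above yields $d^t(G) = d^p(G_c) = d^p(G) - 1$. If $|C_D|$ is even, Lemma \ref{lem:evenCd} asserts that no unicircular perfectly duplicated target is reachable in fewer than $d^p(G_c) + 1$ DCJ operations and that this bound is attained; hence $d^t(G) = d^p(G_c) + 1 = d^p(G)$. Since these two cases are mutually exclusive and exhaustive, they also yield the converse directions of the biconditional.

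The main obstacle I anticipate is the clean treatment of the well-definedness of the parity invariant across all optimal $H$, since this is what gives the \emph{``any optimal $H$''} phrasing of the statement its meaning; once this is secured, the theorem is a direct assembly of Lemmas \ref{lem:unicirc=tandem}, \ref{lem:oddCd}, \ref{lem:evenCd} and Property \ref{prop:dpgc}.
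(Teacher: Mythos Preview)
Your proposal is correct and follows essentially the same route as the paper: the paper's proof also reduces via Lemma~\ref{lem:unicirc=tandem}, applies Property~\ref{prop:dpgc} (noting the path is even), and then invokes Lemmas~\ref{lem:oddCd} and~\ref{lem:evenCd}. Your treatment is in fact more careful than the paper's, which does not explicitly justify why the parity of $|C_D|$ is independent of the chosen optimal $H$; the paper simply states the theorem for ``any'' $H$ and leaves that point implicit.
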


\begin{proof} 
Since $\NG(G)$ contains only even cycles, it contains an even path. Therefore from property \ref{prop:dpgc}, $d^p(G_c) = d^p(G)-1$.

From lemma \ref{lem:unicirc=tandem} we have that $d^t(G) = d^p(G_c)$ if and only if there exists a unicircular optimal perfectly duplicated genome.
Theorem then follows from lemmas \ref{lem:oddCd} and \ref{lem:evenCd}.\qed 
\end{proof}

The next step is to generalize the shapeshifting system in order to take all possible genomes into account.

\subsubsection{General shapeshifting}

As usual, $G$ is a totally duplicated genome, $G_c$ its circularized version, and $H$ an optimal perfectly duplicated genome for $G_c$.
I will also keep the same notations related to shapeshifters as in the previous section: \aff{x}{y} is a shapeshifter such that $x$ (resp. $y$) is present in chromosome $C_x$ (resp. $C_y$) of $H$, through adjacency \aff{x}{p} (resp. \aff{q}{y} ).

The difference with special shapeshifting is that, \emph{in addition} to everything covered by special shapeshifting, optimal halving scenarios may now also contain cycle merges. Therefore I have to consider shapeshifters that are adjacencies which can be optimally reconstructed through merges.

\begin{property}
Adjacencies \aff{x}{y} reconstructible by merges are those such that extremities $x\cdot$ and $\cdot y$ are in \emph{two distinct odd cycles} of $\NG(G_c)$.
\end{property}

Corresponding shapeshifters can still allow the same shapeshifting rules depending on the types of $C_x$ and $C_y$. Additionally, it is now possible to have $p = \snd{y}$ and $q = \snd{x}$.

This implies that $C_y = \snd{C_x}$ and induces yet another
degenerated case.

The general shapeshifting set of rule becomes: 
\begin{center}
    \begin{tabular}{ll@{$\quad$}ll@{$\quad$}ll}
        1) & {$2\times C_S + C_D \rightarrow C_D$}   & 3.a) & {$4\times C_S
          \rightarrow 2\times C_S$} & 3.c) & {$2\times C_S \rightarrow 2\times C_S$} \\
        2) & {$2\times C_D \rightarrow 2\times C_S$} & 3.b) & {$2\times C_S
          \rightarrow 2\times C_D$} & \textbf{3.d)} & {$\mathbf{2\times C_S \rightarrow C_D}$} \\
    \end{tabular}
\end{center}
This new rule gives general shapeshifting a very interesting property.

\begin{property}
Rule 3.d changes parity of $C_D$.
\end{property}

\begin{lemma}
\label{lem:parity}
If $\NG(G_c)$ contains odd cycles, and if $H$ is made of two single chromosomes, then rule 3.d can be applied.
\end{lemma}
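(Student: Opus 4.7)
The plan is to exhibit an adjacency $\afs{x}{y}$ in $H$ (say in $C_S$, with its paralog $\asf{x}{y}$ in $\snd{C_S}$) such that the extremities $x\cdot$ and $\cdot y$ of $G_c$ belong to two distinct odd cycles of $\NG(G_c)$; this is exactly what the shapeshifter defining rule 3.d requires.

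The first step I would take is to use the paralog structure of $H$ to induce a labeling on $\NG(G_c)$. Since $C_S$ and $\snd{C_S}$ are paralog chromosomes, for every gene $u$ exactly one of $\fst{u}$, $\snd{u}$ lies in $C_S$ and the other in $\snd{C_S}$. Because each edge of $\NG(G_c)$ corresponds to a pair of paralog extremities, this means that along any edge the two endpoint-extremities sit in different chromosomes of $H$. I would then classify each vertex (an adjacency of $G_c$, carrying two extremities) as \emph{balanced} when its two extremities go to different chromosomes of $H$, and \emph{unbalanced} when they both land in the same one.

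The second step is a parity walk around cycles of $\NG(G_c)$. Traversing an edge stays inside $H$'s current chromosome, while crossing a vertex toggles that chromosome iff the vertex is balanced. Since a cycle must return to its starting chromosome, each cycle contains an even number of balanced vertices, hence its number of unbalanced vertices has the same parity as its length. Therefore every odd cycle of $\NG(G_c)$ contains at least one unbalanced vertex. Since the total number of edges in $\NG(G_c)$ is $2n$, odd cycles come in pairs, and the hypothesis gives us two distinct odd cycles $O_1$ and $O_2$, each carrying an unbalanced vertex $V_1$ and $V_2$ respectively.

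The third step is to convert an unbalanced vertex into an $H$-adjacency of the desired form. If $V_i = (a_i~b_i)$ in $G_c$ is unbalanced with both extremities in, say, $C_S$, then in $H$ the extremity of $a_i$ sits in some adjacency of $C_S$ whose other endpoint is necessarily a marker of $\snd{C_S}$'s paralog type — that is, an adjacency of the form $\afs{\cdot}{\cdot}$, with paralog $\asf{\cdot}{\cdot}$ in $\snd{C_S}$. I would then pick the pair $(x,y)$ so that $x\cdot$ lies in $O_1$ (read off $V_1$) and $\cdot y$ lies in $O_2$ (read off $V_2$), while $\afs{x}{y}$ is the $C_S$-adjacency produced by these unbalanced vertices.

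The main obstacle is the last matching: a priori, the two extremities dictated by a single unbalanced vertex land in the same odd cycle, so I cannot simply take $V_1 = V_2$. The argument I plan to use is that the $H$-adjacencies of mixed type in $C_S$ and $\snd{C_S}$ define a fixed-point-free involution between unbalanced vertices via the rule "follow $H$ through one chromosome, then follow $G_c$'s paralog-edge", and this involution must send at least one unbalanced vertex of $O_1$ to an unbalanced vertex of $O_2$ (otherwise the involution would restrict to $O_1$, contradicting that $O_1$ has an odd number of unbalanced vertices). Realizing $x, y$ along such a crossing pair of unbalanced vertices yields an $\afs{x}{y}$ in $H$ with $x\cdot \in O_1$ and $\cdot y \in O_2$, completing the verification that rule 3.d applies.
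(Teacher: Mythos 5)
Your parity bookkeeping contains an internal contradiction that breaks the argument. In your first step you correctly observe that the two endpoint-extremities of any edge of $\NG(G_c)$ are paralogous, hence (since $H$ consists of two paralog single chromosomes) lie in \emph{different} chromosomes of $H$; but in your second step you declare that ``traversing an edge stays inside $H$'s current chromosome.'' With the correct rule an edge \emph{toggles} the chromosome, so a $k$-cycle satisfies $k+\#\{\text{balanced vertices}\}\equiv 0 \pmod 2$: an odd cycle is guaranteed at least one \emph{balanced} vertex, and its number of \emph{unbalanced} vertices is even and may well be zero. The existence of an unbalanced vertex in each odd cycle, on which your whole third step rests, is therefore not established. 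The third step itself is also not sound: knowing that a $G_c$-adjacency has both extremities in the same chromosome of $H$ says nothing about which cycles of $\NG(G_c)$ contain the extremities of the relevant $H$-adjacencies, and the phrase ``a marker of $\snd{C_S}$'s paralog type'' is vacuous here since every marker of $C_S$ has its paralog in $\snd{C_S}$. The closing involution argument is likewise too underspecified to rescue the matching between the two odd cycles.

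The missing idea is that the lemma is not a purely combinatorial fact about $H$ as a partition of markers; it uses the \emph{optimality} of $H$. Since $\NG(G_c)$ contains odd cycles and splitting can never destroy an odd cycle, any optimal scenario from $G_c$ to $H$ must merge odd cycles pairwise; hence some adjacency $\aff{x}{p}$ of $H$ is reconstructed by such a merge, i.e.\ the extremities $x\cdot$ and $\cdot p$ lie in two distinct odd cycles of $\NG(G_c)$. Because $\cdot p$ and $\cdot\snd{p}$ always lie in the same cycle (they are joined by an edge), the pair $\afs{x}{p}$ is reconstructible by a merge, and since $x$ and $p$ sit in the same single chromosome of $H$ while $\snd{p}$ sits in its paralog, this shapeshifter is exactly of the degenerate form $p=\snd{y}$, $q=\snd{x}$ that triggers rule 3.d. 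If you want to keep a parity-style argument, it has to be run on the adjacencies of $H$ relative to the odd/even cycles of $\NG(G_c)$, not on the adjacencies of $G_c$ relative to the chromosomes of $H$.
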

\begin{proof}
As $\NG(G_c)$ contains odd cycles, there are merges in any optimal scenario from $G_c$ to $H$. Thus, there exists an adjacency \aff{x}{p} in $C_x$ such that the adjacencies concerning extremities $x\cdot$ and $\cdot p$ are in two distinct odd cycles of $\NG(G_c)$. By definition, the adjacency concerning extremity $\cdot \snd{p}$ is in the same cycle as the one concerning $\cdot p$.
Therefore, \afs{x}{p} is a shapeshifter inducing rule 3.d.\qed
\end{proof}
\begin{corollary}
Presence of odd cycles in $\NG(G_c)$ ensures a unicircular optimal
perfectly duplicated genome that can always be reached, as rule 3.d can always adjust the parity of $C_D$ if needed. % through restricted shapeshifting, one can always reduce $H$ to a unicircular genome or two single chromosomes (lemma \ref{lem:under3}). In the latter case, using rule 3.d. will make it unicircular.
\end{corollary}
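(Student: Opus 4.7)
The plan is to prove the corollary by combining the reduction lemma with the general shapeshifting rules, using the odd cycle hypothesis as the crucial ingredient that unlocks rule 3.d.

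First I would invoke Lemma \ref{lem:under3} to reduce $H$, while preserving $d^p(G_c)$, to a shape with at most 2 chromosomes. The possible resulting configurations are limited by the fact that single chromosomes come in paralogous pairs: either (a) a single doubled chromosome, (b) two paralogous single chromosomes $C_S$ and $\snd{C_S}$, or (c) two doubled chromosomes.

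In case (a), $H$ is already unicircular and we are done. In case (c), I would apply rule 2 to transform $2 \times C_D$ into $2 \times C_S$, reducing to case (b). In case (b), I would invoke Lemma \ref{lem:parity}: since $\NG(G_c)$ contains odd cycles by hypothesis and $H$ now consists of two single chromosomes, rule 3.d is applicable, turning $2 \times C_S$ into a single $C_D$, which is exactly a unicircular optimal perfectly duplicated genome.

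The key conceptual point to highlight is that the existence of odd cycles in $\NG(G_c)$ is an invariant of the input genome $G_c$ and cannot be destroyed by shapeshifting on $H$, so rule 3.d remains available whenever we land in the two-single-chromosomes configuration. The main obstacle I anticipate is a careful verification that the successive applications of rules 2 and 3.d are indeed shapeshifters in the sense of our definition (i.e. that they preserve optimality and genuinely decrease the number of chromosomes without reintroducing extra ones), but this follows directly from the design of the rewriting rules, which were built precisely to preserve $d^p(G_c)$ at each step.
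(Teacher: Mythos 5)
Your proof is correct and follows essentially the same route the paper intends: reduce $H$ below three chromosomes via Lemma \ref{lem:under3}, observe that the only residual configurations are one doubled chromosome, two paralogous single chromosomes, or two doubled chromosomes, and use rule 2 followed by rule 3.d (justified by Lemma \ref{lem:parity}, whose hypothesis is guaranteed because the odd cycles live in $\NG(G_c)$ and are untouched by shapeshifting on $H$). The only point worth making explicit is that the applicability of rule 2 at the two-doubled-chromosome configuration rests on Property \ref{prop:steadystate} (the unicircular genome being the only steady state) rather than on the mere ``design of the rules''; this is the same level of justification the paper itself uses in Lemma \ref{lem:evenCd}.
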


\begin{theorem}
\label{th:generalized}
Let $G$ be a totally duplicated genome such that $\NG(G)$ contains at least one odd cycle, and $G_c$ its circularized version.

Then $d^t(G) = d^p(G_c)$.
\end{theorem}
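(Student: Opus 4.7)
The plan is to reduce the claim to Lemma~\ref{lem:unicirc=tandem}: since that lemma identifies $d^t(G)$ with the minimum DCJ distance between $G_c$ and a \emph{unicircular} perfectly duplicated genome, and the inequality $d^t(G) \geq d^p(G_c)$ is immediate, it suffices to exhibit an optimal perfectly duplicated genome for $G_c$ that happens to be unicircular. I would build such an $H$ by extending the shapeshifting machinery developed for Theorem~\ref{th:restricted}, now drawing on rule 3.d to escape the parity obstruction that blocks the special case.

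Concretely, I would start from any optimal perfectly duplicated $H$ for $G_c$ and proceed as follows. First, by Lemma~\ref{lem:under3}, shapeshift $H$ until it has at most two chromosomes, while preserving optimality at each step. If the outcome has a single chromosome, it must be a doubled one (a lone single chromosome cannot form a perfectly duplicated genome), giving the desired unicircular $H$. Otherwise the two-chromosome outcome is either a paralog pair of single chromosomes or a pair of doubled chromosomes. In the first case, Lemma~\ref{lem:parity}, which requires precisely the hypothesis that $\NG(G_c)$ has an odd cycle, produces a rule-3.d shapeshifter that collapses the two singles into a single doubled chromosome. In the second case, Property~\ref{prop:steadystate} forbids two doubled chromosomes from being a steady state, so some shapeshifter must still apply; since rule 3.d requires single chromosomes as input, the only rule compatible with two doubled chromosomes is rule 2, which converts them into two singles and brings us back to the previous case.

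The main obstacle is justifying that the hypothesis on $\NG(G)$ carries over to $\NG(G_c)$, since Lemma~\ref{lem:parity} is phrased in terms of the latter. Circularization of $G$ merges the two telomeric adjacencies into a single non-telomeric adjacency, so in $\NG(G)$ this replaces the path's two degree-one endpoint vertices by a single new internal vertex, turning the path into a cycle of the same edge count and leaving every other connected component of the natural graph untouched. In particular every odd cycle of $\NG(G)$ survives intact in $\NG(G_c)$. With this translation in hand, the invocation of Lemma~\ref{lem:parity} is legitimate, the shapeshifting chain terminates after finitely many moves, and Lemma~\ref{lem:unicirc=tandem} converts the resulting unicircular optimal $H$ into the desired equality $d^t(G) = d^p(G_c)$.
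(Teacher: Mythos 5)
Your proposal is correct and follows essentially the same route as the paper: reduce via Lemma~\ref{lem:unicirc=tandem} to the existence of a unicircular optimal perfectly duplicated genome for $G_c$, then obtain one by shapeshifting, using Lemma~\ref{lem:under3} to get below three chromosomes and rule 3.d (Lemma~\ref{lem:parity}) to fix the parity of $|C_D|$. The paper states this as a two-line appeal to the corollary of Lemma~\ref{lem:parity}; you merely unpack that corollary's case analysis and add the (correct, if routine) observation that odd cycles of $\NG(G)$ persist in $\NG(G_c)$.
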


\begin{proof} 
    From lemma \ref{lem:unicirc=tandem} we have $d^t(G) =
    d^p(G_c)$ iff there exists a unicircular optimal perfectly
    duplicated genome.  Corollary from lemma \ref{lem:parity} ensures
    that there does.\looseness=-1 \qed
\end{proof}

\subsubsection{Distance}

I may finally state a definite formula for the 1-tandem halving distance, as well as results on computational complexity of this problem, by gathering results from the previous sections.

\begin{theorem}

$d^t(G) = n - |\EC| - |\EP| + f_G$

Where $f_G$ is a parameter that is equal to 1 iff $|C_D|$ is even and $|\OC| = 0$, and is equal to 0 otherwise. $|\EC|$, $|\EP|$ and $|\OC|$ are respectively the number of even cycles, even paths and odd cycles in $\NG(G)$.

\end{theorem}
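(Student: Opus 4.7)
The plan is to reduce the 1-tandem halving problem on $G$ to a unicircular-constrained genome halving problem on its circularized version $G_c$ (via Lemma \ref{lem:unicirc=tandem}), then invoke Theorems \ref{th:restricted} and \ref{th:generalized} according to the structure of $\NG(G_c)$.

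First I would express $d^p(G_c)$ purely in terms of invariants of $\NG(G)$. Circularizing $G$ turns the unique path of $\NG(G)$ into a cycle of the same parity, so
\[
|EC(G_c)| = |EC(G)| + |EP(G)|, \qquad |OC(G_c)| = |OC(G)| + (1-|EP(G)|),
\]
and $\NG(G_c)$ has no paths at all. By Mixtacki's formula (Theorem \ref{th:dp}) this gives $d^p(G_c) = n - |EC| - |EP|$, which is already the leading part of the claimed formula.

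Next I would split into two cases according to whether $\NG(G_c)$ contains any odd cycle, which by the above holds iff $|OC(G)| \geq 1$ or $|EP(G)| = 0$ (the last convention being what the statement's condition ``$|OC|=0$'' must mean once it is pushed through the circularization, as only then does the formula match every case). In the first case, the corollary of Lemma \ref{lem:parity} furnishes a unicircular optimal perfectly duplicated genome, so by Lemma \ref{lem:unicirc=tandem} we get $d^t(G) = d^p(G_c) = n - |EC| - |EP|$, and indeed $f_G = 0$ by the stated definition. In the complementary case $\NG(G_c)$ contains only even cycles; Theorem \ref{th:restricted} then yields $d^t(G) = d^p(G_c)$ when $|C_D|$ is odd and $d^t(G) = d^p(G_c) + 1$ when $|C_D|$ is even, which exactly reproduces $f_G \in \{0,1\}$ according to the parity of $|C_D|$.

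Combining the two cases gives the announced formula. The main obstacle is bookkeeping rather than anything deep: one must carefully track how the parity of the path of $\NG(G)$ migrates to a cycle parity in $\NG(G_c)$, and make sure that the ``$|OC|=0$'' test guarding $f_G$ is read as ``no odd cycles in $\NG(G_c)$'', i.e. $|OC(G)|=0$ together with $|EP(G)|=1$; otherwise the edge case $|OC(G)|=0$, $|OP(G)|=1$ would be miscounted by $1$. Once that identification is made, the proof is just the direct substitution outlined above.
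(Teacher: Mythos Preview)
Your approach is correct and matches the paper's: invoke Theorems \ref{th:restricted} and \ref{th:generalized} after computing $d^p(G_c)=n-|\EC|-|\EP|$.

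However, your caveat about the edge case $|\OC(G)|=0$, $|\OP(G)|=1$ is misplaced: that case cannot occur. The natural graph $\NG(G)$ has exactly $2n$ edges, so the number of odd components is even; since there is a single path, $|\OP|$ and $|\OC|$ have the same parity. Hence $|\OC(G)|=0$ forces the path to be even, i.e.\ $|\EP(G)|=1$, and automatically $|\OC(G_c)|=0$. Conversely $|\OC(G)|\geq 1$ already guarantees an odd cycle in $\NG(G_c)$ without any appeal to the path. So the condition ``$|\OC|=0$'' in the statement, read literally on $\NG(G)$, is exactly equivalent to ``$\NG(G_c)$ has only even cycles'', and no reinterpretation is needed. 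Once you add this one-line parity observation, your two cases cover everything cleanly and the bookkeeping worry disappears.
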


\begin{proof} 
Straightforward from theorems \ref{th:restricted} and \ref{th:generalized}.\qed 
\end{proof}

\begin{theorem}
 $d^t(G)$ can be computed in linear time.
\end{theorem}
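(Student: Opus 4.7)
The plan is to show each term in the distance formula
$$d^t(G) \;=\; n \;-\; |\EC| \;-\; |\EP| \;+\; f_G$$
can be read off the natural graph $\NG(G)$ (or from a single optimal halving) in linear time, after which $d^t(G)$ is computed as a constant-time sum of these quantities.

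First, I would build $\NG(G)$: it has at most $2n+1$ vertices (one per adjacency of $G$) and exactly $2n$ edges, so its construction takes $O(n)$ time and space. Since every vertex has degree at most two, $\NG(G)$ decomposes into paths and cycles, and a single traversal recovers each connected component together with the parity of its length. Counting during the traversal yields $|\EC|$, $|\EP|$, and $|\OC|$ all at once in $O(n)$.

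Next, I would evaluate $f_G$. If $|\OC| > 0$ then $f_G = 0$ by definition and nothing more is required. Otherwise, I must determine the parity of $|C_D|$ for some optimal perfectly duplicated genome $H$ of $G_c$. The key observation is Property \ref{prop:invariant}: the parity of $|C_D|$ is invariant under shapeshifting, hence takes the same value across the entire class of optimal $H$. It is therefore legitimate to fix any canonical $H$; I would do so by invoking Mixtacki's linear-time halving algorithm \cite{Mixtacki08} on $G_c$, which produces an explicit optimal $H$ together with its chromosomal decomposition in $O(n)$. A final linear scan over the chromosomes of $H$ flags those reducible to $(x~\snd{x})$, giving $|C_D|$ and hence its parity.

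The main obstacle is evaluating $f_G$: a direct approach would appear to require exploring the exponentially large solution space of optimal halvings. This is bypassed by the invariance of $|C_D| \bmod 2$ under shapeshifting, which reduces the question to the construction of a single optimal $H$, for which a linear-time algorithm is already available. Combining the three steps gives $d^t(G)$ in $O(n)$ total time, as claimed.
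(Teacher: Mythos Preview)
Your proposal is correct and follows essentially the same approach as the paper: the paper's one-line proof simply asserts that $\NG(G)$ and an optimal perfectly duplicated genome can each be computed in linear time, and you have spelled out exactly how those two ingredients combine (via a traversal of $\NG(G)$ for $|\EC|,|\EP|,|\OC|$ and via Mixtacki's algorithm on $G_c$ plus Property~\ref{prop:invariant} for the parity of $|C_D|$) to yield each term of the formula in $O(n)$.
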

\begin{proof}
$\NG(G)$ can be computed in linear time, as well as an optimal perfectly duplicated genome.\qed
\end{proof}

\begin{theorem}
    Computing a scenario can be done in quadratic time.
\end{theorem}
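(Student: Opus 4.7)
The plan is to combine the distance-computing machinery already established with standard DCJ sorting to produce an explicit scenario of length $d^t(G)$. The overall budget is $O(n^2)$, which by Theorem asserting distance linearity gives us $O(n)$ operations, so we can afford $O(n)$ time per operation.

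First, I would build $\NG(G_c)$ in linear time and, using it, compute an arbitrary optimal perfectly duplicated genome $H$ for $G_c$ via Mixtacki's algorithm (which runs in linear time). I then exploit Lemma \ref{lem:unicirc=tandem}: it suffices to produce a DCJ scenario from $G_c$ to a \emph{unicircular} optimal perfectly duplicated genome (or, in the degenerate case of Lemma \ref{lem:evenCd}, a bichromosomal one plus one extra DCJ). The shapeshifting results (Lemmas \ref{lem:under3}, \ref{lem:oddCd}, \ref{lem:evenCd}, \ref{lem:parity}) guarantee the existence of such a target $H'$. To produce it explicitly, I iteratively search $\NG(G_c)$ and $H$ for a shapeshifter: for each pair of distinct chromosomes of $H$, scan for an adjacency $\aff{x}{y}$ witnessing either a 2-path in $\NG(G_c)$ or (when $\NG(G_c)$ has odd cycles) a pair of odd-cycle endpoints realising rule 3.d. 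Since each shapeshifting step strictly reduces the number of chromosomes of $H$ (Property \ref{prop:decreasing}, together with the escape via rule 3.d guaranteed by Lemma \ref{lem:parity}), there are $O(n)$ such steps, each computable in $O(n)$ time using standard lookup tables on extremities. Total: $O(n^2)$ for the reshaping phase.

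Once $H'$ is fixed, I compute a minimum-length DCJ scenario from $G_c$ to $H'$ by the standard algorithm on the adjacency graph of $(G_c,H')$: at each step identify a DCJ that increases the number of cycles by $1$ or merges two odd paths, in $O(n)$ per step for $O(n)$ steps, giving $O(n^2)$. Finally, use Lemma \ref{lem:unicirc=tandem} to translate this into a scenario on $G$: the same breakpoint positions apply verbatim, so the translation is linear per operation. In the even $|C_D|$ branch of Lemma \ref{lem:evenCd} I append the single concluding DCJ that merges the two remaining single chromosomes into a doubled chromosome.

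The only delicate step is the reshaping phase: I must argue that a suitable shapeshifter can always be located in linear time and that applying it does not destroy optimality. The former follows by indexing the chromosomes of $H$ by the marker extremities they contain, so that testing, for each candidate adjacency of $G_c$, whether its endpoints lie in different chromosomes of $H$ (and, for rule 3.d, whether the two extremities belong to distinct odd cycles of $\NG(G_c)$) is $O(1)$ after $O(n)$ preprocessing; the latter is exactly the content of the shapeshifting analysis in the preceding sections. Summing the three phases yields the claimed $O(n^2)$ bound.
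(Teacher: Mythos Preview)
Your proposal is correct and follows essentially the same three-phase approach as the paper: compute an optimal perfectly duplicated genome $H$ via Mixtacki's algorithm in $O(n)$, reshape it by $O(n)$ shapeshifters found in $O(n)$ each, then compute a DCJ scenario to the reshaped target. Two minor remarks: your justification that ``each shapeshifting step strictly reduces the number of chromosomes'' overstates Property~\ref{prop:decreasing} (which only forbids increase); the correct anchor is Lemma~\ref{lem:under3}, which guarantees a \emph{decreasing} shapeshifter whenever there are at least three chromosomes, and this is what bounds the number of steps by $O(n)$; also, the paper invokes the linear-time DCJ sorting of Yancopoulos \emph{et al.} for the final phase rather than your generic $O(n^2)$ bound, though either fits the overall $O(n^2)$ budget.
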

\begin{proof}
An optimal perfectly duplicated genome can be computed in $O(n)$ time using Mixtacki's algorithm (\cite{Mixtacki08}).
From lemma \ref{lem:under3}, one can reduce $H$ to the minimum number of chromosomes using $O(n)$ shapeshifters. Each shapeshifter can be found in $O(n)$ time, so we have a $O(n^2)$ time shapeshifting algorithm.
An optimal DCJ scenario between $G$ and $H$ can then be computed in $O(n)$ time using Yancopoulos' algorithm (\cite{Yancopoulos05}). Thus the algorithm takes quadratic time on the whole.\qed
\end{proof}

\subsection{Closing words and credits}

I was asked by Jean-Stéphane Varré to work on tandem duplications for my master's degree in 2010. I developed the single tandem halving problem as a starting point.

~~

\textbf{On halving by block interchange}

~~

Aida Ouangraoua taught me Anne Bergeron's method for sorting by DCJ as in \cite{BMS06}.

Aida O. also heavily insisted I use her alternate\footnote{non-working} proof for the distance formula. She then rephrased one of my intermediate proofs (proof \ref{proof:proofType2}).

Jean-Stéphane V. drew some of the figures for the article.

I developed my BI 1-tandem halving study mainly with inspiration from \cite{Mixtacki08}. I will add I have been very admirative of Julia Mixtacki's work, in this paper and in the others, for it always presented results with very elegant proofs and reasonings. Her style has been and remains a major inspiration for me.

~

\textbf{On halving by DCJ}

~~

Jean-Stéphane V. drew some of the figures.

I developed shapeshifting with inspiration from \cite{Kovac}. I would like to thank the authors of that paper as it allowed me to gain a much better insight into the space of genome halving scenarios.

Aida O. did not contribute as she was in Canada during most of the time I have been working on this. She participated in some of the early discussions, before shapeshifting was developed.

Once again I took care of proving every single result from these papers.

\section{Model III: Partial tandem duplication}
\label{sec:partialdup}

This work has been published in \cite{Thomas12}. 

Along with the single tandem halving by DCJ, I studied other tandem models:

I studied a model where only a subset of the markers are duplicated. I could not settle complexity of this problem but provided a heuristic algorithm.

I also designed and studied various extended models were multiple tandem duplications occurred such that markers could be present in more than 2 copies.

I proved NP-hardness of all these variants.

\subsection{Model}

\subsubsection{Considered genomes}

I use duplicated genomes, perfectly duplicated genomes as defined in section \ref{sec:notations2}, and dedoubled genomes as defined in section \ref{sec:dedouble}.

I also introduce a generalization of tandem duplicated genomes, namely \emph{k-tandem duplicated genomes}.

\begin{definition}
A \emph{k-tandem duplicated genome} is a totally duplicated genome which can
be reduced to a unilinear dedoubled genome consisting of $k$ distinct markers.
\end{definition}

For example, the genome
$(\circ~\fst{1}\diamond\fst{2}\diamond\fst{3}~\snd{1}\diamond\snd{2}\diamond\snd{3
}~\fst{4}\diamond\fst{5}~\snd{4}\diamond\snd{5}~\circ )$ is a 2-tandem
duplicated genome that can be reduced to the dedoubled genome 

$(\circ~\fst{6}~\snd{6}~\fst{7}~\snd{7}~\circ )$.

Naturally, this new definition is consistent with the previous definition of a 1-tandem duplicated genome.

\subsubsection{Considered operations}

The considered operation model is the DCJ model.

\subsection{Disrupted Single Tandem Halving}
\label{sec:disrupted}

As we could solve the 1-tandem halving problem, a first direction for generalization will be considering genomes containing both duplicated and non-duplicated markers, as it is in better accordance with real biological data.

This can be seen as a 1-tandem halving problem in which adjacencies between duplicated markers can be broken by presence of non-duplicated ones. In other words, non-duplicated markers \emph{disrupt} the 1-tandem halving.

\begin{definition}
The \emph{disrupted 1-tandem halving problem} is a variant of the 1-tandem halving problem in which the genome contains both duplicated and non-duplicated markers. The duplicated markers have to be regrouped and arranged in tandem. The corresponding distance, the \emph{disrupted 1-tandem halving distance}, is denoted $d^{t'}(G)$.
\end{definition}

\subsection{DCJ}

Although a polynomial solution could not be found, in this section I describe a polynomial approximate algorithm and precise its bounds.

I suspect this problem to be NP-hard due to its relation to k-tandem halving.

\subsubsection{Preliminary analysis.}

Any optimal disrupted 1-tandem halving scenario performs two tasks: it gathers duplicated markers together (gathering phase), and it reorganizes them in a tandem (tandem phase).

\begin{definition}
A \emph{break} is an interval of non-duplicated markers surrounded by duplicated markers.
\end{definition}

From now on, $G$ is a duplicated genome containing $n$ duplicated markers separated by $b$ breaks.

\begin{definition}
A \emph{gathering operation} is a DCJ which reduces the number of breaks in $G$.
\end{definition}

Note that the presence of excisions in the gathering phase may produce a genome consisting of multiple chromosomes.

Excisions and their resulting chromosomes will be categorized depending on whether said chromosomes can be reintegrated at best in their source chromosome while increasing the number of even cycles (\emph{good} excision/chromosome), leaving it unchanged (\emph{neutral}) or decreasing it (\emph{bad}).

As this variation in $|\EC|$ changes the tandem distance, we get the following property.
\begin{property}
Once the gathering phase is over in $G$, the remaining distance is $d^t(G) + C^0 + 2C^-$, with $C^0$ the number of neutral chromosomes and $C^-$ the number of bad ones.
\end{property}

The key to build an optimal disrupted 1-tandem halving scenario is to find a gathering scenario that maximizes the number of even cycles while minimizing the number of neutral and bad excisions.

\subsubsection{Optimizing the gathering scenario.}

A DCJ can decrease the number of breaks by at most 1.%, by merging two paths in $\NG(G)$ or by circularizing one. It requires both breakpoints of said DCJ to be on path endpoints.

\begin{property}
The minimum number of gathering operations is $b$.
\end{property}

Gathering operations are DCJ whose breakpoints are on path endpoints from $\NG(G)$. Breakpoints in two distinct paths will merge them, while breakpoints on the endpoints of a same path will circularize it.

\begin{property}
An optimal gathering operation is one that either merges two odd paths, or circularizes an even path.
\end{property}

I now give the maximum number of even cycles a set of $b$ gathering operations can create.

\begin{lemma}
A shortest gathering scenario can create up to $\left \lfloor \frac{\left|\OP\right|}{2} \right \rfloor + |\EP| - 1$ even cycles.
\end{lemma}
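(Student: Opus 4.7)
My plan is to analyze the structure of a shortest gathering scenario of length $b$ by classifying its operations and then solving a small optimisation. By the preceding property, each operation in such a scenario is either a merge of two odd paths into a single even path, or a circularization of an even path into an even cycle; only the latter increases the count of even cycles. Writing $m$ and $c$ for the number of merges and circularizations respectively, we have $m+c=b$ and the scenario creates exactly $c$ new even cycles.

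I would then impose two structural constraints on these counts. First, $m \leq \lfloor |\OP|/2 \rfloor$, because each merge permanently consumes two distinct odd paths that cannot later be reused. Second, $c \leq |\EP|+m$, because each circularization consumes an even path, and the only source of new even paths is merging. To turn these into a bound in terms of $|\OP|$ and $|\EP|$ alone, I would invoke a structural identity relating $b$ to the number of paths in the natural graph: each break contributes two path endpoints, and the two telomeres of the unilinear chromosome contribute the remaining pair, giving $|\EP|+|\OP|=b+1$.

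Substituting $b = |\EP|+|\OP|-1$ into the constraints and maximising $c$ then yields $c \leq \lfloor |\OP|/2 \rfloor + |\EP| - 1$. This upper bound is attained by the explicit construction: first perform all $\lfloor |\OP|/2 \rfloor$ odd-path merges, then circularize $|\EP|+\lfloor |\OP|/2 \rfloor - 1$ of the $|\EP|+\lfloor |\OP|/2 \rfloor$ even paths now available, using exactly $b$ operations in total. The main obstacle is handling the case where $|\OP|$ is odd, for which the naive linear programming argument seemingly allows one extra circularization: pinning down the $-1$ precisely requires a sharper argument showing that the leftover unpaired odd path blocks one otherwise-usable circularization, for instance by a parity argument on the chromosomal structure remaining at the end of the scenario.
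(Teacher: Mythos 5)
Your skeleton is essentially the paper's own: the paper's proof is a two-line sketch built from exactly the same ingredients ($b$ breaks give $b+1$ paths, hence $b=|\OP|+|\EP|-1$ gathering operations; odd paths are handled in pairs; even paths are circularized at one DCJ each), and your explicit accounting $m+c=b$, $c\le|\EP|+m$ is a tightening of that sketch rather than a different route. The gap you flag at the end, however, is real as you have written the argument, and you have not closed it --- but it dissolves for a reason you missed: in this setting $|\OP|$ is necessarily even, so the troublesome case never occurs. The point is that the double-counting used earlier to prove that all cycles of $\NG(G)$ are even does not depend on the genome being totally duplicated: any adjacency lying on a cycle has degree two, hence both of its markers are duplicated, and the first markers of the adjacencies of a cycle come in paralogous pairs. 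Since $\NG(G)$ has exactly $2n$ edges and every cycle is even, the path lengths sum to an even number, so the number of odd paths is even (equivalently, the single path surviving at the end of the gathering phase is forced to be even and cannot be the ``leftover odd path'' your construction hoped to sacrifice). With $|\OP|$ even, your linear program gives $m\ge|\OP|/2$ from $b-m=c\le|\EP|+m$, together with $m\le|\OP|/2$, hence $c\le|\EP|+|\OP|/2-1$, which is exactly the claimed bound, and your explicit scenario attains it.

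A second, smaller imprecision: a shortest gathering scenario need not consist only of \emph{optimal} gathering operations in the sense of the preceding property --- it may contain merges of two even paths, or of an odd and an even path --- so your opening classification of the $b$ operations is too narrow, and in particular the constraint $m\le\lfloor|\OP|/2\rfloor$ is false for merges in general. This does not damage the upper bound, because the only facts you actually need are that each even cycle created costs one circularization, each circularization consumes an even path, and each merge (of whatever parities) creates at most one new even path; but the proof should be stated in those terms rather than by appeal to the classification of optimal operations.
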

\begin{proof}
sketch of proof: Any even path can be circularized by one DCJ, while any two odd paths can be turned into two even cycles with 2 DCJs. Since $b$ breaks induce $b+1$ paths in $\NG(G)$, the number of gathering operations we can use is $b = |\OP|+|\EP| - 1$.\qed
\end{proof}

\begin{corollary}
\label{cor:lb}
$d^{t'}(G) \geq n - |\EC| -1 + \left \lceil \frac{\left|\OP\right|}{2} \right \rceil$.
\end{corollary}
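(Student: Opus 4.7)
The plan is to combine the preceding lemma with a simple double-counting argument on the operations of an optimal scenario. A single DCJ can reduce the number of breaks by at most one, and can increase $|\EC|$ by at most one, and these two effects can occur simultaneously within a single operation. I will partition the DCJs of an optimal disrupted 1-tandem halving scenario into four classes: those reducing a break only, those creating an even cycle only, those doing both at once, and those doing neither, and extract a lower bound from the constraints each class must satisfy.

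Since the target 1-tandem duplicated genome has a natural graph consisting of $n-1$ 2-cycles and a single 2-path, the scenario must create at least $(n-1)-|\EC|$ new even cycles; similarly, since the initial genome has $b$ breaks and each DCJ reduces the number of breaks by at most one, the scenario must contain at least $b$ break-reducing operations. Letting $x$ be the number of DCJs that simultaneously reduce a break and create an even cycle, the length of the scenario is then at least $b + ((n-1)-|\EC|) - x$. The preceding lemma provides the key upper bound, $x \le \lfloor |\OP|/2 \rfloor + |\EP| - 1$, since the break-reducing operations of an optimal scenario form a shortest gathering scenario to which the lemma applies. Since the disrupted natural graph has exactly one path per maximal run of duplicated markers, for a total of $b+1$ paths, we also have $b = |\OP| + |\EP| - 1$. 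Substituting and simplifying:
\[
d^{t'}(G) \ge (|\OP|+|\EP|-1) + (n-1-|\EC|) - (\lfloor |\OP|/2 \rfloor + |\EP| - 1) = n - |\EC| - 1 + \lceil |\OP|/2 \rceil.
\]

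The main subtlety is justifying that, in any optimal scenario, the break-reducing DCJs really do form a shortest gathering sub-scenario so that the lemma's bound on $x$ is applicable; this follows because any extra break-reducing operations beyond the minimum $b$ could be dropped without making the number of breaks negative, contradicting optimality. Once this observation is in place, the rest of the derivation is routine algebra, and the corollary follows directly.
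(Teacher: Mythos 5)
Your inclusion--exclusion framework is a genuinely different route from the paper's. The paper obtains the bound by decomposing an optimal scenario into a gathering phase of at least $b$ operations followed by a tandem phase whose length is at least the halving distance of the gathered genome minus one, and then plugs the lemma's bound on the even cycles created during gathering into Mixtacki's formula. You instead count, over the whole scenario at once, the operations that reduce a break and those that increase $|\EC|$, and subtract the overlap $x$. Your description of the target graph ($n-1$ $2$-cycles plus one $2$-path) and the final arithmetic are correct, and the two decompositions yield the same number.

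The gap is in the step $x \le \lfloor |\OP|/2\rfloor + |\EP| - 1$. The lemma you invoke bounds the number of even cycles that a \emph{shortest gathering scenario} --- a sequence consisting solely of gathering operations acting on the paths of the original $\NG(G)$ --- can create. In an arbitrary optimal scenario the break-reducing operations are interleaved with other DCJs that reshape those paths, so they do not form a gathering scenario in the lemma's sense and the lemma does not directly bound them. Concretely, a non-gathering DCJ can extract an odd cycle from an odd path, leaving an even path whose subsequent circularization is simultaneously break-reducing and $|\EC|$-increasing; doing this to every odd path drives $x$ up to $|\OP| + |\EP| - 1$, exceeding your bound. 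Such a scenario is presumably not optimal, but that is precisely what would have to be proved. Your fallback argument --- that extra break-reducing operations ``could be dropped'' --- is not valid scenario surgery, since removing a DCJ from a scenario does not in general leave a scenario that still reaches the target. What your argument needs, and what neither your justification nor the paper's sketch supplies in full rigor, is a bound valid for \emph{every} optimal scenario: equivalently, that at least $\lceil |\OP|/2\rceil$ of the break-reducing operations do \emph{not} create an even cycle.
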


This is assuming a shortest gathering phase produced no bad nor neutral chromosome, and that we are in the best case for the remaining tandem distance ($d^t(G) = d^p(G) - 1$).

Neutral excisions induce a penalty which is the same as performing a
non-optimal gathering reversal, bad excisions are even
worse. Thus the greedy heuristic will proceed as follows: Look
for an optimal gathering operation which is a reversal or a good
excision. When there is none, perform a non-optimal gathering
reversal. \looseness=-1

Let $C_h(G)$ be the number of even cycles produced by the heuristic, then we obtain the following upperbound: $d^{t'}(G) \leq n-|\EC|+|\OP|+|\EP|-1 - C_h(G)$.

In the worst case, $C_h(G)$ can be equal to 0, however, the algorithm
seems to perform pretty well on random genomes, giving values close to the lowerbound.

\subsection{Beyond duplications: Multiple tandem halving}
\label{sec:mth}

Unlike 1-tandem halving, k-tandem halving can be defined in various
ways (is the content of each tandem fixed or only the number? Is the order constrained? etc...). I explored various cases, each time describing a more constrained model:

\begin{itemize}
\item Fixing the number of tandem to be reconstructed ($k$), the problem is NP-hard. 

\item Fixing the markers to be contained in each of the $k$ tandem, the problem remains NP-hard.

\item Fixing the order in which the tandem appear in the ancestral genome, still NP-hard. 

\item Lastly, a \emph{signed} version where the relative orientation of the
tandems is fixed is also NP-hard. 
\end{itemize}

It is unfortunate, as multiple tandems are more relevant from a biological point of view.

Detailed study and proofs follow.

\subsubsection{Genome Dedoubling}

As k-tandem duplicated genomes can be reduced to dedoubled
genomes, I will restate the \emph{genome dedoubling problem} (already studied in section \ref{sec:dedoubling}).

\begin{definition}
Given a rearranged duplicated genome $G$ composed of a single chromosome, the
\emph{genome dedoubling problem} consists in finding a dedoubled genome $H$ such
that the distance between $G$ and $H$ is minimal.
\end{definition}

I recall the general working of an optimal genome dedoubling algorithm, using $\DA(G)$ (refer to \ref{sec:dedoubling} for definition of $\DA(G)$ as well as detailed proofs):

\begin{enumerate}
\item{Pick a maximum number of pairwise disjoint cycles in $\DA(G)$.}

\item{Split them all into 1-cycles.}

\item{Extract 1-cycles concerning other markers in any way until you obtain at least $n$ disjoint 1-cycles.}

\item{\textit{(unilinear variant only)} merge all remaining cycles with the path of $\DA(G)$.}

\end{enumerate}

I remind the reader the genome dedoubling problem is NP-hard, since picking a maximum number of pairwise disjoint cycles in $\DA(G)$ is NP-hard. 

Naturally the unilinear variant is NP-hard as well.

I state a similar result for a small variation on this problem as it will prove useful later.

\begin{definition}
A \emph{loosely dedoubled genome} is a unilinear totally duplicated genome $G$ such that for each marker $x$, either \afs{x}{x}, \afs{-x}{x}, \aff{x}{\msnd{x}} or \aff{-x}{\msnd{x}} is an adjacency of $G$.
\end{definition}

Essentially it is a unilinear dedoubled genome in which the sign of each marker is disregarded. It means that for each marker $x$, $\DA(G)$ either has one 1-cycle for $x$ and one edge for $x$ in the path, or 2 consecutive edges for $x$ in the path.

\begin{definition}
The \emph{loose dedoubling problem} is a variant of the genome dedoubling problem where the aim is a loosely dedoubled genome. 
\end{definition}

\begin{theorem}
The loose genome dedoubling problem is NP-hard.
\end{theorem}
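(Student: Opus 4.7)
The plan is to reduce from 2-frequency maximum set packing, reusing the construction from Lemma \ref{lem:complecity-dcj}. Given a 2-frequency collection $C_n$, I would build the same unilinear totally duplicated genome $G$: create paralogs $i,\snd{i}$ for each distinct element, encode each set as a cycle of adjacencies in $G$, and cap remaining free extremities with telomeres. Let $C_i$ be the maximum size of a pairwise independent set of non-duplicated cycles in $\DA(G)$; by construction, this is exactly the size of an optimal 2-frequency set packing of $C_n$. The goal is to show $d_{loose}(G)=n-C_i$, which, combined with Lemma \ref{lem:complecity-dcj}, gives NP-hardness.

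The upper bound $d_{loose}(G)\leq n-C_i$ is immediate: any strictly dedoubled genome is also loosely dedoubled, so Algorithm \ref{algo:general-dcj} already delivers a loose-dedoubling scenario of length $n-C_i$. The interesting direction is the matching lower bound. I would argue that Configuration B (a ``hub'' adjacency of the form $(-x~\snd{x})$ or $(x~\msnd{x})$ giving two consecutive path edges in $\DA(G)$) cannot do better than Configuration A (the self-loop cases $(x~\snd{x})$ or $(\snd{x}~x)$) in the constructed instances. The key structural fact is that in our construction every cycle of $\DA(G)$ is non-duplicated: no single cycle holds both paralog edges of any marker. Consequently, creating a Configuration-B hub for a marker $x$ requires forcing both of $x$'s paralog edges to meet at a common vertex, which cannot arise simply by splitting one of the existing non-duplicated cycles — it takes at least one DCJ that would otherwise have produced a 1-cycle. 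A per-operation counting argument in the spirit of Property \ref{prop:independent} then shows each DCJ increments by at most one the total number of resolved markers (Config A or Config B combined), while the initial number of pre-resolved markers is at most $C_i$ by definition.

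The main obstacle is making the ``no shortcut via Configuration B'' claim watertight. A priori, a DCJ could convert a non-duplicated cycle into a strictly smaller cycle that happens to collect both paralog edges of some marker at one vertex, apparently giving a free hub. Ruling this out requires an invariant analogous to that of Theorem \ref{thm:general-dcj}: one tracks the quantity $\kappa(G)=$ (number of marker-$x$ for which either a 1-cycle or a hub adjacency is already present) together with the independence structure on cycles, and shows that no DCJ can increase $\kappa$ by more than one without simultaneously destroying a pairwise-independent cycle, so that any scenario of length strictly less than $n-C_i$ would contradict the maximality of $C_i$ in the input graph. Once this invariant is established, the theorem follows exactly as in the proof of Theorem \ref{thm:general-dcj}, and NP-hardness transfers from 2-frequency maximum set packing.
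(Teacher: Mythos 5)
There is a genuine gap, and it comes from the definition of the target: a \emph{loosely dedoubled genome} is required to be \emph{unilinear}, and your whole argument is carried out in the general (multichromosomal) dedoubling model. Your upper bound rests on the claim that any strictly dedoubled genome is also loosely dedoubled, but this is false: Algorithm \ref{algo:general-dcj} reaches a dedoubled genome at distance $n-C_i$ by extracting $1$-cycles, i.e.\ by excising circular chromosomes, so the genome it produces is in general not unilinear and hence not loosely dedoubled. The relevant baseline is the unilinear-target analysis of Lemma \ref{lem:linear-dcj}, $d=n-C+2m=n-C_i+m$, where the cycles not kept in the independent set must be paid for by merging them into the path. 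Your formula $d_{loose}(G)=n-C_i$ and your invariant $\kappa$ (one resolved marker per DCJ) never charge for these merges, so the lower bound does not bound the right quantity either. Worse, the entire content of the ``loose'' relaxation lives exactly in the part you discard: an adjacency of the form $(x~\msnd{x})$ corresponds to the two edges of $x$ lying \emph{consecutively on the path} of $\DA(G)$, so a cycle merged into the path may resolve some of its markers for free, and whether this saving can be traded against the choice of independent cycles is precisely what a correct proof must control. A secondary issue: the gadget of Lemma \ref{lem:complecity-dcj} caps all free extremities with telomeres and thus generally produces several linear chromosomes, whereas the dedoubling problem as restated in this section takes a single-chromosome input, so the instance you build is not even well-formed without modification.

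For comparison, the paper's proof does not set up a fresh reduction at all: it observes that the relaxation only lets you skip extracting $1$-cycles for markers whose two edges end up consecutive on the path, while the NP-hard decision --- which non-duplicated pairwise-independent cycles to keep versus merge into the path --- is unchanged, so hardness is inherited from the (unilinear) genome dedoubling problem. Your instinct to reuse the set-packing gadget is reasonable, but to make it work you would have to redo the analysis with the path present and show that on those instances an optimal loose solution still encodes a maximum set packing.
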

\begin{proof}
The loose variant allows one to avoid having to extract 1-cycles from the path when it presents consecutive edges for a same marker. However, in order to attain the minimum number of operation, it is still required to minimize the number of cycles to be merged with the path. In other words, one still has to pick a maximum number of pairwise disjoint cycles in $\DA(G)$.\qed
\end{proof}

We may now proceed and study k-tandem halving problems.

\subsubsection{Fixed tandem number}

Here we just aim at reconstructing k tandems, regardless of their respective marker contents.

\begin{definition}
    Let $G$ be a totally duplicated genome consisting of $n$ distinct markers, let $0 < k \leq n$ be an
    integer. The \emph{$k$-tandem halving} problem consists in finding
    a $k$-tandem duplicated genome $H$ such that the distance between
    $G$ and $H$ is minimal.
\end{definition}

\begin{theorem}
The $k$-tandem halving problem is NP-hard.
\end{theorem}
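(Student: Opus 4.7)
The plan is to prove NP-hardness by reducing from the unilinear dedoubling problem, whose NP-hardness was established in Corollary \ref{lem:hardness-dcj-linear}. Specifically, I would consider the special case $k = n$, where $n$ is the number of distinct markers in the input genome $G$, and argue that this case coincides with unilinear dedoubling.

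First, I would carefully verify that the class of $n$-tandem duplicated genomes (with $n$ fixed as above) coincides with the class of unilinear dedoubled genomes on $n$ markers. On one hand, any unilinear dedoubled genome on $n$ markers is trivially an $n$-tandem duplicated genome, since the reduction is vacuous. On the other hand, if a totally duplicated genome with $n$ distinct markers is an $n$-tandem duplicated genome, then it must reduce to a unilinear dedoubled genome still containing $n$ distinct markers, so no reduction step can have been applied: each reduction strictly decreases the marker count by collapsing a consecutive sequence of double-adjacencies. The subtle point to justify is that adjacencies of the form $(\fst{x}~~\snd{x})$ or $(\snd{x}~~\fst{x})$, the defining feature of a dedoubled genome, are not themselves double-adjacencies in the sense of Section \ref{sec:notations2}, since the required companion $(\snd{x}~~\fst{x})$ or $(\mfst{x}~~\msnd{x})$ is generally absent. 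Hence no consecutive double-adjacency can exist, and the genome is already in unilinear dedoubled form.

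Second, I would formalize the reduction: given an instance $G$ of the unilinear dedoubling problem on $n$ distinct markers, pass $(G, k = n)$ to the $k$-tandem halving problem. By the correspondence above, the sets of optimal target genomes coincide and therefore the two optimal DCJ distances coincide. The mapping is trivially polynomial (essentially the identity), so a polynomial-time algorithm for $k$-tandem halving would immediately yield one for unilinear dedoubling, contradicting Corollary \ref{lem:hardness-dcj-linear}.

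The main obstacle is the clean bookkeeping in verifying the target-class coincidence, which requires a careful reading of the definitions of \emph{reduction}, \emph{double-adjacency}, and \emph{dedoubled genome}; no new algorithmic ingredient is needed beyond the already-established hardness of dedoubling. Once this correspondence is nailed down, NP-hardness of $k$-tandem halving is immediate.
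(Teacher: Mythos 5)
Your proposal is correct and follows essentially the same route as the paper, which simply observes that genome dedoubling is the particular case $k=n$ of $k$-tandem halving and invokes the already-established NP-hardness of (unilinear) dedoubling. Your additional verification that $n$-tandem duplicated genomes coincide with unilinear dedoubled genomes — in particular the observation that doublet adjacencies $(\fst{x}~~\snd{x})$ are not double-adjacencies and hence survive reduction — is a correct and welcome elaboration of a step the paper leaves implicit.
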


\begin{proof}
Genome Dedoubling problem is the particular case of k-tandem halving where $k$ = $n$. \qed
\end{proof}

\subsubsection{Fixed tandem content}

The goal is now to reconstruct k tandems whose respective marker contents are given.

\begin{definition}
    Let $G$ be a totally duplicated genome, consisting of $n$ distinct markers, let $P = \lbrace P_1, P_2, ... , P_k \rbrace $ be a partition of the set of distinct markers.
The \emph{$k$-fixed-tandem halving} problem consists in finding a $k$-tandem duplicated genome $H$ such that each
    tandem is made of the markers of a $P_i$ set, and such that the distance between $G$ and $H$ is minimal.
\end{definition}

\begin{theorem}
The $k$-fixed-tandem halving problem is NP-hard.
\end{theorem}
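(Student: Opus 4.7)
The plan is to reduce from the (unilinear) genome dedoubling problem, whose NP-hardness was established earlier in Corollary \ref{lem:hardness-dcj-linear}. First I would take an arbitrary instance $G$ of the unilinear dedoubling problem (a totally duplicated genome on a single linear chromosome, with $n$ distinct markers) and consider it as an instance of $k$-fixed-tandem halving by choosing $k = n$ and the partition $P = \{\{x_1\}, \{x_2\}, \ldots, \{x_n\}\}$ made of all singletons.

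Next I would verify that, for this choice of parameters, the set of valid target genomes coincides exactly with the set of unilinear dedoubled genomes on the $n$ markers of $G$. Indeed, a tandem containing the single marker $x_i$ is nothing but a segment of the form $\fst{x_i}~\snd{x_i}$ (or equivalently, up to reading direction, $\snd{x_i}~\fst{x_i}$), so after reduction it contributes a single marker to the underlying dedoubled genome. A $k$-tandem duplicated genome with $k=n$ singleton tandems is therefore exactly a unilinear totally duplicated genome whose reduction is a unilinear dedoubled genome over $n$ distinct markers, which is precisely the target class of the unilinear dedoubling problem. Neither the relative order of the tandems nor their orientations are constrained by the fixed-content requirement on singletons, so no additional restriction is placed on the optima.

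From this equivalence, the optimal DCJ distance of the constructed $k$-fixed-tandem halving instance equals the minimal DCJ distance between $G$ and a unilinear dedoubled genome, i.e.\ the unilinear dedoubling distance of $G$. Since the construction is trivially polynomial (in fact it merely relabels the problem), and since computing the unilinear dedoubling distance is NP-hard (via the reduction from 2-frequency Maximum Set Packing used in Lemma \ref{lem:complecity-dcj}), NP-hardness of the $k$-fixed-tandem halving problem follows immediately.

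The only step requiring care is the second one: checking that no extra constraint sneaks in when a tandem has size one. Because a single-marker tandem admits both orientations $(\fst{x_i}~\snd{x_i})$ and $(\snd{x_i}~\fst{x_i})$, because the ordering of the tandems in the final unilinear chromosome is free, and because any unilinear dedoubled genome can conversely be seen as a concatenation of size-one tandems, the two target classes are identical. This makes the reduction an equivalence of instances (same distance, same optima), which is the cleanest possible way to transport NP-hardness across.
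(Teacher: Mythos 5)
Your proposal is correct and is essentially the paper's own argument: the paper proves this theorem in one line by observing that the (unilinear) genome dedoubling problem is exactly the special case of $k$-fixed-tandem halving where $P$ is the partition into singletons. Your additional verification that singleton tandems impose no extra orientation or ordering constraints is a reasonable elaboration of the same reduction, not a different route.
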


\begin{proof}
    Genome Dedoubling problem is the particular case of
    $k$-fixed-tandem problem where P is a set of singleton sets. \qed
\end{proof}

\subsubsection{Fixed tandem content and fixed tandem order}

I will now constrain, additionally to the tandems content, the order in which
the tandems are appearing in the final configuration.

\begin{definition}
    Let $G$ be a totally duplicated genome, consisting of $n$ distinct markers, let $P = \lbrace P_1, P_2, ... , P_k \rbrace $ be a partition of the set of distinct markers.
The \emph{$k$-ordered-tandem halving} problem consists in finding a $k$-tandem duplicated genome $H$ such that consecutive tandems are made of the markers of consecutives $P_i$ sets, and such that the distance between $G$ and $H$ is minimal.
\end{definition}

This is a very strong contraint, however the problem is
still NP-hard.
Let's first consider the genome dedoubling variant of this problem (ie. the case where P is a set of singleton sets).

\begin{theorem}
Ordered genome dedoubling problem is NP-hard.
\end{theorem}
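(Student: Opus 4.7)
The plan is to reduce from the 2-frequency Maximum Set Packing problem, mirroring the proof of Lemma \ref{lem:complecity-dcj}. Given an instance $C_n$ of that problem, one reuses the same construction to produce a totally duplicated genome $G$ whose non-duplicated cycles of $\mathcal{A}(G)$ encode the sets of $C_n$, so that the (unordered) genome dedoubling distance of $G$ equals $n - C_i^{\max}$, where $C_i^{\max}$ is the maximum size of a set packing in $C_n$. The added task, specific to the ordered variant, is to simultaneously prescribe a target order $\pi$ on the markers such that solving the ordered instance $(G, \pi)$ still reveals $C_i^{\max}$.

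I would choose $\pi$ to be a canonical order determined directly by the construction of $G$, for instance the order in which the markers first occur along $G$. Under this choice, I would argue that the ordered genome dedoubling distance of $(G, \pi)$ coincides with the unordered dedoubling distance $n - C_i^{\max}$. The lower bound is immediate since any ordered scenario is, in particular, a dedoubling scenario, so it has length at least $n - C_i^{\max}$. For the matching upper bound, I would exhibit an explicit scenario in two stages: first, run Algorithm \ref{algo:general-dcj} on a maximum pairwise independent set of cycles of $\mathcal{A}(G)$, producing a dedoubled genome $D$; then, argue that any mismatch between the order of markers in $D$ and the prescribed order $\pi$ can be corrected using DCJ operations that preserve the number of cycles (i.e., operations that are ``free'' in the sense that they split and re-glue already-formed 1-cycles without altering the cycle count of the adjacency graph between the current genome and the fixed target).

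The main obstacle is precisely that last claim: showing that reshuffling the 1-cycles of $D$ into the order prescribed by $\pi$ can always be done without sacrificing any cycle. The difficulty is that a naive reordering DCJ could merge or destroy 1-cycles. To handle this cleanly, I would fall back on a padding argument: augment $G$ and the target with polynomially many ``spacer'' markers placed identically in both, so that spacers are already correctly positioned and contribute nothing to the distance, but their presence provides the slack needed to perform the reordering via local spacer-mediated DCJs that each act as a zero-cost transposition of adjacent tandems. This is a standard technique in NP-hardness reductions for duplicated-genome problems, and a careful accounting should give an ordered dedoubling distance of the form $n - C_i^{\max} + c$ with $c$ depending only on the construction, thereby establishing NP-hardness of ordered genome dedoubling and, as an immediate consequence, of $k$-ordered-tandem halving in its full generality.
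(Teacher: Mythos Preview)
Your approach is substantially more elaborate than the paper's two-line argument, and the elaboration introduces a genuine gap. The paper simply observes that fixing the order of markers in the target dedoubled genome constrains only the path of $\mathcal{A}(G)$ (the arrangement of the doublets), whereas the NP-hard step---selecting a maximum set of non-duplicated pairwise independent cycles, which encodes the choice between $(x~\snd{x})$ and $(\snd{x}~x)$ for each marker---is orthogonal to that constraint and therefore survives intact. No new construction, no reordering, no padding: the reduction of Lemma~\ref{lem:complecity-dcj} already lands in a subproblem that any ordered-dedoubling algorithm must solve, whatever order is imposed.

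Your plan instead attempts to nail down the ordered distance as exactly $n - C_i^{\max}$ (or $n - C_i^{\max} + c$) for a canonical order $\pi$, and this is where it fails. The claim that reordering the doublets of $D$ into $\pi$ can be done by cycle-count-preserving DCJs is unjustified: a DCJ that rearranges already-formed 1-cycles in $\mathcal{A}(G)$ is not acting on a standard breakpoint graph between two fixed genomes, so the ``split and re-glue for free'' intuition does not transfer. The spacer/padding fallback is a sketch at best; you give no accounting of how spacers interact with the cycles of $\mathcal{A}(G)$ or why the additive constant $c$ would be independent of $C_i^{\max}$. More to the point, the whole detour is unnecessary: you do not need the ordered and unordered distances to coincide, only that computing the ordered distance still forces the solver to determine $C_i^{\max}$---which is exactly what the paper's one-line observation delivers.
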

\begin{proof}
Constraining the markers order in a dedoubled genome is a constraint on the path of $\DA(G)$. Thus, the choice of pairwise disjoint cycles remains.\qed
\end{proof}
\begin{corollary}
The $k$-ordered-tandem halving problem is NP-hard.
\end{corollary}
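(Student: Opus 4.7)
My plan is to follow the same template as the preceding corollaries in this section and exhibit ordered genome dedoubling as a special case of $k$-ordered-tandem halving. Concretely, given an instance of ordered genome dedoubling on a totally duplicated genome $G$ with $n$ distinct markers and a prescribed marker order $(x_1, x_2, \ldots, x_n)$, I would build an instance of $k$-ordered-tandem halving on the same $G$ by setting $k = n$ and choosing the partition $P = \{\{x_1\}, \{x_2\}, \ldots, \{x_n\}\}$ consisting of singletons in the prescribed order. This transformation is clearly computable in linear time.

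The key step is to verify that the set of feasible target genomes for the two instances coincides. Under the singleton partition, each tandem $P_i$ contains a single marker $x_i$ and its paralog $\snd{x_i}$, so a $k$-tandem duplicated genome satisfying the ordered constraint reduces to exactly a unilinear dedoubled genome in which the markers appear in the order $x_1, x_2, \ldots, x_n$, matching the definition used for ordered genome dedoubling. Conversely, any such dedoubled genome is a valid $k$-ordered-tandem duplicated genome for the singleton partition. Since the DCJ distance is the same objective in both problems, the minimal distances agree, and any polynomial algorithm for $k$-ordered-tandem halving would solve ordered genome dedoubling, which was just proved NP-hard.

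I do not anticipate any real obstacle: the construction is immediate and parallels the reductions already used for the $k$-tandem halving, $k$-fixed-tandem halving and ordered genome dedoubling theorems. The only routine care required is to check that the singleton-partition tandems really do force dedoubled-style adjacencies $(x~\snd{x})$ or $(\snd{x}~x)$ between paralogs (and thus cannot be confused with longer tandems), which is a direct consequence of the $k$-tandem duplicated genome definition given at the start of the section.
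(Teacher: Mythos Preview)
Your proposal is correct and matches the paper's approach: the corollary is stated immediately after the theorem that ordered genome dedoubling is NP-hard, with the paper having already noted that the dedoubling variant is precisely the case where $P$ is a set of singleton sets. Your reduction taking $k=n$ with singleton partition in the prescribed order is exactly this special-case argument.
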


\subsubsection{Signed k-tandem halving} %{Fixed tandem orientation}

I will now enforce a constraint which makes genome dedoubling polynomial, and see if it can lead to a polynomial k-tandem halving problem.

\begin{definition}
The \emph{signed dedoubling problem} is a variant of the genome dedoubling problem where the sign of each doublet (ie. \afs{x}{x} or \aff{-x}{\msnd{x}}) is fixed. 
\end{definition}

\begin{lemma}
The signed dedoubling problem is polynomial.
\end{lemma}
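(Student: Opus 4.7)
The plan is to lift the framework from Section \ref{sec:DCJ-dedoubling} and pinpoint where the NP-hardness evaporates once the signs are fixed. In the unsigned dedoubling problem, the hard combinatorial step is the selection of a maximum set of pairwise independent non-duplicated cycles in the dedoubled adjacency graph $\mathcal{A}(G)$, which is essentially 2-frequency set packing: for each marker $x$, $\mathcal{A}(G)$ contains two edges, one corresponding to the doublet $(\fst{x}~\snd{x})$ and one to $(\snd{x}~\fst{x})$, and one is free to choose either. Fixing the sign of every doublet removes this binary choice, so I would consider the spanning subgraph $\mathcal{A}'(G)$ of $\mathcal{A}(G)$ whose edge set consists of exactly the $n$ ``selected'' edges dictated by the input signs.

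The first step is to observe that $\mathcal{A}'(G)$ inherits from $\mathcal{A}(G)$ the property of having maximum degree 2, hence its connected components are paths and cycles only. Since each marker contributes exactly one edge to $\mathcal{A}'(G)$, every cycle of $\mathcal{A}'(G)$ is automatically non-duplicated, and since components of a degree-2 graph are pairwise vertex-disjoint, the set of all cycles of $\mathcal{A}'(G)$ is automatically a set of pairwise independent non-duplicated cycles in $\mathcal{A}(G)$. Consequently, the quantity $S_i$ that Algorithm \ref{algo:general-dcj} needs to compute is unique and can be obtained in linear time by ordinary connected-component enumeration, avoiding the hard set-packing step altogether.

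Once $\mathcal{A}'(G)$ is in hand, I would reuse Algorithm \ref{algo:general-dcj} verbatim, feeding it $S_i \, =\, $ the set of cycles of $\mathcal{A}'(G)$. The lower bound $d \geq n - c$, where $c$ denotes the number of cycles of $\mathcal{A}'(G)$, is inherited directly from Property \ref{prop:independent}, since every cycle of $\mathcal{A}'(G)$ is in particular a non-duplicated independent cycle of $\mathcal{A}(G)$. The matching upper bound follows because each DCJ picked by the algorithm is forced to act on a selected edge and therefore produces precisely the doublet orientation required by the input signs; no ``wrong-sign'' doublet is ever created. This yields $d_{\text{signed}}(G) = n - c$ in linear time.

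The main point that requires care — and which I expect to be essentially the only verification step — is showing that restricting attention to $\mathcal{A}'(G)$ does not cost optimality, i.e.\ that one cannot do strictly better by routing through an intermediate configuration that momentarily forms a wrong-sign doublet. This should follow from the fact that any cycle of $\mathcal{A}(G)$ using a non-selected edge for some marker $x$ can only contribute a saving by creating the wrong doublet for $x$, which must then be undone by an additional DCJ, cancelling the gain. Beyond this sanity check, the proof is a direct specialisation of the machinery already developed for the unsigned case.
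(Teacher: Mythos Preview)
Your proposal is correct and follows the same idea as the paper: once the sign of each doublet is fixed, the binary choice between the two edges for each marker disappears, so the (formerly NP-hard) selection of a maximum independent cycle set becomes trivial. The paper's proof is a two-sentence version of exactly this observation (``the sign constraint enforces a particular edge (and thus a particular cycle) to be picked''), whereas you make it concrete via the subgraph $\mathcal{A}'(G)$ and even flag the one subtlety---that detouring through a wrong-sign doublet cannot help---which the paper leaves implicit.
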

\begin{proof}
There is no more possible choice of pairwise disjoint cycles. Indeed, the sign constraint enforces a particular edge (and thus a particular cycle) to be picked.\qed
\end{proof}

I will now conduct a deeper analysis of the signed k-tandem halving problem.

\paragraph*{Genome defragmentation}

 Similarly to the disrupted 1-tandem-halving problem, marker subsets have to be grouped during an optimal scenario. The main difference is that there are several groups to be reconstructed, disrupting each other. Thus, \emph{defragmentation} seems to be a more appropriate term.

\begin{definition}
A \emph{fragment} is an interval of markers from a same group, surrounded by markers from others groups \emph{or telomeres}.
\end{definition}

\begin{definition}
A \emph{defragmentation operation} is a DCJ which reduces the number of fragments in $G$.
\end{definition}

\begin{lemma}
\label{lem:defrag}
Computing the minimum number of defragmentation operations is NP-hard.
\end{lemma}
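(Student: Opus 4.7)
The plan is to prove NP-hardness by polynomial reduction from a well-known hard combinatorial problem, in the same spirit as the reduction used for DCJ genome dedoubling in Lemma \ref{lem:complecity-dcj}. I would start with a dual-layered analysis along the lines of Section \ref{sec:dl}: the ``number of expected operations'' is controlled by the number of bad adjacencies in $G$ (adjacencies whose two markers belong to distinct groups), since each defragmentation DCJ reduces the fragment count by at most $2$. The sum of bonuses, corresponding to those DCJs that simultaneously merge fragments on both breakpoints, is exactly what drives the distance away from this bound, and this is where the hardness should hide.

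Next, I would introduce an auxiliary graph whose vertices are the bad adjacencies of $G$ and whose edges connect two bad adjacencies that can be simultaneously resolved by a single DCJ (i.e.\ there is a choice of reconnection at the two breakpoints that produces two good adjacencies at once). Computing the minimum defragmentation distance then reduces to finding a maximum set of such pairwise compatible simultaneous resolutions — a packing/matching problem on the auxiliary graph. In particular, the distance formula should take the shape (number of bad adjacencies) $-$ (maximum number of double resolutions), mirroring the structure $n - C_i$ obtained for genome dedoubling.

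The reduction itself would then encode instances of a known NP-hard problem — most naturally $2$-frequency Maximum Set Packing, as in Lemma \ref{lem:complecity-dcj}, or alternatively $3$-Dimensional Matching — as a totally duplicated genome with a suitable group partition. Each element of a set becomes a specific adjacency in a constructed gadget, and each set becomes a local pattern of bad adjacencies whose only joint resolutions correspond to selecting that set in the packing. I would choose enough distinct ``decoy'' groups and suitable spacer markers so that any DCJ resolving two bad adjacencies simultaneously necessarily corresponds to picking a set in the packing instance, and vice versa.

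The main obstacle, as is usual for rearrangement NP-hardness proofs, is the tightness of the correspondence: it is easy to show that any packing yields a defragmentation scenario of the predicted length, but one must rule out ``cheating'' DCJs that exploit the multichromosomal dynamics (excisions and reintegrations) to produce extra bonuses outside of the intended gadget structure. Carefully designing the spacer markers and the telomeric context so that no cross-gadget simultaneous resolution is possible will be the delicate part, and should be done by arguing, as in Section \ref{sec:DCJ-dedoubling}, that any optimal scenario can be rearranged into a canonical form that respects the gadget boundaries.
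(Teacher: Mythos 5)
Your proposal is a plan for a from-scratch gadget reduction, and the two steps it defers --- establishing the distance formula ``(number of bad adjacencies) $-$ (maximum number of double resolutions)'' and constructing gadgets that rule out cross-gadget cheating --- are exactly the hard parts, so as written this is not yet a proof. The formula itself is suspect: the target configuration still contains bad adjacencies (one between each pair of consecutive groups, plus telomeric ones), so not every bad adjacency must be destroyed; and ``pairwise compatible simultaneous resolutions'' is not a static property of an auxiliary graph, since performing one fragment-merging DCJ changes the genome and hence which other pairs of boundaries can subsequently be resolved together. Pinning all of this down would essentially force you to redo the entire dedoubling analysis of Section \ref{sec:dedoubling} inside your gadgets.

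The paper avoids all of this with a two-line containment argument that you missed: the \emph{loose genome dedoubling} problem, proved NP-hard immediately before this lemma, \emph{is} an instance of the defragmentation problem. Take each pair of paralogous markers $\{x,\snd{x}\}$ as a group of size two; then ``every group lies in a single fragment'' is precisely ``the genome is loosely dedoubled,'' so computing the minimum number of defragmentation operations on this subclass of instances is the loose dedoubling problem, and NP-hardness is inherited with no new gadgets, no new distance formula, and no canonical-form argument. If you want to salvage your approach, the honest fix is to replace your planned reduction from set packing by this specialization: observe that your ``groups'' can be chosen to be the paralog pairs, at which point your auxiliary graph of simultaneously resolvable boundary pairs collapses to the cycle structure of $\DA(G)$ already analyzed in Lemma \ref{lem:complecity-dcj}.
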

\begin{proof}
Any loose dedoubling problem instance can be seen as a defragmentation problem under the constraint that each group is split in no more than 2 fragments (one marker stands for a fragment in a genome).\qed% Cases with more than 2 fragments are obviously NP-hard as well (they correspond to variants on the dedoubling problem where markers have more than one paralog).\qed

\end{proof}

\begin{theorem}
Signed k-tandem halving problem is NP-hard.
\end{theorem}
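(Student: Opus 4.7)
The plan is to establish NP-hardness by a reduction from the defragmentation problem, whose NP-hardness has just been established in Lemma \ref{lem:defrag} (itself derived from loose dedoubling). The guiding intuition is that any signed k-tandem halving scenario must accomplish two essentially orthogonal tasks: first, \emph{gather} the markers of each $P_i$ together so that each group occupies a contiguous region of the chromosome (a defragmentation), and second, \emph{arrange} each group internally into a valid signed tandem. The sign constraint, inherited from signed dedoubling, suffices to make the second task polynomial (there is no choice of pairwise disjoint cycles to be made), but it gives no leverage on the first task, which remains the NP-hard obstacle.

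Concretely, I would start from an arbitrary instance of defragmentation: a totally duplicated unilinear genome $G$ together with a partition $P = \{P_1, \dots, P_k\}$ of its markers. I would then construct an instance of signed k-tandem halving on the same genome, using the same partition $P$, and fix the sign of every doublet in a uniform way (for instance, always \afs{x}{x}). The claim to prove is that the minimum number of DCJ operations in the signed k-tandem halving scenario decomposes as $d_{\text{defrag}}(G,P) + c(G,P)$, where $d_{\text{defrag}}$ is the minimum number of defragmentation operations and $c(G,P)$ is a quantity computable in polynomial time from the natural graph of $G$ restricted to each tandem (because the signed constraint has already fixed all the cycle choices). An efficient algorithm for signed k-tandem halving would then yield an efficient algorithm for defragmentation, contradicting Lemma \ref{lem:defrag}.

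The main obstacle will be proving the additive decomposition of the distance — specifically, ruling out ``shortcut'' scenarios that intermix gathering operations with intra-tandem sorting in a way that beats $d_{\text{defrag}}(G,P) + c(G,P)$. I would address this exactly as in Section \ref{sec:disrupted}: classify each DCJ in an optimal scenario as either defragmenting (it strictly decreases the number of fragments) or intra-tandem (it acts inside a single already-formed tandem), show that any operation which does neither is non-optimal because of the penalties analogous to neutral and bad excisions, and then argue that the signed constraint forces the intra-tandem contribution to equal $c(G,P)$ with no slack. Once this decomposition is secured, the reduction is immediate and the NP-hardness of signed k-tandem halving follows.
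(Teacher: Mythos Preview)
Your reduction from defragmentation is the right direction and matches the paper's. The gap is in your plan to prove the additive decomposition $d = d_{\text{defrag}}(G,P) + c(G,P)$ for arbitrary instances. You propose to do this ``exactly as in Section~\ref{sec:disrupted}'', but that section does not achieve such a decomposition: it yields only a heuristic with a lower bound (Corollary~\ref{cor:lb}) and an upper bound that do not match, precisely because gathering operations and intra-tandem sorting can interact --- a good excision is simultaneously a defragmenting move and one that creates an even cycle. Your claim that ``any operation which does neither is non-optimal'' is not what Section~\ref{sec:disrupted} establishes; the disrupted 1-tandem halving problem is left open there for exactly this reason. Importing those methods will give you bounds, not the equality you need for a reduction.

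The paper sidesteps this obstacle by not proving the decomposition at all. Instead it restricts to the subclass of instances for which an optimal scenario consists \emph{only} of a defragmentation phase --- that is, instances where once each group is contiguous, it is already a valid signed tandem, so $c(G,P)=0$ by construction. On this subclass the signed $k$-tandem halving distance equals the defragmentation distance with nothing to decompose, and the hardness of defragmentation (Lemma~\ref{lem:defrag}) transfers immediately. You should think about how to build such instances from an arbitrary defragmentation (equivalently, loose dedoubling) input, rather than trying to control the interaction term on general instances.
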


\begin{proof}
This is proven by reduction, from the problem of computing the minimum number of defragmentation operations, to a subclass of signed k-tandem halving.
Consider the class of genomes for which there exists an optimal scenario consisting only of a defragmentation phase. Theorem then follows from lemma \ref{lem:defrag}.\qed
\end{proof}

\subsection{Closing words and credits}

~

\textbf{On disrupted tandem halving}

I recall I originally developed the single tandem halving problem as a starting point for disrupted tandem halving.

Jean-Stéphane V. drew some of the figures for the article.

\textbf{On multiple tandem halving}

I developed, studied and proved all of these variants alone.

Again, Aida O. did not participate as she was in Canada while I worked on this paper.

As usual I took care of proving every single result from this paper. 

\section{Conclusion}

To conclude this section, here is a table containing all of my results.

\begin{center} 
\resizebox{16cm}{!}{
    \begin{tabular}{|c||c|c|} 
       \hline
 Problem & Input genome & Goal configuration \\
        \hline
Genome Dedoubling & Totally duplicated & Dedoubled \\
        1-tandem Halving     &  Totally duplicated  &  1-tandem  \\
k-tandem Halving & Totally duplicated & k tandems \\ 
disrupted 1-tandem Halving & Duplicated & Tandem\\

\hline
    \end{tabular}
}\end{center}

\begin{center} 

\resizebox{16cm}{!}{
    \begin{tabular}{|c|c||c|} 
       \hline
 Problem & Operation model &  Complexity \\
        \hline
Genome Dedoubling & DCJ or Reversal & NP-hard (FPT in the number of cycles) \\
        1-tandem Halving     &  DCJ or Block Interchange   &  $O(n)$ (scenario in $O(n^2)$)\\
k-tandem Halving & DCJ & NP-hard \\ 
disrupted 1-tandem Halving & DCJ &  open \\

\hline
    \end{tabular}
}\end{center}

\part*{General conclusion}

I think my PhD thesis could be summarized by the following sentence:

~

\textit{``Biological reality is NP-hard, polynomial problems come with a catch: they don't make much sense".}

~

The catch being no duplicated content, or an exponential number of different genomes that are all optimal, or other aspects causing awkward moments in bioinformatics conferences when biologists ask questions.

From a computer science and mathematical point of view, however, it's much more interesting.

We've seen that data structures are a powerful tool to shift focus.

On this matter I'll add that any algorithm can be seen as a data structure itself, and this allows a certain freedom in splitting complexity : for example in some cases an  $O(n^4)$ time algorithm could be implemented as a loop of $O(n^2)$ steps each using a structure built in $O(n^2)$, or a loop of $O(n^3)$ steps using a linear structure instead... this kind of reasoning is what allowed me to find the ``magic" property (``the smallest overlapping interval is an optimal operation") for single-tandem halving by block interchange.

I'll conclude by saying that this work on the whole is meant to be seen as a starting point in the study of alternate explanative models for the presence of replicated markers. As said models are taken directly from biological studies, they are obviously not new. However, it is the first time they are studied in the context of classical rearrangement problems.

While most people would notice the diversity in mathematical proofs throughout my papers, I find it is even more stimulating to try to grasp the subtle underlying similarities they must share, since down the line they really are the expression of different analyses for similar problems.

Obvious further perspectives would be the study of other duplication hypotheses, or combination of previously studied ones, even though I feel the mastery of previously studied models and their behavior towards multiple operation models should take priority.

For example, the attempt at solving genome dedoubling by reversal with unoriented genomes (cf. section \ref{unoriented}) was never published and the reason is that the answer we seemed to find (computing an optimal reversal scenario through dynamic programming, in polynomial time) was not a satisfying one. My real goal with this work was to be able to directly compute the orientation cost without the need for building an optimal scenario, just as it was done for classical sorting by reversal in \cite{BMS-04}. I find this kind of result allows a much deeper understanding of the studied problem.

In the same line of thought, I'd consider that even the classical genome halving by reversal, as solved in \cite{Mabrouk98}, doesn't provide a satisfying answer and would deserve further studies.

More generally, I think that while operational research provides very interesting concepts to tackle hard problems, it should be reserved, as intended, to hard problems, be it NP-hard rearrangement problems or software application meant to process very large genomes. In the context of theoretical papers, finding a polynomial answer through such techniques should be seen as an encouragement to find a more elegant method that would provide a better understanding of the problem.

Finally, to give a few words about where I see this field going in a couple years, I would say that unless major changes occur, I do not foresee a bright future in rearrangements for the LIFL, given the type of people working in bioinformatics there.

Even though I wish them success in the name of scientific progress, I don't think it can be done by blatantly trashing ethics, claiming ownership of other people's work by nothing more than writing their names at the highest possible place on the paper, somewhat reminiscent of the way a dog would leave its mark on a fire hydrant.

\bgroup
\small

\egroup

\clearpage

\pagestyle{empty}

\pdfbookmark[0]{Abstract}{Abstract}

\subsection*{Résumé}

La compréhension de la dynamique des réarrangements génomiques est une problématique importante en phylogénie.
La phylogénie est l'étude de l'évolution des espèces. Un but majeur est d'établir les relations d'évolution au sein d'un groupe d'espèces, pour déterminer la topologie de l'arbre d'évolution formé par ce groupe et des ancêtres communs à certains sous-ensembles.

Pour ce faire, il est naturellement très utile de disposer d'un moyen d'évaluer les distances évolutionnaires relatives entre des espèces, ou encore d'être capable d'inférer à un groupe d'espèces le génome d'un ancêtre commun à celles-ci.

Ce travail de thèse, dans la lignée d'autres travaux, consiste à élaborer de tels moyens, ici dans des cas particuliers où les génomes possèdent des gènes en multiples copies, ce qui complique les choses.

Plusieurs hypothèse explicatives de la présence de duplications ont été considérées, des formules de distance ainsi que des algorithmes de calcul de scénarios ont été élaborés, accompagnés de preuves de complexité.

\medskip
\noindent {\bf Mots-clés:} bioinformatique, génomique comparative, réarrangements, marqueurs dupliqués, genome halving, duplication en tandem, breakpoints, inversion, DCJ, échange de blocs

\vfill
\hrule
\vfill

\subsection*{Abstract}

Understanding the dynamics of genome rearrangements is a major issue of phylogenetics.
Phylogenetics is the study of species evolution. A major goal of the field is to establish evolutionary relationships within groups of species, in order to infer the topology of an evolutionary tree formed by this group and common ancestors to some of these species.

In this context, having means to evaluate relative evolutionary distances between species, or to infer common ancestor genomes to a group of species would be of great help.

This work, in the vein of other studies from the past, aims at designing such means, here in the particular case where genomes present multiple occurrencies of genes, which makes things more complex.

Several hypotheses accounting for the presence of duplications were considered. Distances formulae as well as scenario computing algorithms were established, along with their complexity proofs.

\medskip
\noindent {\bf Keywords:} bioinformatics, comparative genomics, rearrangement, replicated markers, genome halving, tandem duplication, breakpoints, reversal, DCJ, block interchange

\end{document}